\definecolor{blue}{rgb}{0.1,0.2,0.5}
\definecolor{brown}{rgb}{0.6,0.6,0.2}
\newcommand*\patchAmsMathEnvironmentForLineno[1]{%
  \expandafter\let\csname old#1\expandafter\endcsname\csname #1\endcsname
  \expandafter\let\csname oldend#1\expandafter\endcsname\csname end#1\endcsname
  \renewenvironment{#1}%
     {\linenomath\csname old#1\endcsname}%
     {\csname oldend#1\endcsname\endlinenomath}}%
\newcommand*\patchBothAmsMathEnvironmentsForLineno[1]{%
  \patchAmsMathEnvironmentForLineno{#1}%
  \patchAmsMathEnvironmentForLineno{#1*}}%
\theoremstyle{plain}
\newtheorem{theorem}{Theorem}
\newcommand{\newtheoremwithcrefformat}[2]{%
  \newtheorem{#1}[theorem]{#2}%
  \crefformat{#1}{##2\MakeUppercase#1~##1##3}%
  \Crefformat{#1}{##2\MakeUppercase#1~##1##3}%
}
\newcommand{\newseptheoremwithcrefformat}[2]{%
  \newtheorem{#1}{#2}%
  \crefformat{#1}{##2\MakeUppercase#1~##1##3}%
  \Crefformat{#1}{##2\MakeUppercase#1~##1##3}%
}
\newcommand{\newclaimwithcrefformat}[2]{%
  \newtheorem{#1}{#2}[theorem]%
  \crefformat{#1}{##2\MakeUppercase#1~##1##3}%
  \Crefformat{#1}{##2\MakeUppercase#1~##1##3}%
}
\newenvironment{manualclaim}[1]{%
  
  \claim
}{\endclaim}
\theoremstyle{definition}
\newtheorem{definition}{Definition}
\theoremstyle{nonumberplain}
\newcommand{\cB}{\mathcal{B}}
\newcommand{\cF}{\mathcal{F}}
\newcommand{\cH}{\mathcal{H}}
\newcommand{\cP}{\mathcal{P}}
\newcommand{\cT}{\mathcal{T}}
\newcommand{\cW}{\mathcal{W}}
\newcommand{\rneq}{\mathrm{NEQ}}
\newcommand{\ork}[1]{\mathrm{OR}_{#1}}
\DeclareMathOperator{\dist}{dist}
\newcommand{\N}{\mathbb{N}}
\newcommand{\NP}{\textsf{NP}}
\newcommand{\R}{\mathbb{R}}
\renewcommand{\phi}{\varphi}
\renewcommand{\epsilon}{\varepsilon}
\newcommand{\Oh}{\mathcal{O}}
\newcommand{\prop}{handle\xspace}
\renewcommand{\leq}{\leqslant}
\renewcommand{\geq}{\geqslant}
\renewcommand{\setminus}{-}
\newcommand{\sat}{\textsc{Sat}\xspace}
\newcommand{\homo}[1]{\textsc{Hom}(\ensuremath{#1})\xspace}
\newcommand{\lhomo}[1]{\textsc{LHom}(\ensuremath{#1})\xspace}
\newcommand{\cBt}{\cB_t}
\newcommand{\tw}[1]{{\operatorname{tw}(#1)}}
\newenvironment{claimproof}{\noindent {\emph{Proof of Claim.}}}{\hfill$\blacksquare$\medskip}
\newcommand{\prz}[2][]{\todo[color=green!70,#1]{{\textbf{prz:} #2}}}
\newcommand{\ko}[2][]{\todo[color=orange!60,#1]{{\textbf{ko:} #2}}}
\declaretheorem[sibling=theorem]{lemma}
\tikzset{every loop/.style={}}
\begin{document}
\title{Complexity of the list homomorphism problem in hereditary graph classes}

\author{Karolina Okrasa%
\thanks{
University of Warsaw, Institute of Informatics and Warsaw University of Technology, Faculty of Mathematics and Information Science
\newline
E-mail: \texttt{k.okrasa@mini.pw.edu.pl}.\newline
Supported by the European Research Council (ERC) under the European
Union’s Horizon 2020 research and innovation programme Grant Agreement no. 714704.}
\and Paweł Rzążewski%
\thanks{
 Faculty of Mathematics and Information Science, Warsaw University of Technology and Institute of
Informatics, University of Warsaw
\newline
 E-mail: \texttt{p.rzazewski@mini.pw.edu.pl}.\newline
Supported by Polish National Science Centre grant no. 2018/31/D/ST6/00062.
}}

\begin{titlepage}
\def\thepage{}
\thispagestyle{empty}
\maketitle

\begin{abstract}
A homomorphism from a graph $G$ to a graph $H$ is an edge-preserving mapping from $V(G)$ to $V(H)$.
For a fixed graph $H$, in the list homomorphism problem, denoted by \textsc{LHom}($H$), we are given a graph $G$, whose every vertex $v$ is equipped with a list $L(v) \subseteq V(H)$.
We ask if there exists a homomorphism $f$ from $G$ to $H$, in which $f(v) \in L(v)$ for every $v \in V(G)$.
Feder, Hell, and Huang [JGT~2003] proved that \textsc{LHom}($H$) is polynomial time-solvable if $H$ is a so-called bi-arc-graph, and NP-complete otherwise.

We are interested in the complexity of the  \textsc{LHom}($H$) problem in $F$-free graphs, i.e., graphs excluding a copy of some fixed graph $F$
as an induced subgraph. It is known that if $F$ is connected and is not a path nor a subdivided claw,
then for every non-bi-arc graph the \textsc{LHom}($H$) problem is NP-complete and cannot be solved in subexponential time, unless the ETH fails.
We consider the remaining cases for connected graphs $F$.

If $F$ is a path, we exhibit a full dichotomy. We define a class called predacious graphs and show that if $H$ is not predacious,
then for every fixed $t$ the \textsc{LHom}($H$) problem can be solved in quasi-polynomial time in $P_t$-free graphs.
On the other hand, if $H$ is predacious, then there exists $t$, such that the existence of a subexponential-time algorithm for \textsc{LHom}($H$) in $P_t$-free graphs would violate the ETH.

If $F$ is a subdivided claw, we show a full dichotomy in two important cases: for $H$ being irreflexive (i.e., with no loops), and for $H$ being reflexive (i.e., where every vertex has a loop). Unless the ETH fails, for irreflexive $H$ the \textsc{LHom}($H$) problem can be solved in subexponential time in graphs excluding a fixed subdivided claw if and only if $H$ is non-predacious and triangle-free.
On the other hand, if $H$ is reflexive, then \textsc{LHom}($H$) cannot be solved in subexponential time whenever $H$ is not a bi-arc graph.

\end{abstract}
\end{titlepage}

\section{Introduction}
Many natural graph-theoretic problems, including \textsc{Max Independent Set}, \textsc{$k$-Coloring}, \textsc{Max Cut}, \textsc{Min Odd Cycle Transversal}, etc., can be defined  in a uniform way as the question of the existence of certain graph homomorphisms.
For two graphs $G$ and $H$, a function $f : V(G) \to V(H)$ is a \emph{homomorphism from $G$ to $H$} if it maps edges of $G$ to edges of $H$, i.e., for every $uv \in E(G)$ it holds that $f(u)f(v) \in E(H)$. If $f$ is a homomorphism from $G$ to $H$ we denote it by $f : G \to H$.
As an important special case, we observe that if $H$ is a $k$-vertex clique, then homomorphisms to $H$ are precisely $k$-colorings of $G$. This is why homomorphisms to $H$ are often called \emph{$H$-colorings}.
We will refer to the graph $H$ as the \emph{target} and to the vertices of $H$ as \emph{colors}.
For a fixed graph $H$, by \homo{H} we denote the computational problem of deciding whether an instance graph $G$ admits a homomorphism to $H$.

The complexity dichotomy for the \homo{H} problem was shown by Hell and Ne\v{s}et\v{r}il~\cite{DBLP:journals/jct/HellN90}: If $H$ is bipartite or has a vertex with a loop, then the problem is polynomial-time-solvable, and otherwise it is \NP-complete.
The study of variants of graph homomorphisms has attracted a significant attention from the theoretical computer science community~\cite{DBLP:journals/tcs/Bulatov05,DBLP:journals/ejc/KunS16,DBLP:conf/lics/DalmauEHLR15,DBLP:conf/soda/EgriHLR14,DBLP:journals/dam/FederHSS11,DBLP:journals/siamdm/FederHKM03}.
Arguably, the most natural generalization of the problem is the \emph{list homomorphism problem}.
For a fixed graph $H$, an instance of the \lhomo{H} problem is a pair $(G,L)$, where $G$ is a graph and $L$ is a function that to every vertex $v \in V(G)$ assigns its \emph{$H$-list} (or \emph{list}, for short) $L(v) \subseteq V(H)$.
We ask if there exists a homomorphism $f : G \to H$, such that for every $v \in V(G)$ it holds that $f(v) \in L(v)$.
We write $f : (G,L) \to H$ to denote that $f$ is a list homomorphism from $G$ to $H$ which respects the lists $L$, and we write $(G,L) \to H$ to indicate that some such $f$ exists.

The complexity classification for \lhomo{H} was proven in three steps.
First, Feder and Hell~\cite{FEDER1998236} considered reflexive target graphs $H$, i.e., where every vertex has a loop.
In this case \lhomo{H} is polynomial-time solvable if $H$ is an interval graph and \NP-complete otherwise.
Then, Feder, Hell, and Huang~\cite{DBLP:journals/combinatorica/FederHH99} showed the dichotomy in the case that $H$ is irreflexive, i.e., has no loops. This problem appears to be polynomial-time solvable if $H$ is bipartite and its complement is a circular-arc graph, and \NP-complete otherwise.
Finally, Feder, Hell, and Huang~\cite{DBLP:journals/jgt/FederHH03} defined a new class of graphs with possible loops,
called \emph{bi-arc-graphs}, and showed that if $H$ is a bi-arc graph, then \lhomo{H} can be solved in polynomial time, and otherwise the problem is \NP-complete.
In case of reflexive graphs bi-arc graphs coincide with interval graphs, and in case of irreflexive graphs bi-arc graphs coincide with bipartite graphs whose complement is a circular-arc graph.
Let us point out that all mentioned hardness reductions for \lhomo{H} also exclude the existence of a subexponential-time algorithm, unless the ETH fails.

An active line of research is to study the complexity of computational problems, when the instance is assumed to belong some specific graph class. We usually assume that the considered classes are \emph{hereditary}, i.e., closed under vertex deletion. Each such a hereditary class can be characterized by a (possibly infinite) set of forbidden induced subgraphs.
For a family $\cF$ of graphs, a graph is \emph{$\cF$-free} if it does not contain any member of $\cF$ as an induced subgraph.
Most attention is put into considering classes with only one forbidden subgraph, i.e., for $\cF=\{F\}$.
In this case we write $F$-free, instead of $\{F\}$-free. We will always assume that $F$ is connected.

Let us define two important families of graphs. For an integer $t \geq 1$, by $P_t$ we denote the path with $t$ vertices.
For $a,b,c \geq 0$, by $S_{a,b,c}$ we denote the graph obtained by taking three disjoint paths $P_{a+1}$, $P_{b+1}$, and $P_{c+1}$ and merging one of the endvertices of each path into one vertex. Note that if at least one of $a,b,c$ is equal to 0, then $S_{a,b,c}$ is an induced path.
The members of $\{S_{a,b,c} ~|~ a,b,c \geq 0\}$ are called \emph{subdivided claws}. 

Let us briefly discuss the complexity of \textsc{$k$-Coloring} in $F$-free graphs.
First, we observe that if $F$ is not a path, then for every fixed $k \geq 3$,
the \textsc{$k$-Coloring} remains \NP-complete in $F$-free graphs.
Indeed, Emden-Weinert \emph{et al.} \cite{DBLP:journals/cpc/Emden-WeinertHK98} proved that the problem is hard for graphs with no cycles shorter than $p$, for any constant $p$.
Setting $p = |V(F)|+1$ yields the hardness for $F$-free graphs whenever $F$ contains a cycle.
On the other hand, \textsc{$k$-Coloring} is \NP-complete in line graphs~\cite{DBLP:journals/siamcomp/Holyer81a,DBLP:journals/jal/LevenG83}, which are in particular $S_{1,1,1}$-free.
This implies the hardness for $F$-free graphs if $F$ is a tree with maximum degree at least 3.
Combining these, we conclude that the only connected graphs $F$, for which we might hope for a polynomial-time algorithm for \textsc{$k$-Coloring} in $F$-free graphs, are paths.

The complexity of \textsc{$k$-Coloring} in $P_t$-free graphs has been an active area of research in the last two decades, see the survey by Golovach \emph{et al.}~\cite{DBLP:journals/corr/Golovach0PS14}.
The current state of art is as follows.
We know that for each fixed $k$, the problem is polynomial-time-solvable in $P_5$-free graphs~\cite{DBLP:journals/algorithmica/HoangKLSS10}.
On the other hand, for every $k \geq 5$, the problem is \NP-complete in $P_6$-free graphs~\cite{DBLP:journals/ejc/Huang16}.
The complexity of \textsc{4-Coloring} in $P_t$-free graphs is also fully understood: it is polynomial-time solvable for $t \leq 6$~\cite{DBLP:conf/soda/SpirklCZ19} and \NP-complete for $t \geq 7$~\cite{DBLP:journals/ejc/Huang16}.
Finally, we know that \textsc{3-Coloring} admits a polynomial time algorithm in $P_7$-free graphs~\cite{DBLP:journals/combinatorica/BonomoCMSSZ18}.
Interestingly, we know no proof of \NP-hardness of \textsc{3-Coloring} in $P_t$-free graphs, for any value of $t$.
The problem is believed to be solvable in polynomial time for every $t$, and obtaining such an algorithm is one of the main open questions in the area.

Let us point out that all mentioned hardness proofs rule out the existence of subexponential-time algorithms, unless the ETH fails.
Furthermore, all algorithmic results hold even for \textsc{List $k$-Coloring}, except for the case $(k,t)=(4,6)$, which is \NP-complete in the list setting~\cite{DBLP:journals/iandc/GolovachPS14}.

Even though our current toolbox seems to be insufficient to design a polynomial-time algorithm for \textsc{3-Coloring} $P_t$-free graphs for every fixed $t$, we can still solve the problem significantly faster than for general graphs.
Groenland \emph{et al.}~\cite{groenland2019h} showed that for every $t$, the problem can be solved in time $2^{\Oh(\sqrt{n \log n})}$ in $n$-vertex $P_t$-free graphs. Very recently, Pilipczuk, Pilipczuk, and Rz\k{a}\.zewski~\cite{pilipczuks-qpoly} observed that the breakthrough result by Gartland and Lokshtanov~\cite{DBLP:journals/corr/abs-2005-00690}, who showed a quasi-polynomial-time algorithm for the closely related \textsc{Max Independent Set} problem in $P_t$-free graphs, could be used to design an algorithm for \textsc{3-Coloring} with running time $n^{\Oh(\log^3 n)}$. They also presented an arguably simpler algorithm with running time $n^{\Oh(\log ^2n)}$.

The complexity of the \homo{H} and \lhomo{H} problems in $F$-free graphs received a lot less attention~\cite{DBLP:journals/dm/FederHH07,DBLP:journals/dam/KaminskiP19}.
Chudnovsky \emph{et al.}~\cite{DBLP:conf/esa/ChudnovskyHRSZ19} proved that if $F$ is not a subdivided claw, then for every relevant $k$ the \lhomo{C_k} problem remains \NP-complete and has no subexponential-time algorithm in $F$-free graphs.
This was later significantly generalized by Piecyk and Rz\k{a}\.zewski~\cite{piecyk}, as follows.
\begin{theorem}[Piecyk, Rzążewski~\cite{piecyk}] \label{thm:otherf}
Let $H$ be a fixed non-bi-arc graph.
If $F$ is connected and is not a subdivided claw,
then the $\lhomo{H}$ problem cannot be solved in time $2^{o(n)}$ in $n$-vertex $F$-free graphs, unless the ETH fails.
\end{theorem}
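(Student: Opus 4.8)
The plan is to exhibit, for every connected graph $F$ that is not a subdivided claw, a single hereditary class of $F$-free graphs on which $\lhomo{H}$ still has no $2^{o(n)}$-time algorithm under the ETH. The starting point is the fact recalled in the introduction: for a non-bi-arc target $H$, $\lhomo{H}$ admits no $2^{o(n)}$-time algorithm unless the ETH fails. Combining the underlying reductions with the Sparsification Lemma, and if needed an extra degree-reduction step tailored to $H$, one may assume that the hard instances $(G_0,L_0)$ have maximum degree at most $3$, hence only $O(|V(G_0)|)$ edges. I would then transform such an instance into an equivalent one living in the target class while increasing the number of vertices only by a constant factor, so that a subexponential algorithm on $F$-free graphs would refute the ETH.

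Next I would set up a trichotomy. Since $F$ is connected and is not a subdivided claw, either (i) $F$ contains a cycle, or (ii) $F$ is a tree with a vertex of degree at least $4$, or (iii) $F$ is a tree of maximum degree $3$ with at least two vertices of degree $3$ (a tree with at most one vertex of degree $\ge 3$, of degree exactly $3$, is a subdivided claw). Put $d := |V(F)|$ and let $\mathcal{C}_F$ be the class of subcubic graphs $G$ of girth greater than $d$ in which every two vertices of degree exactly $3$ are at distance greater than $d$. Every $G \in \mathcal{C}_F$ is $F$-free: an induced copy of $F$ in $G$ would be a connected subcubic graph on $d$ vertices, necessarily acyclic (a cycle in it would be a cycle of length $\le d$ in $G$) and with at most one vertex of degree $3$ (two vertices of degree $3$ in $F$ are at distance at most $d-1$ in $F$, hence also in $G$), so $F$ would be a path or a subdivided claw --- a contradiction. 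Thus this one class $\mathcal{C}_F$ disposes of all three cases at once.

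The reduction replaces every edge $uv$ of $G_0$ by a \emph{long-edge gadget}: a path of length larger than $d$ between $u$ and $v$ whose internal vertices get lists chosen so that the relation realised between its endpoints --- the set of pairs $(f(u),f(v))$ extending to an $L$-homomorphism of the gadget to $H$ --- is exactly the edge relation of $H$; the original vertices keep their lists $L_0$. Then $(G,L)\to H$ if and only if $(G_0,L_0)\to H$; moreover $G$ is subcubic (old degrees are unchanged), it has girth $> d$ (each cycle of $G_0$ is stretched by a factor larger than $d$, and if $G_0$ is a forest so is $G$), and its degree-$3$ vertices are exactly the old ones, now linked only through paths of length $> d$, hence pairwise at distance $> d$; so $G \in \mathcal{C}_F$. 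Since $G_0$ has $O(|V(G_0)|)$ edges and each gadget adds $O(d) = O(1)$ vertices, we get $|V(G)| = O(|V(G_0)|)$, the blow-up we need. (If one prefers not to preprocess $G_0$ to bounded degree, the same structure is obtained by also replacing each high-degree vertex by a \emph{distributor}: a long path carrying one colour, off which the incident long-edge gadgets branch at points spaced more than $d$ apart.)

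The heart of the argument, and the step I expect to be the main obstacle, is building the long-edge (and distributor) gadgets for an arbitrary non-bi-arc $H$. A plain subdivision with full lists realises the $\ell$-step walk-reachability relation of $H$, which in general is strictly larger than $E(H)$, while internal lists that are too small make the realised relation decompose into separate constraints on the two endpoints; forcing it to be exactly $E(H)$ along a long, essentially path-like gadget requires exploiting the combinatorial characterisation of bi-arc graphs --- asteroidal-triple-type obstructions in the bipartite graph associated with $H$ --- to locate inside $H$ a configuration along which a colour can be "transmitted", uniformly over the reflexive, irreflexive and mixed cases. With these gadgets in hand, the remaining checks --- $F$-freeness of $\mathcal{C}_F$, the girth and separation bounds, correctness of gadget composition, and the linear size bound --- are routine.
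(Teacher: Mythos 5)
Your trichotomy for $F$ (contains a cycle; tree with a vertex of degree at least $4$; tree of maximum degree $3$ with at least two branch vertices), the class $\mathcal{C}_F$ of subcubic graphs of girth more than $|V(F)|$ whose degree-$3$ vertices are pairwise at distance more than $|V(F)|$, and the verification that every member of $\mathcal{C}_F$ is $F$-free are all correct, and this is exactly the right way to collapse the three cases into one target class. But the step you yourself flag as ``the main obstacle'' is not merely hard; in the form you propose it is impossible. A path gadget with endpoints $u,v$ and internal vertices $w_1,\dots,w_{\ell-1}$ carrying lists $L(w_i)$ realises between $u$ and $v$ the composition of the consecutive list-constrained edge relations, and for some non-bi-arc targets this composition can never equal $E(H)$ once $\ell\ge 2$. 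Take $H=K_3$ and let $S_i(a)\subseteq L(w_i)$ be the set of colours $w_i$ can take given that $u$ takes colour $a$; then $(a,b)$ is realised iff $S_{\ell-1}(a)\not\subseteq\{b\}$ and $S_{\ell-1}(a)\ne\emptyset$. Forcing the realised relation to equal $E(K_3)$ forces $S_{\ell-1}(a)=\{a\}$ for every $a\in\{1,2,3\}$, hence $L(w_{\ell-1})=\{1,2,3\}$; but then $S_{\ell-1}(a)=\{c:\exists\,c'\in S_{\ell-2}(a),\ c'\ne c\}$ has size $0$, $2$ or $3$ according as $|S_{\ell-2}(a)|$ is $0$, $1$ or at least $2$, and is never a singleton. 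So no path gadget, of any length and with any internal lists, realises $E(K_3)$; since $K_3$ is non-bi-arc, your reduction has nothing to subdivide its edges with. The ``distributor'' is no easier, since transmitting equality over all of $V(H)$ along a long path is at least as constrained.

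Note that the theorem is quoted by this paper from~\cite{piecyk} without a proof, so there is no internal argument to compare against; but the paper's own hardness constructions (\cref{thm:pt-free-bip-hard-undecomp} and \cref{thm:sabc-free-three-loops-hardness}) show the right pattern, and it is quite different from yours. One reduces from $3$-\textsc{Sat} directly and builds gadgets tied to structural witnesses of non-bi-arc-ness --- $\rneq(S)$- and $\ork{3}$-gadgets and the distinguisher paths of \cref{thm:distinguisher} --- which realise small, carefully chosen relations on an incomparable set, rather than all of $E(H)$. These gadgets are themselves long paths or cycles, so bounded degree, high girth, and separation of branch vertices are designed in from the outset instead of being retrofitted by subdividing an arbitrary hard instance of $\lhomo{H}$. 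Replacing your black-box subdivision step with such a bespoke SAT reduction, while keeping your class $\mathcal{C}_F$ as the target, is what closes the gap.
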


There are several results about the complexity of \lhomo{H} in $P_t$-free graphs.
First, Chudnovsky \emph{et al.}~\cite{DBLP:conf/esa/ChudnovskyHRSZ19} showed that for $k \in \{5,7,9\} \cup [10; \infty)$, the \lhomo{C_k} problem can be solved in polynomial time for $P_9$-free graphs.
Very recently, Chudnovsky \emph{et al.}~\cite{DBLP:conf/esa/ChudnovskyKPRS20} studied some further generalization of the homomorphism problem in subclasses of $P_6$-free graphs.
Furthermore, the already mentioned $2^{\Oh(\sqrt{n \log n})}$-time algorithm by Groenland \emph{et al.}~\cite{groenland2019h} actually works  for \lhomo{H} for a large family of graphs $H$: the requirement is that $H$ does not contain two vertices with two common neighbors.
Even more generally, the algorithm can solve a \emph{weighted homomorphism problem}, where, in addition to lists, we allow vertex- and edge-weights.
Later, Okrasa and Rz\k{a}\.zewski~\cite{DBLP:journals/jcss/OkrasaR20} proved that the weighted homomorphism problem cannot
be solved in $P_t$-free graphs in subexponential time, whenever the target graph has two vertices with two common neighbors.
However, for some of the hardness reductions it was essential to exploit the existence of vertex- and edge-weights and thus they cannot be translated to the arguably more natural \lhomo{H} problem.

\paragraph{Our results.}
In this paper we investigate the fine-grained complexity of \lhomo{H} in $F$-free graphs, where $F$ is a subdivided claw.
Recall that by \cref{thm:otherf} these are the only connected forbidden graphs for which we can hope for the existence of subexponential-time algorithms.

First, we define the family of \emph{predacious} graphs, and show that they precisely correspond to ``hard'' cases of \lhomo{H} in $P_t$-free graphs. More specifically, we prove the following theorem.

\begin{theorem}\label{thm:ptfree-main}
Let $H$ be a fixed graph.
\begin{enumerate}[a)]
\item If $H$ is not predacious, then for every $t$,  the \lhomo{H} problem can be solved in time $n^{\Oh(\log^2 n)}$ in $n$-vertex $P_t$-free graphs.
\item If $H$ is predacious, then there exists $t$, such that the \lhomo{H} problem cannot be solved in time $2^{o(n)}$ in $n$-vertex $P_t$-free graphs, unless the ETH fails.
\end{enumerate}
\end{theorem}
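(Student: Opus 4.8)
The two directions are proved independently. For part~(a) the plan is to lift the quasi-polynomial branching framework of Gartland--Lokshtanov for \textsc{Max Independent Set}, in the form in which Pilipczuk, Pilipczuk, and Rz\k{a}\.zewski applied it to \textsc{3-Coloring} in $P_t$-free graphs, to \lhomo{H} for non-predacious $H$. I would start by normalizing an instance $(G,L)$: repeatedly enforce arc-consistency (delete from $L(u)$ every color with no neighbor in $L(w)$ for some $uw\in E(G)$) and list-irreducibility (delete from $L(v)$ every color $a$ for which some $b\in L(v)$ has $N_H(a)\subseteq N_H(b)$); both reductions run in polynomial time and preserve the existence of a list homomorphism. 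On a normalized instance I would run a recursive procedure that selects a vertex $v$ according to the structural properties of $P_t$-free graphs exploited by that framework (a dominating-set / balanced-separator structure, in the style of Bacs\'o--Tuza), partitions $L(v)$ into two nonempty sets $A$ and $B$, and recurses after re-normalizing the instances with $L(v):=A$ and with $L(v):=B$. The content of ``$H$ non-predacious'' is a \emph{progress guarantee}: there is always a choice of split whose two branches propagate through the constraint graph to a quantifiable decrease of a Gartland--Lokshtanov-type potential (essentially $\sum_v \log|L(v)|$) in a constant fraction of the subproblems generated; feeding this into the known recursion-tree accounting yields the $n^{\Oh(\log^2 n)}$ bound, with the base case being reached once every list is small enough that the instance collapses to a polynomial-time-solvable constraint satisfaction problem (in the spirit of how lists of size at most two behave like $2$-\sat).

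The hard part of~(a), and where I expect the main difficulty, is isolating exactly the right progress guarantee and verifying that it is compatible with the $P_t$-free recursion-tree analysis: branching is easy, but one must rule out that a non-predacious target forces a long run of splits that all fail to shrink the relevant lists. In other words, \emph{predacious} has to be the minimal configuration in $H$ that an adversary could use to keep the potential flat, so I would expect to prove a decomposition theorem for non-predacious graphs --- a recursively reducible, tree-like structure on the incomparability relation of $V(H)$ --- and use it both to choose each split and to show that both branches stay normalized and the potential drops as required. This mirrors the way bi-arc graphs sit at the tractable end of the Feder--Hell--Huang classification.

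For part~(b) the plan is a reduction, for some constant $t=t(H)$, from a problem with no $2^{o(n)}$-time algorithm under the ETH --- for instance \sat\ (or \lhomo{H'} on bounded-degree graphs for an appropriate non-bi-arc $H'$ supplied by the classical dichotomies) --- to \lhomo{H} in $P_t$-free graphs. From a sparse hard instance with $\Oh(n)$ constraints I would build, out of the predacious configuration of $H$, a variable gadget whose list homomorphisms encode a Boolean value and a constraint gadget wiring together a bounded number of variable gadgets. The decisive point is that every gadget has bounded diameter and the gadgets are glued through bounded-size interfaces, so that any induced path in the constructed graph meets only a bounded number of gadgets and hence has bounded length; this is what keeps the construction $P_t$-free for a $t$ that depends on $H$ alone. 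Since the reduction is linear in the size of the instance, a $2^{o(n)}$-time algorithm for \lhomo{H} in $P_t$-free graphs would contradict the ETH. The main obstacle here is to make the gadgets enforce the intended semantics using only the unweighted structure of a predacious $H$ --- unlike the earlier weighted constructions of Okrasa and Rz\k{a}\.zewski --- while keeping the union $P_t$-free; these requirements pull in opposite directions, since faithful gadgets want many vertices and paths whereas $P_t$-freeness wants shallow, densely interconnected pieces, so the core of the argument will be a careful ``no long induced path'' case analysis over the ways a path can traverse the gadget structure. Combining the two parts --- all $t$ for non-predacious $H$, some $t$ for predacious $H$ --- then shows that predaciousness is exactly the boundary and completes the dichotomy.
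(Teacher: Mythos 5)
Your high-level plan is in the right neighborhood --- the paper does normalize, branch, and control a recursion tree in the Gartland--Lokshtanov/Pilipczuk et al.\ style for part~(a), and does reduce from a sparse \sat-like problem via gadgets with controlled interfaces for part~(b) --- but there are concrete gaps in both directions. For~(a), the potential you propose, roughly $\sum_v\log|L(v)|$, does not work: in the ``successful'' branch (fixing a color on $v$) this quantity drops by only $\Oh(\deg v)$, and in the ``failure'' branch it drops by $\Oh(1)$, neither of which yields an $\Oh(\log n)$ depth bound for the contracted recursion tree. The paper instead uses the colored-induced-path potential of Pilipczuk et al.: for every pair $\{u,u'\}$ it keeps the bucket of (induced $u$--$u'$ path, list-homomorphism-on-it) pairs and takes $\mu=-\sum_{\{u,u'\}}\log_{1-\eps}(1+|\cB_{u,u'}|)$. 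The structural lemma on vertices dominating a constant fraction of induced paths, together with the simple Observation that in a predator-free $H$ any two incomparable sets of size $\geq 2$ have a non-adjacent pair, guarantees a branching pair $(v,x)$ whose successful branch kills an $\eps$-fraction of a $\delta$-fraction of all buckets; this is what drives the $n^{\Oh(\log^2n)}$ bound. You also need a stronger preprocessing step than arc-consistency and list-irreducibility: the paper additionally enforces $t$-consistency (every $x\in L(v)$ extends to every induced $\leq t$-vertex subgraph through $v$), and this is exactly what makes the bucket argument go through. Finally, you do not need to invent a ``tree-like decomposition of the incomparability relation'': the only use of non-predaciousness is the one-line Observation above, applied after the already-known factorization theorem reduces $H$ to factors that are either bi-arc or predator-free.

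For~(b), the high-level gadget plan is sound but you are missing the structural scaffolding that makes the gadgets actually available. The paper first applies the factorization theorem to locate a ``bad'' factor $H'$, which is either a strong split graph --- in which case a much simpler reduction shows hardness already on split graphs, which are $P_5$-free --- or has $(H')^*$ undecomposable. In the second case the hardness is proved for bipartite, undecomposable, non-bi-arc $H$ containing an incomparable $C_4$, using the distinguisher/$\ork{3}$ gadget machinery of Okrasa et al.\ to realize the occurrence and OR gadgets; the construction is then lifted to general $H$ via the associated bipartite graph $H^*$ and the consistency of the constructed instance. Without this two-case split and the $H^*$ lift, you have no way to manufacture the gadgets purely from a ``predacious configuration'' in an arbitrary $H$. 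One more important point your sketch does not quite capture: $P_t$-freeness in the paper's construction comes not just from bounded-size gadgets, but from a global biclique between the variable vertices and the literal vertices --- any induced path can visit at most a couple of vertices from each side of a biclique, which is what caps the number of gadget segments a path can traverse. Bounded-diameter gadgets glued through bounded interfaces alone do not prevent long induced paths that snake across many gadgets in series.
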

The definition of predacious graphs is based on the decomposition theorem by Okrasa \emph{et al.}~\cite{FullComplexity} that is particularly useful for solving the \lhomo{H} problem.
Using this theorem, each graph $H$ can be decomposed into a family of induced subgraphs, called \emph{factors}.
Now, a graph $H$ is predacious, if it has a factor that is simultaneously non-bi-arc and contains  a \emph{predator}:
															 
two vertices $a_1,a_2$ with two common neighbors $b_1,b_2$, such that $a_1$ and $a_2$ have incomparable neighborhoods and $b_1$ and $b_2$ have incomparable neighborhoods.
Note that a predator is a refinement of the essential structure in the dichotomy for the weighted homomophism problem~\cite{groenland2019h,DBLP:journals/jcss/OkrasaR20}.

The proof of \cref{thm:ptfree-main}~a) builds on the already mentioned decomposition of target graphs by Okrasa \emph{et al.}~\cite{FullComplexity} and on the recent quasi-polynomial-time algorithm for \textsc{3-Coloring} $P_t$-free graphs~\cite{pilipczuks-qpoly}.
The hardness counterpart is proven in two steps. First, we consider a special case that $H$ is bipartite and ``undecomposable'' (the exact meaning of this is given in \cref{sec:prelims}). Okrasa \emph{et al.}~\cite{FullComplexity} analyzed the structure of such graphs and showed that it is rich enough to build a number of useful gadgets. We use them as building blocks of gadgets required in our hardness reduction.
Then, we lift this hardness result to general predacious graphs $H$, using the idea of \emph{associated bipartite graphs} introduced by Feder \emph{et al.}~\cite{DBLP:journals/jgt/FederHH03}.

Next, we turn our attention to the case that $F$ is an arbitrary subdivided claw.
We obtain the dichotomy in two important special cases: that $H$ is irreflexive, and that $H$ is reflexive.
Recall that these two special cases correspond to the first two steps of the complexity dichotomy for \lhomo{H}~\cite{FEDER1998236,DBLP:journals/combinatorica/FederHH99}.

As a warm-up, let us discuss the case that $H$ is irreflexive and  $F$ is the simplest subdivided claw, i.e., the claw $S_{1,1,1}$.
Recall that \textsc{3-Coloring} is \NP-complete in line graphs~\cite{DBLP:journals/siamcomp/Holyer81a}, which are in particular claw-free.
Since the reduction yields an ETH lower bound, we obtain that if $H$ contains a simple triangle, then \lhomo{H} cannot be solved in subexponential time in claw-free graphs (observe that every instance of \lhomo{K_3} can be seen as an instance of \lhomo{H}, where the other vertices of $H$ do not appear in any lists).

So let us consider the case that $H$ is triangle-free. We note that there is no homomorphism $K_3 \to H$, so if the instance graph contains a triangle, we can immediately report a no-instance. On the other hand, $\{S_{1,1,1},K_3\}$-free graphs are just collections of disjoint paths and cycles, where the problem can be solved in polynomial time using  dynamic programming.
We generalize this simple classification to the case if $F$ is an arbitrary subdivided claw as follows.

\begin{theorem}\label{thm:subdividedclaw-irreflexive}
Let $H$ be a fixed irreflexive graph.
\begin{enumerate}[a)]
\item If $H$ is non-predacious and triangle-free, then for every $a,b,c$,  the \lhomo{H} problem can be solved in time $2^{\Oh(\sqrt{n \log n})}$ in $n$-vertex $S_{a,b,c}$-free graphs.
\item If $H$ is predacious or contains a triangle, then there exist $a,b,c$, such that the \lhomo{H} problem cannot be solved in time $2^{o(n)}$ in $n$-vertex $S_{a,b,c}$-free graphs, unless the ETH fails.
\end{enumerate}
\end{theorem}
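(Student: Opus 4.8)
The plan is first to eliminate triangles from the input: since $H$ is triangle-free there is no homomorphism $K_3\to H$, so if $G$ contains a triangle we answer ``no'', and otherwise we may assume that $G$ is triangle-free \emph{and} $S_{a,b,c}$-free. The core of the argument is then a structural decomposition of graphs of this kind. I would aim to show that every triangle-free $S_{a,b,c}$-free graph $G$ either admits a balanced separator of size $\Oh(\sqrt{n\log n})$, or contains a ``large'' induced complete bipartite subgraph $K_{s,s}$ that can be exploited to shrink the instance. I expect this structural statement to be the main obstacle: triangle-free $S_{a,b,c}$-free graphs are genuinely nontrivial --- for instance $K_{n,n}$ lies in the class and has no sublinear balanced separator, which is exactly why the biclique case cannot be avoided --- so extracting the right decomposition, in the spirit of the one isolated for $P_t$-free graphs by Groenland \emph{et al.}~\cite{groenland2019h}, is where most of the work lies.

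\textbf{Part (a), continued.} Granting such a decomposition, the algorithm is a standard divide-and-conquer. Using the decomposition theorem for targets of Okrasa \emph{et al.}~\cite{FullComplexity} we first reduce \lhomo{H} to the analogous problems on the factors of $H$; here the hypothesis that $H$ is non-predacious is what keeps the per-factor subproblems tractable even on inputs that contain dense pieces. For a single factor we recurse: if a small separator $X$ is available, we branch over all $|V(H)|^{|X|}=2^{\Oh(\sqrt{n\log n})}$ colourings of $X$, restrict the lists accordingly, and solve the resulting instances on the (smaller) connected components of $G-X$; if instead we find a large induced $K_{s,s}$, we deal with it directly, using that $H$ is triangle-free to observe that each of the two sides of the biclique must be mapped into an independent set of $H$ with the two images completely joined, so only a constant number of ``types'' of behaviour are possible --- we branch over these and reduce the instance. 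Solving the resulting recurrence gives the running time $2^{\Oh(\sqrt{n\log n})}$.

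\textbf{Part (b).} We split into two cases. If $H$ contains a triangle, then $K_3\to H$, so \coloring{3} reduces to \lhomo{H} by taking the vertex set of the triangle as the list of every vertex; since \coloring{3} is \NP-complete and admits no $2^{o(n)}$-time algorithm (unless the ETH fails) already on line graphs~\cite{DBLP:journals/siamcomp/Holyer81a}, which are $S_{1,1,1}$-free, the same holds for \lhomo{H} on $S_{1,1,1}$-free graphs, so $(a,b,c)=(1,1,1)$ works. If $H$ is predacious, then by the previous case we may assume that $H$ --- and hence every induced subgraph of $H$ --- is triangle-free, and we fix a factor $H'$ of $H$ that is non-bi-arc and contains a predator. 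Since $H'$ is an induced subgraph of $H$, every instance $(G,L)$ of \lhomo{H'} is also an instance of \lhomo{H} with the same answer, so it suffices to prove that \lhomo{H'} admits no $2^{o(n)}$-time algorithm on $S_{a,b,c}$-free graphs for some $a,b,c$.

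\textbf{Part (b), continued.} For this I would follow the two-step strategy behind \cref{thm:ptfree-main}~b). First, treat the case that $H'$ is bipartite and undecomposable, where the structure described by Okrasa \emph{et al.}~\cite{FullComplexity}, combined with the incomparability conditions built into the predator, lets us construct the logical gadgets (variants of $\rneq$, $\ork{}$, and $\nand{}$) needed for a reduction from, say, \sat; then lift the hardness from the bipartite case to general $H'$ via the associated bipartite graphs of Feder \emph{et al.}~\cite{DBLP:journals/jgt/FederHH03}. What distinguishes this from \cref{thm:ptfree-main}~b), and what I expect to be the main difficulty, is that the constructed graph must moreover be $S_{a,b,c}$-free for some fixed $a,b,c$. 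I would enforce this by realising the reduction with a ``layered'', essentially linear layout --- bounded-size gadgets joined by bounded-length links --- so that no vertex can be the centre of an induced subdivided claw with three long legs; then $S_{a,b,c}$-freeness holds as soon as $a,b,c$ exceed the gadget size, and the blow-up stays linear, preserving the $2^{o(n)}$ lower bound. The delicate point is that the predator supplies only local building blocks, so stitching them into a reduction that is simultaneously correct, compatible with the lifting to non-bipartite $H'$, and globally free of large induced subdivided claws will require careful control of the construction --- in contrast to the $P_t$-free setting, where long induced paths are harmless.
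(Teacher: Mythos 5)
Your plan for part (b) in the predacious case misses the short route the paper takes, and this is the most important gap. Since $S_{t,t,t}$ contains an induced path on $t+1$ vertices, every $P_t$-free graph is automatically $S_{t,t,t}$-free. Hence the hard $P_t$-free instances produced in the proof of \cref{thm:ptfree-main}~b) are already $S_{t,t,t}$-free, and part (b) for predacious $H$ is an immediate corollary of that theorem with $a=b=c=t$. You propose to rebuild the reduction from scratch and argue separately that the resulting graph avoids large subdivided claws; this is unnecessary effort, and moreover your sketch of it contains a genuine error: you assert that a factor $H'$ of $H$ is an induced subgraph of $H$, so that \lhomo{H'} instances can be read directly as \lhomo{H} instances. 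That is not guaranteed by \cref{thm:factorization}; the theorem only promises that $(H')^*$ embeds into $H^*$ (Case~B) or that $H'$ contains an induced subgraph $H''$ of $H$ (Case~A), and the paper has to pass through \cref{lem:homo-star} to transfer the hardness. Your handling of the triangle case (reduce from 3-Colouring on line graphs, which are $S_{1,1,1}$-free) matches the paper.

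For part (a) your first step (discard triangles, since $K_3\not\to H$) agrees with the paper, but your proposed structural decomposition --- small separator or large induced $K_{s,s}$ --- is not what the paper proves, and you leave it unverified while acknowledging it is the crux. The paper instead proves a separator-ball theorem (\cref{thm:bt-free-sep}): every connected graph in a superclass $\cBt$ of $\{S_{t,t,t},K_3\}$-free graphs contains a set $X$ with $|X|\le 7t$ such that $N[X]$ is a $\frac{3}{4}$-balanced separator. The key observation is that $|X|$ is a constant depending only on $t$, so the separator $N[X]$ is small whenever the maximum degree is small; this feeds through Harvey--Wood and Korhonen into a treewidth bound of $\Oh(t\Delta)$. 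The win-win is then the same flavour as in Groenland et al.\ \cite{groenland2019h}: if some vertex has degree $\ge\sqrt{n\log n}$, branch on its colour using \cref{obs:branching-works} (which is exactly where the non-predacious hypothesis enters), making progress on the sum of list sizes; otherwise solve by DP on a width-$\Oh(\sqrt{n\log n})$ tree decomposition. Your $K_{n,n}$ example correctly shows that a blanket small-separator statement fails, but the paper absorbs such dense pieces through the high-degree branching rather than by detecting bicliques, and your biclique branch is not worked out --- it is not clear that restricting the lists of $A\cup B$ to one of constantly many ``types'' yields an instance that is smaller in any measure that closes the recursion. If you want to pursue the biclique route you would need to prove both the dichotomy and a progress measure for the biclique branch; the paper's bounded-ball separator sidesteps both issues.
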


The algorithm from \cref{thm:subdividedclaw-irreflexive}~a) is based on a structural result concerning $(S_{a,b,c},K_3)$-free graphs.
We show that every such a graph contains a constant-size set $X$, such that the set $N[X]$ consisting of $X$ and its neighbors is a balanced separator, i.e., every component of $G-N[X]$ is of multiplicatively smaller size.

Now we can proceed using a simple win-win approach, similar to the one used by Groenland et al.~\cite{groenland2019h}.
First, we consider the case that $G$ has a vertex $v$ of maximum degree at most $\sqrt{n  \log n}$.
It turns out that the assumptions that $H$ is predacious allow us to effectively branch on choosing the color for $v$.
In the other case, if every vertex of $G$ has degree at most $\sqrt{n \log n}$,
our separator theorem gives a balanced separator of size $\Oh(\sqrt{n \log n})$. Combining this with known results concerning the close relation between the existence of balanced separators and the treewidth, we conclude that $G$ has treewidth $\Oh(\sqrt{n \log n})$.
Using this, we can solve \lhomo{H} using a standard dynamic programming on a tree decomposition.

Let us point out that the assumption that $H$ is irreflexive and triangle-free is only used to ensure that the instance is triangle-free.
For such instances we can solve \lhomo{H} in subexponential time for \emph{every} non-predacious graph $H$.

The hardness counterpart of \cref{thm:subdividedclaw-irreflexive} is simple.
If $H$ is predacious, then we are done by \cref{thm:ptfree-main}~b), as every $P_t$-free graph is also $S_{t,t,t}$-free.
On the other hand, if $H$ contains a simple triangle, then the problem is hard even in claw-free graphs, as mentioned before.

Finally, we show that if $H$ is reflexive, then the only ``easy'' cases are the ones that are polynomial-time solvable in general graphs.

\begin{theorem}\label{thm:subdividedclaw-reflexive}
For every fixed reflexive non-bi-arc graph $H$, there exist $a,b,c$, such that the \lhomo{H} problem cannot be solved in time $2^{o(n)}$ in $n$-vertex $S_{a,b,c}$-free graphs, unless the ETH fails.
\end{theorem}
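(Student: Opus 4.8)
The plan is to split the argument according to whether $H$ is predacious. If $H$ is predacious, then \cref{thm:ptfree-main}~b) already provides a $t$ for which \lhomo{H} admits no $2^{o(n)}$-time algorithm in $P_t$-free graphs unless the ETH fails; since $S_{t,t,t}$ contains an induced $P_t$, every $P_t$-free graph is $S_{t,t,t}$-free, so the same lower bound holds in $S_{t,t,t}$-free graphs and we may take $a=b=c=t$. From now on assume that $H$ is reflexive, non-bi-arc, and \emph{not} predacious; since $H$ is reflexive, being non-bi-arc is the same as not being an interval graph.

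For this case I would pass to the associated bipartite graph $H^*$ of Feder, Hell and Huang~\cite{DBLP:journals/jgt/FederHH03}, with $V(H^*)=V(H)\times\{0,1\}$ and $(u,0)(v,1)\in E(H^*)$ exactly when $uv\in E(H)$; note that reflexivity of $H$ makes $\{(u,0)(u,1)\suchthat u\in V(H)\}$ a perfect matching of $H^*$. I will use two facts. First, $H$ is a bi-arc graph if and only if $H^*$ is, so $H^*$ is a bipartite non-bi-arc graph. Second, there is a reduction from \lhomo{H^*} to \lhomo{H} that does not change the underlying graph: given a connected instance $(G',L')$ of \lhomo{H^*}, if $G'$ is non-bipartite it is a no-instance, and otherwise, for each of the two proper $2$-colourings $c$ of $G'$, we form the instance $(G',L'_c)$ of \lhomo{H} with $L'_c(v)=\{u\in V(H)\suchthat (u,c(v))\in L'(v)\}$; since any homomorphism from $G'$ to the bipartite graph $H^*$ respects its natural bipartition $V(H)\times\{0\}$, $V(H)\times\{1\}$, we get $(G',L')\to H^*$ if and only if $(G',L'_c)\to H$ for at least one of the two colourings $c$. (Disconnected instances are handled component by component, which preserves an ETH lower bound of the form $2^{o(n)}$.) Because the underlying graph is unchanged, $P_t$-freeness is preserved, and every $P_t$-free graph is $S_{a,b,c}$-free as soon as $a+b\geq t-1$. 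Hence it suffices to show that \lhomo{H^*} admits no $2^{o(n)}$-time algorithm in $P_t$-free graphs for some $t$, and by \cref{thm:ptfree-main}~b) for this it is enough to prove that $H^*$ is predacious.

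To see that $H^*$ is predacious, observe that the matching of $H^*$ manufactures predators: if $u,v\in V(H)$ are distinct, $uv\in E(H)$, and the closed neighbourhoods of $u$ and $v$ in $H$ are incomparable, then $(u,0),(u,1),(v,0),(v,1)$ induce a $C_4$ in $H^*$ (its four edges being the two matching edges and the two copies of $uv$), in which $(u,0),(v,0)$ have $(u,1),(v,1)$ as two common neighbours, and $N_{H^*}((u,0)),N_{H^*}((v,0))$ are incomparable exactly because $N_H[u],N_H[v]$ are, and symmetrically for $(u,1),(v,1)$ — that is, this $C_4$ is a predator. Since a reflexive non-interval graph contains, as an induced subgraph, a reflexive cycle $C_k$ with $k\geq 4$ (take two consecutive vertices: their closed neighbourhoods are clearly incomparable) or one of the finitely many Lekkerkerker--Boland chordal obstructions built on an asteroidal triple (in each of which such a pair $u,v$ is readily found), the graph $H^*$ contains as an induced subgraph the $(\cdot)^*$-image of that obstruction, which is non-bi-arc and carries a predator.

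The main obstacle is then purely bookkeeping with the decomposition into factors: one must verify that this predator lies inside a \emph{single} factor of $H^*$. I would do this by relating the factor decomposition of Okrasa \emph{et al.}~\cite{FullComplexity} for $H^*$ to the one for $H$ — factors of $H$ are reflexive induced subgraphs, and the construction $H\mapsto H^*$ is compatible with the decomposition — so that it suffices to treat undecomposable $H$; then $H$ is its own unique factor (non-bi-arc and, since $H$ is not predacious, without a predator), while the corresponding factor of $H^*$ contains the $(\cdot)^*$-image of the obstruction above, is therefore non-bi-arc (bi-arc graphs are closed under induced subgraphs), and contains the predator — here it matters that the incomparabilities witnessing the predator are computed inside an \emph{induced} subgraph of $H^*$, so they cannot be destroyed by the remaining vertices of the factor. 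Making the compatibility of the two decompositions precise is the step I expect to require the most care.
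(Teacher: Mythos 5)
Your route is genuinely different from the paper's. The paper extracts a minimal reflexive non-interval induced subgraph $H' \leq H$, notes that $(H')^*$ is undecomposable (citing \cite{DBLP:conf/stacs/EgriMR18} and Theorem~46(1) of \cite{FullerComplexity}), observes that $H'$ contains an incomparable triple of looped vertices (in both the long-cycle and the asteroidal-triple case of Lekkerkerker--Boland), and then applies \cref{thm:sabc-free-three-loops-hardness}, which is a bespoke $3$-\textsc{Coloring} reduction tailored to exactly that configuration. You instead try to show that $H^*$ is \emph{predacious}, so as to reuse \cref{thm:ptfree-main}~b) on $H^*$ and pull the lower bound back to \lhomo{H} by trying both proper $2$-colourings of each component. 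That pull-back is correct (and preserves the underlying graph, hence $P_t$-freeness and the vertex count), and so is your observation that a reflexive edge $uv$ with $N_H[u],N_H[v]$ incomparable induces an incomparable $C_4$ on $(u,0),(u,1),(v,0),(v,1)$ in $H^*$.

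The genuine gap is the one you flag, and it is not merely bookkeeping. To conclude that $H^*$ is predacious you must exhibit a \emph{factor} of $H^*$ that is simultaneously non-bi-arc and contains a predator; a predator somewhere in $H^*$ is not enough, because \cref{thm:factorization} applied to $H^*$ may produce factors that do not contain all four vertices of the $C_4$, or in which the incomparabilities are lost. Your sketch of how to close this does not hold up as written: ``factors of $H$ are reflexive induced subgraphs'' is not what \cref{thm:factorization} gives (it only guarantees that $(H')^*$ is an induced subgraph of $H^*$ for each factor $H'$ of $H$), and ``undecomposable $H$'' is not well-typed, since bipartite decomposability is defined in the paper only for bipartite graphs while a reflexive $H$ is not bipartite. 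There is also no statement in the paper relating the factorization of $H$ to the factorization of $H^*$, so ``the construction $H \mapsto H^*$ is compatible with the decomposition'' would itself need a proof. A secondary gap: the existence of an \emph{adjacent} pair $u,v$ with incomparable closed neighbourhoods in the asteroidal-triple obstructions is asserted (``readily found'') but never argued, and it does not follow merely from the existence of an incomparable independent triple.

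A clean way to salvage your approach, implicit in the paper's own proof of \cref{thm:ptfree-main}~b) in Case~B, is to bypass the factorization of $H^*$ entirely. Take a minimal reflexive non-interval $H' \leq H$; then $(H')^*$ is connected, bipartite, non-bi-arc, and undecomposable, and if you can produce your adjacent incomparable pair $u,v$ in $H'$, you get an incomparable $C_4$ in $(H')^*$, so \cref{thm:pt-free-bip-hard-undecomp} applies to $(H')^*$ directly. Then \cref{lem:homo-star} transfers the $P_t$-free lower bound to \lhomo{H'}, and hence to \lhomo{H}, without changing the instance graph, which gives $S_{t,t,t}$-free hardness. Alternatively, and more cheaply, use the incomparable looped triple as the paper does and invoke \cref{thm:sabc-free-three-loops-hardness}: in a reflexive graph an asteroidal triple is automatically incomparable (if $N[x] \subseteq N[y]$ with $x \neq y$ then $x \in N[y]$, contradicting non-adjacency), whereas your adjacent incomparable pair still needs a case analysis over the chordal obstructions.
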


Unfortunately, we were not able to provide the full complexity dichotomy for $S_{a,b,c}$-free graphs.
We conjecture that the distinction between ``easy'' and ``hard'' cases is as follows.

\begin{restatable}{conjecture}{conj}
\label{conj}
Assume the ETH. Let $H$ be a non-bi-arc graph.
Then for all $a,b,c$, the \lhomo{H} problem can be solved in time $2^{o(n)}$ in $n$-vertex $S_{a,b,c}$-free graphs
if and only if none of the following conditions is satisfied:
\begin{enumerate}
\item $H$ is predacious,
\item $H$ contains a simple triangle,
\item has a factor that is not bi-arc and contains two incomparable vertices with loops.
\end{enumerate}
\end{restatable}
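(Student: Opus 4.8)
The plan is to prove the conjectured equivalence in the two usual directions. For hardness, each of the three listed conditions should yield, under the ETH, a reduction showing that \lhomo{H} has no $2^{o(n)}$-time algorithm in $S_{a,b,c}$-free graphs for some constants $a,b,c$; for the algorithmic direction, if none of the conditions holds, one wants a $2^{o(n)}$-time algorithm in $S_{a,b,c}$-free graphs for every $a,b,c$, and this is the genuine obstacle. Two of the three hardness cases are immediate. If $H$ is predacious, \cref{thm:ptfree-main}~b) already gives a $t$ for which there is no $2^{o(n)}$-time algorithm in $P_t$-free graphs, and every $P_t$-free graph is $S_{t,t,t}$-free. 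If $H$ contains a simple triangle, then restricting every list to the three vertices of that triangle turns any \textsc{3-Coloring} instance into an equivalent instance of \lhomo{H}; since \textsc{3-Coloring} admits no $2^{o(n)}$-time algorithm in line graphs unless the ETH fails, and line graphs are $S_{1,1,1}$-free, the bound follows with $(a,b,c)=(1,1,1)$.

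The remaining hardness case --- $H$ has a factor $H'$ that is not bi-arc and contains two incomparable vertices $x_1,x_2$ carrying loops --- is the only genuinely new reduction. I would work inside the associated bipartite graph of $H'$, use the structural analysis and gadget toolbox of Okrasa \emph{et al.}~\cite{FullComplexity} for non-bi-arc bipartite targets to build the usual edge and disjunction gadgets, and then interconnect these constant-size pieces through long induced paths whose internal vertices are forced onto $\{x_1,x_2\}$; the loops are what make such paths admissible, and the incomparability of $x_1$ and $x_2$ is what lets them transmit information while keeping lists consistent. Routing every connection through a long induced path guarantees that the resulting instance contains no induced $S_{a,b,c}$ whose three legs are all longer than a single gadget, so the reduction stays inside $S_{a,b,c}$-free graphs for a suitable constant triple; one then lifts the hardness from $H'$ back to $H$ using the reduction of Okrasa \emph{et al.}~\cite{FullComplexity}, which changes the excluded subdivided claw only by a constant factor. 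This case subsumes \cref{thm:subdividedclaw-reflexive}: a reflexive non-bi-arc graph has a factor that is not bi-arc, that factor is reflexive and hence not an interval graph, and a reflexive non-interval graph necessarily has two vertices with incomparable closed neighbourhoods, which automatically carry loops.

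For the algorithmic direction, assume none of the three conditions holds. First apply the decomposition theorem of Okrasa \emph{et al.}~\cite{FullComplexity} to reduce \lhomo{H} to \lhomo{H_i} over the factors $H_i$ of $H$, once more changing the excluded claw only by a constant factor; the bi-arc factors are solved in polynomial time, so it suffices to handle a factor $H_i$ that is not bi-arc, which by the failure of the three conditions is non-predacious, contains no simple triangle (it is an induced subgraph of $H$), and has no two incomparable vertices with loops. On such instances I would run the win--win scheme of \cref{thm:subdividedclaw-irreflexive}~a): if $G$ has a vertex of degree more than $\sqrt{n \log n}$ then branch on its colour --- non-predaciousness of $H_i$ is what makes this branching effective --- and otherwise $G$ has maximum degree at most $\sqrt{n \log n}$, and one invokes a structural statement about $S_{a,b,c}$-free graphs to bound the treewidth and finishes by dynamic programming over a tree decomposition. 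The main obstacle is precisely this structural statement: in \cref{thm:subdividedclaw-irreflexive}~a) it was needed only for $(S_{a,b,c},K_3)$-free graphs, where triangle-freeness supplies a balanced separator of the form $N[X]$ with $X$ of constant size and hence treewidth $\Oh(\sqrt{n \log n})$, whereas without triangle-freeness one must cope with dense substructures --- cliques alone are $S_{a,b,c}$-free --- so the conclusion cannot be small treewidth outright, and the natural target is instead that a \emph{bounded-degree} $S_{a,b,c}$-free graph has treewidth $\Oh(\sqrt{n \log n})$, or at least balanced separators of that size. Establishing such a separator theorem for bounded-degree $S_{a,b,c}$-free graphs --- or finding a route that bypasses it --- is the crux of the conjecture, and it does not appear to follow from the Gyárfás-path-type arguments presently known for graphs with no long induced claw.
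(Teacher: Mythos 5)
This is a \emph{conjecture} in the paper, not a theorem: the paper explicitly says ``we were not able to provide the full complexity dichotomy for $S_{a,b,c}$-free graphs'' and states that the claim is supported by, but not proved from, the results obtained. So there is no proof of the paper's to compare against, and your proposal should be read as a plan rather than an attempted proof. With that understood, your diagnosis of what is and is not available is accurate and matches the paper's own assessment. Conditions (1) and (2) of the hardness direction do follow from \cref{thm:ptfree-main}~b) (with $P_t$-free implying $S_{t,t,t}$-free) and from Holyer's reduction in claw-free line graphs, exactly as in \cref{thm:sabc-free-simple-hardness}. You also correctly observe that condition (3), which asks only for \emph{two} incomparable looped vertices in a non-bi-arc factor, is strictly more general than what \cref{thm:sabc-free-three-loops-hardness} actually establishes (three incomparable looped vertices), so a genuinely new reduction is needed there; and your subsumption of \cref{thm:subdividedclaw-reflexive} via condition (3) is correct modulo that missing reduction. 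Finally, you put your finger on the real obstruction to the algorithmic direction: \cref{thm:bt-free-sep} uses triangle-freeness essentially, and one would need a separator or treewidth bound of order $\Oh(\sqrt{n\log n})$ for \emph{bounded-degree} $S_{a,b,c}$-free graphs without that assumption, which is open.

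Two concrete issues with your sketch of the condition-(3) reduction. First, the ``long wire with internal lists $\{x_1,x_2\}$'' idea behaves very differently depending on whether $x_1x_2\in E(H)$: if the two looped vertices are non-adjacent the wire copies one bit along the path, but if they are adjacent (which is compatible with incomparability) the path imposes no constraint at all and transmits nothing, so the gadget architecture needs to handle both cases separately and your sketch does not. Second, the argument that routing through long paths keeps the constructed instance $S_{a,b,c}$-free is asserted rather than argued; in the paper's own hardness constructions (\cref{thm:pt-free-bip-hard-undecomp}, \cref{thm:sabc-free-three-loops-hardness}) the exclusion of a subdivided claw hinges on degree-three vertices being confined to constant-size gadgets and on careful bookkeeping of how branches can leave a gadget. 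Long paths \emph{between} gadgets make it harder, not easier, to avoid a subdivided claw with one short leg at a gadget junction and two long legs running along wires, so this part of the plan needs a real case analysis rather than the heuristic you give.
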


\paragraph{Organization of the paper.} In \cref{sec:prelims} we present some notation and preliminary results. In particular, we define the class of predacious graphs. In \cref{sec:ptfree} we prove \cref{thm:ptfree-main}: the algorithmic statement in  \cref{sec:ptfree-algo} and the hardness counterpart in \cref{sec:ptfree-hardness}.
In \cref{sec:subdivided} we consider $S_{a,b,c}$-free graphs. First, in \cref{sec:separator} we show a structural result about balanced separators in (a superclass of) $\{S_{a,b,c},K_3\}$-free graphs, and then in \cref{sec:subdivided-algo} we use it to prove \cref{thm:subdividedclaw-irreflexive}~a). In \cref{sec:subdivided-hardness} we present some hardness results, which, in particular, imply \cref{thm:subdividedclaw-irreflexive}~b) and \cref{thm:subdividedclaw-reflexive}.
In \cref{sec:conclusion} we revisit our hardness results and discuss the minimum size of the forbidden graph $F$, for which the \cref{thm:ptfree-main}~b), \cref{thm:subdividedclaw-irreflexive}~b), and \cref{thm:subdividedclaw-reflexive} hold. We also discuss and motivate \cref{conj}.

\newpage
\section{Notation and preliminaries}
\label{sec:prelims}
For a positive integer $n$, by $[n]$ we denote the set $\{1,2,\ldots,n\}$.
For a set $X$ and integer $k$, by $2^{X}$ we denote the family of all subsets of $X$ and by $\binom{X}{k}$ (resp.  $\binom{X}{\leq k}$) we denote the family of all subsets of $X$ with exactly (resp. at most) $k$ elements. 

A graph $G$ is \emph{reflexive} if its every vertex has a loop, and \emph{irreflexive} if no vertex has a loop.
For two sets $X,Y \subseteq V(G)$, we say that $X$ is \emph{complete} to $Y$ if every vertex from $X$ is adjacent to every vertex from $Y$.
For a vertex $v \in V(G)$ by $N_G(v)$ we denote the set of neighbors of $v$ and by $N_G[v]$ we denote the set $N_G(v) \cup \{v\}$. 
Note that if $v$ has a loop, then $v \in N_G(v)$, so $N_G(v)=N_G[v]$. 
For a set $U\subseteq V(G)$, by $N_G(U)$ we denote the set $\bigcup_{u \in U} N_G(u) \setminus U$.
We omit the subscript and write $N(v)$, $N[v]$, and $N(U)$, respectively, if $G$ is clear from the context.
Also, by $|G|$ we denote the number of vertices of $G$.

We say that two vertices $u,v$ of $G$ are \emph{incomparable} if $N(u) \not\subseteq N(v)$ and $N(v) \not\subseteq N(u)$. 
We say that a set $S$ of vertices is \emph{incomparable} if its elements are pairwise incomparable. 
Let $H$ be a graph and suppose that there are two distinct vertices $a,b$ of $H$, such that $N_H(a) \subseteq N_H(b)$.
We observe that in any homomorphism to $H$, if some vertex is mapped to $a$, we can safely remap it to $b$.
Thus, if for some instance $(G,L)$ of the \lhomo{H} problem and for some $v \in V(G)$ the list $L(v)$ contains $a$ and $b$ as above,
then we can safely remove $a$ from $L(v)$.
Thus, without loss of generality, we can always assume that in any instance of \lhomo{H} each list is an incomparable set in $H$.

For a graph $H$, its \emph{associated bipartite graph} $H^*$ is the direct product $H \times K_2$, i.e., the bipartite graph with vertex set $\{a',a''~|~a \in V(H)\}$ and edge set $\{a'b''~|~ab \in E(H)\}$. We observe that $H^*$ is connected if and only if $H$ is connected and non-bipartite.
Moreover, note that for bipartite $H$, the graph $H^*$ consists of two disjoint copies of $H$.
As observed by Feder, Hell, and Huang~\cite{DBLP:journals/jgt/FederHH03}, $H$ is a bi-arc graph if and only $H^*$ is a bi-arc graph.
As $H^*$ is bipartite, we can equivalently say that $H$ is a bi-arc graph if and only if the complement of $H^*$ is a circular-arc graph.

\begin{definition}[Predator]
A \emph{predator} is a tuple $(a_1,a_2,b_1,b_2)$ of vertices, such that $a_1 \neq a_2, b_1 \neq b_2$, and
$
\{a_1,a_2\}$ and $\{b_1,b_2\}$ are incomparable sets, complete to each other.
\end{definition}

\cref{fig:predators} shows some examples of predators. Let us point out that the leftmost structure in~\cref{fig:predators} is the only predator, which can be bipartite.
It will play a special role in our hardness proofs; we call it an \emph{incomparable $C_4$}. 

Observe that $(a_1,a_2,b_1,b_2)$ is a predator in $H$, for some $a_1,a_2,b_1,b_2 \in V(H)$, if and only if $(a'_1,a'_2,b''_1,b''_2)$ is an incomparable $C_4$ in $H^*$.
This implies the following observation.

\begin{observation}
A graph $H$ contains a predator if and only if $H^*$ contains an incomparable $C_4$.
\end{observation}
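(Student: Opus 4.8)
The plan is to reduce the whole statement to a simple dictionary between neighborhoods in $H$ and neighborhoods in $H^*$. The starting point is the pair of identities $N_{H^*}(a') = \{\,b'' \suchthat b \in N_H(a)\,\}$ and $N_{H^*}(a'') = \{\,b' \suchthat b \in N_H(a)\,\}$, valid for every $a \in V(H)$; both follow directly from $E(H^*) = \{\,a'b'' \suchthat ab \in E(H)\,\}$ once one notes that $E(H)$ is symmetric, so that $a'b'' \in E(H^*)$, $b'a'' \in E(H^*)$, and $ab \in E(H)$ are all equivalent. Since $b \mapsto b''$ is a bijection from $V(H)$ onto the double-primed side of $H^*$, these identities immediately give, for all $a_1,a_2 \in V(H)$, the equivalences $N_{H^*}(a_1') \subseteq N_{H^*}(a_2') \iff N_H(a_1) \subseteq N_H(a_2)$, and likewise with double primes; hence $\{a_1',a_2'\}$ (resp. $\{a_1'',a_2''\}$) is an incomparable set in $H^*$ exactly when $\{a_1,a_2\}$ is an incomparable set in $H$. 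In the same way, $a_ib_j \in E(H)$ for all $i,j \in \{1,2\}$ is equivalent to $a_1'b_1''a_2'b_2''$ being a $C_4$ in $H^*$, i.e. to $\{a_1,a_2\}$ being complete to $\{b_1,b_2\}$ in $H$, and $a_1' = a_2' \iff a_1 = a_2$, $b_1'' = b_2'' \iff b_1 = b_2$. Putting these together yields exactly the sentence preceding the statement: $(a_1,a_2,b_1,b_2)$ is a predator in $H$ if and only if $(a_1',a_2',b_1'',b_2'')$ is an incomparable $C_4$ in $H^*$ — here one uses that $H^*$ is bipartite and loopless, so the extra requirement that there be no edge inside $\{a_1',a_2'\}$ or inside $\{b_1'',b_2''\}$ holds automatically, and the four vertices then actually induce a $C_4$.

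With this dictionary in hand, the Observation follows in two lines. For the forward direction, a predator $(a_1,a_2,b_1,b_2)$ of $H$ produces the incomparable $C_4$ $(a_1',a_2',b_1'',b_2'')$ in $H^*$. For the converse, take an incomparable $C_4$ in $H^*$, say with colour classes $\{p_1,p_2\}$ and $\{q_1,q_2\}$ of the $4$-cycle. Each edge of this $4$-cycle is an edge of $H^*$ and therefore crosses the bipartition of $H^*$, so the two colour classes of the $C_4$ are each contained in one side of $H^*$; after relabelling (and, if necessary, applying the automorphism $a' \leftrightarrow a''$ of $H^*$, which exists by symmetry of $E(H)$) we may write the incomparable $C_4$ as $(a_1',a_2',b_1'',b_2'')$, and then the equivalence above gives that $(a_1,a_2,b_1,b_2)$ is a predator in $H$.

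I do not anticipate a genuine difficulty; the statement is essentially a translation of definitions. The only two points that deserve an explicit sentence are the use of the symmetry of $E(H)$ to see that $a'b''$ and $b'a''$ are simultaneously edges or non-edges of $H^*$, and the remark that in the bipartite graph $H^*$ the two colour classes of any $C_4$ respect the bipartition of $H^*$ — this is what lets us conclude that an incomparable $C_4$ of $H^*$ always arises from a predator of $H$ in the displayed form.
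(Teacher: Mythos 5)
Your proof is correct and takes the same route the paper implicitly takes: you flesh out, via the neighborhood dictionary $N_{H^*}(a')=\{b'':b\in N_H(a)\}$, the one-line claim preceding the Observation that $(a_1,a_2,b_1,b_2)$ is a predator in $H$ if and only if $(a_1',a_2',b_1'',b_2'')$ is an incomparable $C_4$ in $H^*$. The paper simply asserts this equivalence without the bookkeeping you supply, so your write-up is a faithful elaboration rather than a different approach.
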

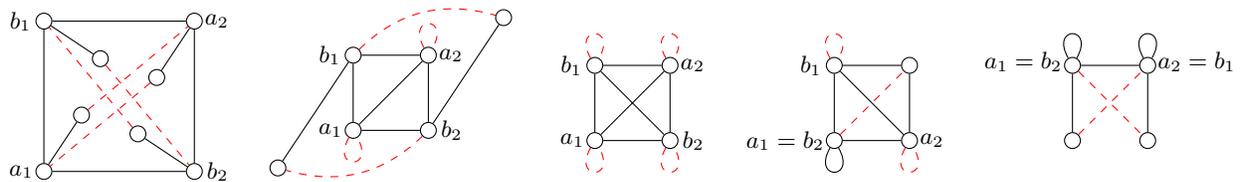
\begin{figure}[h]
\begin{tikzpicture}[, every node/.style={draw,circle,fill=white,inner sep=0pt,minimum size=6pt}]
\node(a) at (0,0.25) {};
\node(b) at (0.25,1) {};
\node(c) at (0.75,0) {};
\node(d) at (1,0.75) {};
\node[label={left}:\footnotesize{$a_1$}](e) at (-.5,-.5) {};
\node[label={left}:\footnotesize{$b_1$}](f) at (-.5,1.5) {};
\node[label={right}:\footnotesize{$b_2$}](g) at (1.5,-.5) {};
\node[label={right}:\footnotesize{$a_2$}](h) at (1.5,1.5) {};

\draw (e) -- (a);
\draw (c) -- (g);
\draw (f) -- (b);
\draw (d) -- (h);
\draw (e) -- (f) -- (h) -- (g) -- (e);
\draw[red, dashed] (a) -- (h);
\draw[red, dashed] (b) -- (g);
\draw[red, dashed] (c) -- (f);
\draw[red, dashed] (d) -- (e);
\end{tikzpicture}
\hskip 0.4cm
\begin{tikzpicture}[, every node/.style={draw,circle,fill=white,inner sep=0pt,minimum size=6pt}]
\node[label={left}:\footnotesize{$a_1$}](a) at (0,0) {} edge [dashed, red,in=-60,out=-120,loop] ();
\node[label={left}:\footnotesize{$b_1$}](b) at (0,1) {};
\node[label={right}:\footnotesize{$b_2$}](c) at (1,0) {};
\node[label={right}:\footnotesize{$a_2$}](d) at (1,1) {} edge [dashed,red,in=60,out=120,loop] ();

\node(f) at (-1,-0.5) {};
\node(g) at (2,1.5) {};

\draw (d) -- (a) -- (b);
\draw (a) -- (c) -- (g);
\draw (f) -- (b) -- (d);
\draw (c) -- (d);
\path[red,dashed] (b) edge [bend left] (g);
\path[red,dashed] (c) edge [bend left] (f);

\end{tikzpicture}
\hskip 0.4cm
\begin{tikzpicture}[, every node/.style={draw,circle,fill=white,inner sep=0pt,minimum size=6pt}]
\node[draw=none] at (-.5,-.5) {};
\node[label=left:\footnotesize{$a_1$}](a) at (0,0) {} edge [dashed, red,in=-60,out=-120,loop] ();
\node[label=left:\footnotesize{$b_1$}](b) at (0,1) {} edge [dashed, red,in=60,out=120,loop] ();;
\node[label=right:\footnotesize{$b_2$}](c) at (1,0) {} edge [dashed, red,in=-60,out=-120,loop] ();
\node[label=right:\footnotesize{$a_2$}](d) at (1,1) {} edge [dashed, red,in=60,out=120,loop] ();

\draw (d) -- (a) -- (b) -- (c) -- (d);
\draw (a) -- (c);
\draw (b) -- (d);
\end{tikzpicture}
\hskip 0.4cm
\begin{tikzpicture}[, every node/.style={draw,circle,fill=white,inner sep=0pt,minimum size=6pt}]
\node[draw=none] at (-.5,-.5) {};
\node[label={left}:\footnotesize{$a_1=b_2$}](a) at (0,0) {} edge [in=-60,out=-120,loop] ();
\node[label=left:\footnotesize{$b_1$}](b) at (0,1) {} edge [dashed, red,in=60,out=120,loop] ();
\node[label=right:\footnotesize{$a_2$}](c) at (1,0) {} edge [dashed, red,in=-60,out=-120,loop] ();
\node(d) at (1,1) {};

\draw (a) -- (b) -- (d) -- (c);
\draw (a) -- (c) -- (b);
\draw[red,dashed] (a) -- (d);
\end{tikzpicture}
\hskip 0.4cm
\begin{tikzpicture}[, every node/.style={draw,circle,fill=white,inner sep=0pt,minimum size=6pt}]
\node[draw=none] at (-.5,-.5) {};
\node(a) at (0,0) {};
\node[label={left,yshift = 0.05cm}:\footnotesize{$a_1=b_2$}](b) at (0,1) {} edge [in=60,out=120,loop] ();
\node(c) at (1,0) {};
\node[label={right,yshift = 0.05cm}:\footnotesize{$a_2=b_1$}](d) at (1,1) {} edge [in=60,out=120,loop] ();

\draw (a) -- (b) -- (d) -- (c);
\draw[dashed,red] (a) -- (d);
\draw[dashed,red] (b) -- (c);
\end{tikzpicture}
\caption{Examples of predators $(a_1,a_2,b_1,b_2)$ and their neighbors. Red dashed lines denote the edges that cannot exist. The edges that are not drawn are possible, but not necessary.}\label{fig:predators}
\end{figure}

We say that $H$ is a \emph{strong split graph} if $V(H)$ can be partitioned into two sets, $P$ and $B$, such that $H[P]$ is a reflexive clique and $B$ is independent.

For a bipartite graph $H$ with bipartition classes $X,Y$, a \emph{bipartite decomposition} is a partition of $V(H)$ into an ordered triple of sets $(D,N,R)$, such that
(i) $N$ is non-empty and separates $D$ and $R$, (ii) $|D \cap X| \geq 2$ or $|D \cap Y| \geq 2$, (iii) $(D \cup N) \cap X$ is complete to $N \cap Y$ and $(D \cup N) \cap Y$ is complete to $N \cap X$.
We say that $H$ is \emph{undecomposable} if it admits no bipartite decomposition.

Okrasa \emph{et al.}~\cite{FullComplexity,FullerComplexity} proved a useful result that allows to reduce solving instances of \lhomo{H} for arbitrary $H$ to the case that $H$ has certain structure.

\begin{theorem}[Okrasa \emph{et al.}~\cite{FullComplexity,FullerComplexity}]\label{thm:factorization}
Let $H$ be a graph.
In time $|V(H)|^{\Oh(1)}$ we can construct a family $\cH$ of $\Oh(|V(H)|)$ connected graphs, called \emph{factors of $H$}, such that:
\begin{enumerate}[(1)]
\item $H$ is a bi-arc graph if and only if every $H' \in \cH$ is a bi-arc graph, \label{factors:hard}
\item for each $H' \in \cH$, the graph $H'^*$ is an induced subgraph of $H^*$ and at least one of the following holds: \label{factors:types}
\begin{enumerate}
\item $H'$ is a bi-arc graph, or
\item $H'$ a strong split graph and has an induced subgraph $H''$, which is not a bi-arc graph and is an induced subgraph of $H$, or \label{factors:types:strongsplit}
\item $(H')^*$ is undecomposable, \label{factors:types:undecomposable}
\end{enumerate}
\item for every instance $(G,L)$ of \lhomo{H}, the following implication holds: \label{factors:bottomup}

If there exists a non-decreasing, convex function $f \colon \N \to \R$,
such that for every $H' \in \cH$, for every induced subgraph $G'$ of $G$, and for every $H'$-lists $L'$ on $G'$,
we can decide whether $(G',L')\to H'$ in time $f(|V(G')|)$, then
we can solve the instance $(G,L)$  in time
\[
\Oh \left (|V(H)| f(n) + n^2 \cdot |V(H)|^3 \right).
\]
\end{enumerate}
\end{theorem}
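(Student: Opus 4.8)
The plan is to build $\cH$ by repeatedly applying three list-preserving reductions to $H$ --- a \emph{deletion step}, a \emph{component step}, and a \emph{cut step} --- until none of them applies, and then to classify the irreducible pieces and read off items (1)--(3) from that classification. The deletion step removes a vertex $a$ of $H$ with $N_H(a)\subseteq N_H(b)$ for some $b\neq a$; as observed just before the definition of a predator, this changes the answer to no instance and only shrinks $H$. The component step, applied when $H$ is disconnected, separates its connected components, guessing for each connected instance graph which component of $H$ receives its image (and treating components of the instance graph independently). The cut step is the essential one and is carried out in the associated bipartite graph. It is convenient to track the whole process through $H^*$: by the Feder--Hell--Huang correspondence, \lhomo{H} reduces in polynomial time to \lhomo{H^*}, the cut step only ever produces induced subgraphs of $H^*$, and the factor $H'$ is recovered at the end by ``folding'' such a subgraph back to a (possibly non-bipartite, possibly looped) graph whose associated bipartite graph is exactly that subgraph --- which is the first assertion of item (2).

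The cut step is the heart of the construction. Suppose the current bipartite graph $B$ --- a component of $H^*$, or a piece produced earlier --- admits a bipartite decomposition $(D,N,R)$. I would split $B$ into $B_1=B[D\cup N]$ and $B_2=B[N\cup R]$ and recurse on each. The soundness lemma to establish is that, for an arc-consistent bipartite instance (one closed under deleting from a list any color with no neighbor in an adjacent list), $(G,L)\to B$ holds if and only if the instance restricted to the $B_1$-side maps to $B_1$ and the instance restricted to the $B_2$-side maps to $B_2$; here ``restricting to the $B_1$-side'' intersects each list with $D\cup N$, replacing an emptied list by the appropriate part of $N$, and one notes that a list cannot genuinely meet both $D\setminus N$ and $R\setminus N$ because $N$ separates $D$ from $R$ and lists are incomparable. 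Condition~(iii) in the definition of a bipartite decomposition --- the completeness between $N\cap X$ and $(D\cup N)\cap Y$ and symmetrically --- is exactly what lets the ``only if'' go through in the nontrivial direction: a vertex forced onto the boundary $N$ can be reconciled there between the two sides, so the two recursive solutions glue into one. Re-establishing arc-consistency between recursive calls is polynomial, which is where the $\Oh(n^2|V(H)|^3)$ term of item (3) comes from.

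When no bipartite decomposition applies, $B$ is undecomposable --- the last alternative of item (2). If moreover $B$ is a bi-arc graph we are in alternative (a); otherwise I would invoke the structure theory of undecomposable non-bi-arc bipartite graphs to conclude that either $B$ is the desired non-bi-arc factor, or we land in the strong-split alternative, in which one exhibits a bounded non-bi-arc graph $H''$ that is an induced subgraph of the strong split factor and also of $H$ itself. Item (1) then follows because each of the three steps preserves the bi-arc property in both directions --- deletion of comparable vertices, extraction of components, and gluing along the separator $N$ all interact well with ``the complement of $H^*$ is a circular-arc graph'' --- so $H$ is a bi-arc graph iff every leaf of the recursion is, while conversely any non-bi-arc induced subgraph of $H$ must persist in some factor. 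For the bound $|\cH|=\Oh(|V(H)|)$ I would always cut off an inclusion-minimal side $D$, which organizes the produced pieces in a laminar, tree-like fashion with $\Oh(|V(H)|)$ nodes; the construction is plainly polynomial in $|V(H)|$. Finally, item (3) follows by unwinding the recursion: solving $(G,L)$ reduces to $\Oh(|V(H)|)$ calls to factor-algorithms on induced subgraphs of $G$, and the hypotheses that $f$ is non-decreasing and convex let one collapse the sum of the resulting $f$-terms into $\Oh(|V(H)|\,f(n))$, the rest being the polynomial bookkeeping above.

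The main obstacle is twofold. First, the soundness of the cut step must be proved very carefully: one must show that \emph{every} partial solution on one side that uses the boundary $N$ in a prescribed way extends to the other side --- this is precisely where condition~(iii) is used --- and one must correctly handle the vertices whose lists interact with $N$, in particular those whose list lies entirely inside $N$ (hence in both pieces) or entirely outside the side being restricted to, as well as the interaction with the earlier passage to $H^*$. Second, and more seriously, delineating exactly when an undecomposable non-bi-arc bipartite graph forces the strong-split alternative rather than simply being its own factor, and verifying that the extracted obstruction $H''$ really is an induced subgraph of the original $H$ and not merely of $H^*$, requires the full Feder--Hell--Huang structure theory of bi-arc graphs; this case analysis is by far the longest and most delicate part of the argument.
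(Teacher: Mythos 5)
This theorem is imported verbatim from Okrasa, Piecyk, and Rz\k{a}\.zewski~\cite{FullComplexity,FullerComplexity}; the present paper cites it without proof, so there is no ``paper's own proof'' to compare against. What follows is therefore a review of your sketch against the cited construction as best as can be reconstructed.

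Your three-step recursion (dominated-vertex deletion, component split, cut along a bipartite decomposition of $H^*$, then classify the irreducible pieces) is the right skeleton, and you correctly identify both hard points --- soundness of the cut step, and the boundary between the strong-split alternative and the undecomposable one. You are also right that the laminar structure of the cuts is what gives $|\cH|=\Oh(|V(H)|)$ and that convexity of $f$ is what controls the summed recursion cost in item~(3). But there is one genuine, unacknowledged gap: the ``folding back'' from $H^*$ to a factor $H'$. An arbitrary induced subgraph $B$ of $H^*$ is \emph{not} in general of the form $(H')^*$ for any graph $H'$; for that you need $B$ to be induced on a set of the form $\{a',a''~:~a\in T\}$, i.e.\ the cut must slice $H^*$ symmetrically with respect to the two copies. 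Your cut step as described makes no such guarantee, so the recursion as written produces bipartite pieces rather than factors in the sense the theorem demands. In the cited construction this is handled explicitly: the cuts are chosen so that the produced pieces do descend to graphs $H'$ with $(H')^*$ isomorphic to the piece, and verifying that such a symmetric cut always exists when \emph{some} bipartite decomposition exists is part of the work. Relatedly, for item~(2)(b) the theorem asserts $H''$ is an induced subgraph of $H$ itself, not merely that $(H'')^*$ sits in $H^*$; your sketch defers this to ``the full Feder--Hell--Huang structure theory'' but this is precisely the place where the symmetry issue resurfaces, so it cannot simply be waved through.

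A second, smaller worry is your soundness lemma for the cut step. You assert that $(G,L)\to B$ iff the two restricted instances map to $B_1=B[D\cup N]$ and $B_2=B[N\cup R]$ independently. As stated this is false without further care: a vertex whose list lies entirely in $N$ appears in both subinstances, and the two independent solutions may assign it different colors of $N$, so the gluing is not automatic. Condition~(iii) of bipartite decomposition (completeness of $(D\cup N)\cap X$ to $N\cap Y$ and symmetrically) helps, but it is not symmetric between the $D$-side and the $R$-side, so ``any choice in $N$ works'' is only true after nontrivial analysis of how list constraints propagate across $N$. The actual argument has to pin down which vertices are forced onto $N$ and show the choices on $N$ can be made consistently --- this is more than ``reconciled at the boundary.'' Neither of these gaps is fatal to the approach, but both must be closed before the sketch becomes a proof, and the symmetry issue in particular changes how the cut step has to be set up.
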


Now we are ready to define one of the main characters of the paper, i.e., the class of predacious graphs.

\begin{definition}[Predacious graphs]\label{def:predacious}
Let $H$ be a graph and let $\cH$ be the family of factors of $H$.
We say that $H$ is \emph{predacious} if there exists $H' \in \cH$, which is not a bi-arc graph and contains a predator.
\end{definition}

\newpage
\section{$P_t$-free graphs}\label{sec:ptfree}
In this section we focus on the class of $P_t$-free graphs. 
First, in \cref{sec:ptfree-algo} we prove \cref{thm:ptfree-main}~a) and present a quasi-polynomial-time for \lhomo{H} in this class of graphs.
Then, in \cref{sec:ptfree-hardness} we prove \cref{thm:ptfree-main}~b).
\subsection{Quasi-polynomial-time algorithm}\label{sec:ptfree-algo}
We observe that to obtain \cref{thm:ptfree-main}~a), it is sufficient to prove the following.

\begin{theorem} \label{thm:ptfree-qpoly}
Let $H$ be a fixed graph that does not contain a predator.
Then for every $t$, the $\lhomo{H}$ problem can be solved in time $n^{\Oh(\log^2n)}$ in $n$-vertex $P_t$-free graphs.
\end{theorem}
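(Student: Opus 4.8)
\medskip

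The plan is to combine the recent quasi-polynomial-time technique for \textsc{3-Coloring} in $P_t$-free graphs with the list-homomorphism machinery, exploiting crucially the hypothesis that $H$ contains no predator. First I would recall the key structural insight behind the Gartland--Lokshtanov / Pilipczuk-style algorithms: in a $P_t$-free graph one can always find a vertex $v$ and a small ``budget'' argument so that, after branching on the behaviour of $v$ and its neighbourhood, the remaining instance decomposes into pieces on which the number of ``undetermined'' vertices drops by a constant fraction in each branch; the branching tree then has quasi-polynomial size $n^{\Oh(\log^k n)}$. Concretely, the Pilipczuk--Pilipczuk--Rz\k{a}\.zewski framework~\cite{pilipczuks-qpoly} phrases this as: there is a polynomial-size family of ``candidate separators'' $N[v]$, and guessing the restriction of a solution to such a separator either fixes a large fraction of vertices or splits the instance. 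So the real content is to show that \lhomo{H}, for predator-free $H$, admits the combinatorial property that makes this template go through — namely that after fixing the images of a bounded-size set of vertices together with their neighbourhoods, each connected piece of what remains is ``almost determined,'' in the sense that every vertex has at most one viable list value, or the pieces shrink multiplicatively.

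\medskip

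The mechanism that turns predator-freeness into progress is the following dichotomy on $H$. Since we may assume (as noted in \cref{sec:prelims}) that every list is an incomparable set in $H$, consider a vertex $u$ of the instance with neighbours all already coloured. If two distinct neighbours $x,y$ of $u$ receive colours $p,q$ with $p\neq q$, then the list of $u$ must lie in $N_H(p)\cap N_H(q)$; if $|N_H(p)\cap N_H(q)\cap L(u)|\ge 2$ then, picking two incomparable such colours $a_1,a_2$ (and using that $p,q$ are distinct hence, after cleaning, incomparable), we would recover a predator $(a_1,a_2,p,q)$ in $H$ — contradiction. So the absence of a predator forces that whenever a vertex sees two distinctly-coloured neighbours its colour is forced, or its remaining list lies inside the common neighbourhood of a \emph{single} colour, i.e. the constraint becomes ``monochromatic-like.'' I would isolate this as a clean lemma, phrased via the associated bipartite graph $H^*$ using the observation that a predator in $H$ is an incomparable $C_4$ in $H^*$; working in $H^*$ lets me also invoke the results of Okrasa \emph{et al.}~\cite{FullComplexity} on undecomposable and strong-split factors if I need to handle degenerate factors separately via \cref{thm:factorization}. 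Indeed, to get the cleanest argument I would first apply \cref{thm:factorization}: it suffices to solve \lhomo{H'} for each factor $H'$, and for the bi-arc factors and the strong-split factors the problem is easy or reduces further, so the heart of the matter is an undecomposable, predator-free (hence incomparable-$C_4$-free) bipartite target, where the ``forced or monochromatic'' behaviour is sharpest.

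\medskip

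With that lemma in hand, the algorithm runs as follows. Given $(G,L)$ with $G$ being $P_t$-free: find the polynomial-size family of candidate ``$P_t$-separators'' $N[X]$ for bounded $|X|$ as in~\cite{pilipczuks-qpoly}; branch over all ways to colour $X\cup N[X]$ consistently with $L$ (this is $n^{\Oh(1)}$ branches since $|X\cup N[X]|$ is $\Oh(\log n)$ or constant depending on the precise variant, and $|V(H)|$ is fixed). After such a guess, propagate constraints: every vertex of $G-N[X]$ that has a neighbour in the coloured part and sees two distinct colours there gets its list reduced, by the lemma, either to a singleton or to $N_H(p)\cap L(v)$ for a single colour $p$; in the latter case the vertex effectively becomes a vertex of a \lhomo{H} instance on the ``link'' graph $H[N_H(p)]$, which has strictly fewer vertices, giving a clean induction on $|V(H)|$. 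Recursing into the connected components of $G-N[X]$, which are each a bounded fraction of $n$ (balanced-separator property of $P_t$-free graphs), together with the $n^{\Oh(1)}$ branching and a $\log n$ recursion depth, yields total running time $n^{\Oh(\log^2 n)}$ — the extra logarithmic factor in the exponent over \textsc{3-Coloring} coming from the induction on $|V(H)|$ nested inside the separator recursion. The main obstacle I anticipate is the bookkeeping of this double recursion: ensuring that the ``link'' sub-instances created by the monochromatic-propagation step remain $P_t$-free (they are induced subgraphs, so this is fine) and, more delicately, that their lists stay incomparable and that the recursion on $|V(H)|$ does not interact badly with the separator recursion on $n$ — in particular that the branching factor times recursion depth product stays quasi-polynomial rather than blowing up to $2^{\polylog}$ or worse. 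Getting the exponent down to exactly $\log^2 n$ rather than some larger power of $\log n$ will require care in how the candidate-separator family is chosen and how aggressively constraint propagation is applied before recursing.
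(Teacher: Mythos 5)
Your proposal departs from the paper's approach in a fundamental way, and it has two concrete gaps.

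First, the paper does not run a separator-based recursion at all. It runs a branching algorithm in the style of Gartland--Lokshtanov/Pilipczuk \emph{et al.}, with a potential argument over \emph{buckets} $\cB_{u,u'}$ of colored induced $u$-$u'$-paths. The only role of predator-freeness is \cref{obs:branching-works}: for any two incomparable lists $X,Y$ of size at least $2$, there exist $x\in X$ and $y\in Y$ with $xy\notin E(H)$. The cited structural fact from \cite{pilipczuks-qpoly} (Lemma~5) is that some vertex $v$ has $N[v]$ meeting a constant fraction of induced paths in a constant fraction of the pairs; one then applies \cref{obs:branching-works} to $L(v)$ and the most popular list $L'$ in $N[v]$ to find a branching pair $(v,x)$ whose successful branch deletes an $\epsilon$-fraction of colored paths from a $\delta$-fraction of buckets. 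The $\log^2 n$ in the exponent comes from nesting a $\log n$-depth partition of the recursion tree into local subtrees with a $\log n$-depth bound (via the potential) on each contracted local subtree---not from any induction on $|V(H)|$.

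Second, your claim that one can ``branch over all ways to colour $X\cup N[X]$'' because ``$|X\cup N[X]|$ is $\Oh(\log n)$ or constant'' is false: in a $P_t$-free graph a vertex may have linear degree, so $|N[X]|$ is unbounded even for constant $|X|$. The paper's algorithm branches on a single pair $(v,x)$ precisely to avoid paying for the size of a neighbourhood.

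Third, your key ``monochromatic'' lemma has a gap. If a vertex $u$ has two neighbours coloured $p\neq q$ and $L(u)$ contains two incomparable colours $a_1,a_2\in N_H(p)\cap N_H(q)$, you conclude $(a_1,a_2,p,q)$ is a predator. But a predator requires $\{p,q\}$ to be an incomparable set, and $p$ and $q$ are colours assigned to two \emph{different} vertices, drawn from \emph{different} lists; after list cleaning each list is incomparable, but there is no reason for $p$ and $q$ to be incomparable to each other. If, say, $N_H(p)\subseteq N_H(q)$, no predator arises and your propagation step gives nothing. So the lemma as stated does not follow from predator-freeness. The paper sidesteps this entirely: it never compares two already-assigned colours, it only compares two incomparable lists (of size $\geq 2$), for which \cref{obs:branching-works} applies unconditionally.

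Finally, even the ``link-graph induction'' on $|V(H)|$ that you use to explain the extra $\log n$ factor is not something the paper needs; the preprocessing (in particular the $t$-consistency filtering over all subsets of size at most $t$) is what makes the bucket argument work, and no recursion on the target graph occurs. Your high-level intuition about ``branch, shrink, recurse'' is in the right spirit, but both the specific mechanism for exploiting predator-freeness and the shape of the recursion are different from---and, as written, weaker than---what the paper actually proves.
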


Indeed, suppose we have proven \cref{thm:ptfree-qpoly} and consider a non-predacious graph $H$, let $\cH$ be the family of its factors given by  \cref{thm:factorization}. Since $H$ is non-predacious, every $H' \in \cH$ is either a bi-arc graph, or does not contain a predator.
Thus, for each $H'$ we can solve the \lhomo{H'} problem in $P_t$-free graphs in polynomial time (in the first case) or in time  $n^{\Oh(\log^2n)}$, using \cref{thm:ptfree-qpoly} (in the second case).
Since being $P_t$-free is a hereditary property, the quasi-polynomial-time algorithm for \lhomo{H} in $P_t$-free graphs follows from \cref{thm:factorization}~\cref{factors:bottomup}.

Before we proceed to the proof, let us show one crucial property of the graphs $H$ discussed in this section.

\begin{observation}\label{obs:branching-works}
Let $H$ be a graph which does not contain a predator.
For any incomparable sets $X,Y \subseteq V(H)$, each of size at least 2, there exist $x \in X$ and $y \in Y$ such that $xy \notin E(H)$. 
\end{observation}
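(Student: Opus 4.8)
The claim in Observation~\ref{obs:branching-works} is essentially the contrapositive of the definition of a predator. The plan is to assume, for contradiction, that $X,Y \subseteq V(H)$ are incomparable sets, each of size at least $2$, with $X$ complete to $Y$, i.e., every $x \in X$ is adjacent to every $y \in Y$. I would then try to extract a predator $(a_1,a_2,b_1,b_2)$ directly, contradicting the hypothesis that $H$ contains no predator.

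**Main steps.** First I would pick any two distinct vertices $a_1, a_2 \in X$ and any two distinct vertices $b_1, b_2 \in Y$. Since $X$ is complete to $Y$, the pairs $\{a_1,a_2\}$ and $\{b_1,b_2\}$ are complete to each other, so two of the four conditions for being a predator are already met. The remaining obligation is that $\{a_1,a_2\}$ is an incomparable set and $\{b_1,b_2\}$ is an incomparable set. This is the only delicate point: $X$ being incomparable means its elements are \emph{pairwise} incomparable, so in fact \emph{any} two distinct elements $a_1,a_2 \in X$ are incomparable, and likewise for $Y$. Hence $(a_1,a_2,b_1,b_2)$ satisfies all four requirements of Definition~\ref{def:predacious}'s predator (distinctness, two incomparable pairs, complete to each other), so it is a predator in $H$ --- contradiction. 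Therefore no such complete pair $X,Y$ exists; equivalently, for incomparable $X,Y$ each of size at least $2$, there must be $x \in X$, $y \in Y$ with $xy \notin E(H)$.

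**The one subtlety to be careful about.** One needs to double-check that when $X$ and $Y$ are not disjoint, the chosen $a_1,a_2,b_1,b_2$ are still four vertices in the configuration required --- but the predator definition only requires $a_1 \neq a_2$ and $b_1 \neq b_2$ (not that all four are distinct), so there is no issue even if $X \cap Y \neq \emptyset$; we simply choose $a_1,a_2 \in X$ distinct and $b_1,b_2 \in Y$ distinct, which is possible since $|X|,|Y| \geq 2$. (If one wanted to be extra cautious, note that if some $a_i$ coincided with some $b_j$ then "$X$ complete to $Y$" would force a loop at that vertex, but this still does not break the predator definition.) So there is no real obstacle here --- the proof is a one-line unpacking of definitions, and the main "work" is just making sure the reader sees that pairwise incomparability of a set transfers to any two of its members.
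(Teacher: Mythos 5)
Your proof is correct and follows essentially the same contradiction argument as the paper: assume $X$ is complete to $Y$, pick two distinct elements from each, and observe these four vertices form a predator. The extra remarks about non-disjoint $X,Y$ are harmless but unnecessary, since the predator definition only requires $a_1\neq a_2$ and $b_1\neq b_2$.
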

\begin{proof}
For contradiction, suppose that there are two incomparable sets $X,Y$, each of size at least 2, which are complete to each other.
Let $x_1,x_2$ be distinct elements from $X$, and $y_1,y_2$ be distinct elements from $Y$.
Then $(x_1,x_2,y_1,y_2)$ is a predator.
\end{proof}

So let us now prove \cref{thm:ptfree-qpoly}.
The algorithm closely follows the algorithm for \textsc{3-Coloring} by Pilipczuk \emph{et al.}~\cite{pilipczuks-qpoly},
which is in turn inspired by the algorithm for \textsc{Max Independent Set} by Gartland and Lokshtanov~\cite{DBLP:journals/corr/abs-2005-00690}.

\begin{proof}[Proof of \cref{thm:ptfree-qpoly}]
Let $(G,L)$ be an instance of \lhomo{H}, such that $G$ is $P_t$-free.
We start with a preprocessing phase, in which we exhaustively perform the following steps, in given order.
\begin{enumerate}
\item If for some $v \in V(G)$ it holds that $L(v)=\emptyset$, then we terminate and report a no-instance.
\item If for some $v \in V(G)$, the list $L(v)$ contains two vertices $x,y \in V(H)$, such that $N_H(x) \subseteq N_H(y)$,
then we remove $x$ from $L(v)$.
\item If for some edge $uv \in E(G)$, and some $x \in L(u)$, the vertex $x$ is non-adjacent in $H$ to every $y \in L(v)$,
then we remove $x$ from $L(u)$.
\item If for some $v \in V(G)$ we have $|L(v)| = 1$, we remove $v$ from $G$. Note that by the previous step the lists of neighbors of $v$ contain only neighbors of the vertex in $L(v)$.
\item We enumerate all $S \in \binom{V(G)}{\leq t}$ and all possible $H$-colorings of $(G[S],L)$.
If for some $v \in V(G)$, some $x \in L(v)$, and some $S \in \binom{V(G)}{\leq t}$ such that $v \in S$ there is no $h:(G[S],L) \to H$ such that $h(v)=x$, we remove $x$ from $L(v)$.
\end{enumerate}
The correctness of the above steps is straightforward.
Furthermore, as $t$ and $|V(H)|$ are constant, we can perform the whole preprocessing phase in polynomial time.

We will continue calling the current instance $(G,L)$, let $n$ be its number of vertices of $G$.
The instance satisfies the following properties.
\begin{enumerate}[(P1)]
\item For every $v \in V(G)$, the set $L(v)$ is incomparable and has at least two elements. \label{algo:prop:lists}
\item For every $v \in V(G)$, every $S \in \binom{V(G)}{\leq t}$, such that $v \in S$, and every $x \in L(v)$,
there exists $h:(G[S],L) \to H$ which maps $v$ to $x$. \label{algo:prop:tconsistency}
\end{enumerate}

Now let us describe the algorithm.
If $n \leq 1$, then we report a yes-instance; recall that by property (P\ref{algo:prop:lists}) each list is non-empty.
If the instance $G$ is disconnected, we call the algorithm for each connected component independently.
We report that $(G,L)$ is a yes-instance if and only if all these calls report yes-instances.

If none of the above cases occurs, we perform branching.
We will carefully choose a \emph{branching pair} $(v,x)$, where $v \in V(G)$ and $x \in L(v)$, and branch into two possibilities.
In the first one, called the \emph{successful} branch, we call the algorithm recursively with the list of $v$ set to $\{x\}$.
This corresponds to coloring $v$ with $x$. Note that in the preprocessing phase of this call we will remove all non-neighbors of $x$ from the lists of neighbors of $v$, and then remove $v$ from the graph.
In the second branch, called the \emph{failure branch}, we call the algorithm with $x$ removed from $L(v)$. This corresponds to choosing not to color $v$ with $x$.
We report a yes-instance if at least one of the branches reports a yes-instance. 

Now let us discuss how we select a branching pair.
For each $\{u,u'\} \in \binom{V(G)}{2}$ we define the \emph{bucket} $\cB_{u,u'}$.
The elements of $\cB_{u,u'}$ are all possible pairs $(P,h)$, where $P$ is an induced $u$-$u'$-path and $h$ is a list homomorphism from $(P,L)$ to $H$. We will refer to pairs $(P,h)$ as \emph{colored paths}.

Note that since $G$ is $P_t$-free, the total size of all buckets is $\Oh(n^t)$ and they can be enumerated in polynomial time.
Furthermore, by property (P\ref{algo:prop:tconsistency}), we know that $\cB_{u,u'}$ is non-empty if and only if $u$ and $u'$ are in the same connected component of $G$. Even more, if $w$ belongs to an induced $u$-$u'$-path $P$, and $x \in L(w)$, then $\cB_{u,u'}$ contains a colored path $(P,h)$, such that $h(w)=x$.

Define
\[
\delta:=\frac{1}{2^{|V(H)|+1} \cdot t} \qquad\text{ and }\qquad \epsilon :=  \frac{1}{2^{|V(H)|+1} \cdot |V(H)|^t \cdot t } = \frac{\delta}{|V(H)|^t}.
\]

The following claim shows that we can always choose a branching pair that serves as a good branching pivot.

\begin{manualclaim}{7.1}\label{cl:branchingpair}
If $G$ is a connected $P_t$-free graph, then there is a pair $(v,x)$, where $v \in V(G)$ and $x \in L(v)$, with the following property.
There is a set $Q \subseteq \binom{V(G)}{2}$ of size at least $\delta \cdot \binom{n}{2}$,
such that for every $\{u,u'\} \in Q$ there is a subset $\cP_{u,u'} \subseteq \cB_{u,u'}$ of size at least $\epsilon \cdot |\cB_{u,u'}|$, such that for every $(P,h) \in \cP_{u,u'}$, there is $w_P \in V(P) \cap N[v]$, such that $h(w_P) \notin N_H(x)$.
\end{manualclaim}
\begin{claimproof}
For $\{u,u'\} \in \binom{V(G)}{2}$, let $\theta(u,u')$ denote the number of induced $u$-$u'$-paths in $G$.
By \cite[Lemma~5]{pilipczuks-qpoly}, there is a vertex $v \in V(G)$, such that for at least $\frac{1}{2t} \; \binom{n}{2}$ pairs $\{u,u'\} \in \binom{V(G)}{2}$ and for at least $\frac{1}{2t} \;  \theta(u,u')$ induced $u$-$u'$-paths $P$, the set $N[v]$ intersects $V(P)$.

Since the number of distinct $H$-lists is at most $2^{|V(H)|}$, we observe that by the pigeonhole principle there is a list $L' \subseteq V(H)$
and a subset $Q \subseteq \binom{V(H)}{2}$ of size at least $\frac{1}{2^{|V(H)|+1} \cdot t} \; \binom{n}{2} = \delta \cdot \binom{n}{2}$, such that for every $\{u,u'\} \in Q$
there exists a set $\cP_{u,u'}$ of at least  $\delta \cdot \theta(u,u')$ induced $u$-$u'$-paths, with the property that for every $P \in \cP_{u,u'}$ there exists $w_P \in N[v] \cap V(P)$, such that $L(w_P) = L'$.

By property (P\ref{algo:prop:lists}) we know that each of $L(v)$ and $L'$ is an incomparable set with at least two elements.
Thus by \cref{obs:branching-works} we know that there are $x \in L(v)$ and $y \in L'$, which are non-adjacent in $H$.

Let us argue that the pair $(v,x)$ satisfies the desired conditions.
Fix some $\{u,u'\} \in Q$.
As every induced $u$-$u'$ path has at most $t-1$ elements, we have that $|\cB_{u,u'}| \leq |V(H)|^t \cdot \theta(u,u')$.
On the other hand, by property (P\ref{algo:prop:tconsistency}) for every $P \in \cP_{u,u'}$ there exists a homomorphism $h: (P,L) \to H$ such that $h(w_P) = y \notin N_H(x)$. 
So, summing up, we obtain that the number of such pairs $(P,h) \in \cB_{u,u'}$ is at least
\[
|\cP_{u,u'}| \geq  \delta \cdot \theta(u,u') \geq  \frac{\delta}{|V(H)|^t} \cdot |\cB_{u,u'}|=\epsilon \cdot  |\cB_{u,u'}|.
\]
\end{claimproof}

Consider the successful branch for the branching pair $(v,x)$ given by \cref{cl:branchingpair}.
We will use the notation from the statement of the claim.
For some $\{u,u'\} \in Q$, let $(P,h)$ be a colored path in $\cP_{u,u'}$, and let $w_P$ be as in the claim.
Consider the preprocessing phase of the current call.
If $w_P  = v$, then $w_P$ is removed from the graph, so $(P,h)$ will no longer appear in the bucket of $\{u,u'\}$.
Similarly, if $w_P \neq v$, then we remove $h(w_P)$ from $L(w_P)$, so $(P,h)$ will not appear in the bucket of $\{u,u'\}$.
Thus, informally speaking, when we branch using the pair $(v,x)$, in the  successful branch we remove an $\epsilon$-fraction of elements in a $\delta$-fraction of buckets.

Note that in each recursive call the total size of lists is reduced, so the algorithm terminates.
It is also clear that it always returns a correct value. So let us argue that the complexity is indeed bounded by $n^{\Oh(\log^2n)}$.
The analysis is essentially the same as the one of the algorithm by Pilipczuk~\emph{et al.}~\cite{pilipczuks-qpoly}.
We present it for the sake of completeness.

Let $\cT$ be the recursion tree of the algorithm called for the instance $(G,L)$.
The nodes of $\cT$ correspond to calls at instances $(G',L')$, where $G'$ is an induced subgraph of $G$, and for each $v \in V(G')$ it holds that $L'(v) \subseteq L(v)$.
For a node of $\cT$, corresponding to a call at instance $(G',L')$, its \emph{local subtree} consists of all descendant calls where the instance graph has at least $0.99 |V(G')|$ vertices.
We find a partition $\Pi$ of nodes of $\cT$ into local subtrees in a greedy way. We start with $\Pi = \emptyset$, and while there are still some nodes that are not covered by $\Pi$, we include in $\Pi$ the local subtree of such a node, which is closest to the root.

Clearly each path from the root to a leaf of $\cT$ intersects $\Oh(\log n)$ elements of $\Pi$.
Consider a local subtree $\cT'$, whose root corresponds to the call at an instance $(G',L')$ with $n'$ vertices.
We need to argue that $\cT'$ has $\Oh(n'^{\log n'})=\Oh(n^{\log n})$ leaves.

We will now mark some edges of $\cT'$. Consider a call in $\cT'$ at an instance $(G'',L'')$.
\begin{enumerate}
\item If $G''$ is disconnected, there is at most one child call that belongs to $\cT'$: its instance is the component of $G''$ with at least $0.99 |V(G'')|$ vertices. If such a child call exists, we mark the edge to it.
\item If $G''$ is connected, then we mark the edge to the call in the failure branch (if it belongs to $\cT'$).
\end{enumerate}

Every node in $\cT'$ has at most one marked edge to a child.
Let $\cT''$ be obtained from $\cT'$ contracting all marked edges.
Every node of $\cT''$ has $\Oh(n')$ children and each edge of $\cT''$ corresponds to a successful branch in some call in $\cT'$.
Now it is sufficient to argue that the depth of $\cT''$ is $\Oh(\log n') = \Oh(\log n)$.

For a call at instance $(G'',L'')$ in $\cT'$ we define the potential:
\[
\mu(G'',L''):= - \sum_{\{u,u'\} \in \binom{V(G'')}{2}} \log_{1-\epsilon} (1+|\cB_{u,u'}|).
\]

At the root call of $\cT'$ we have $\mu(G',L') = \Oh((n')^2 \log n')$, as the size of each bucket is at most $(n')^{t-1} \cdot |V(H)|^{t-1}$.
Every successful branch at a call on an instance $(G'',L'')$ decreases the potential $\mu$ by at least
\[
\delta \cdot \binom{|V(G'')|}{2} \geq \delta \cdot \binom{\lceil 0.99n' \rceil}{2} \geq 0.9\delta \cdot \binom{n'}{2}.
\]
Since $\mu$ is non-negative, it follows that the depth of $\cT''$ is bounded by $\Oh(\log n') = \Oh(\log n)$.
\end{proof}

\newpage
\subsection{Hardness results for $P_t$-free graphs}\label{sec:ptfree-hard}
\label{sec:ptfree-hardness}

Let $H$ be a predacious graph and let $\cH$ be the family of factors of $H$ given by \cref{thm:factorization}.
Since $H$ is predacious, there is some factor $H' \in \cH$, which is non-bi-arc and contains a predator.
By \cref{thm:factorization}~\cref{factors:types} we observe that there are two possible cases:
\begin{description}
\item[Case A.] $H'$ is a strong split graph as in \cref{thm:factorization}~\cref{factors:types:strongsplit} (it can be  verified that every non-bi-arc strong split graph already contains a predator, but we will not use it explicitly),
\item[Case B.] $(H')^*$ is an undecomposable induced subgraph of $H^*$.
\end{description}
We will deal with these cases separately.

\subsubsection{Case A: Strong split target graphs}
We show that for strong split graphs $H'$ the \lhomo{H'} problem remains hard even if the instance is a split graph, i.e., a graph whose vertex set can be partitioned into a clique and an independent set. Split graphs can also be characterized in terms of forbidden subgraphs -- they are precisely $\{C_4,C_5,2P_2\}$-free graphs.

\begin{theorem} \label{thm:ssplit-hard}
Let $H'$ be a fixed non-bi-arc strong split graph.
Then the $\lhomo{H'}$ problem cannot be solved in time $2^{o(n)}$ in $n$-vertex split graphs, unless the ETH fails.
\end{theorem}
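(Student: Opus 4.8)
The plan is to obtain the lower bound by a linear reduction from the list homomorphism problem to a fixed bipartite non-bi-arc graph, which is ETH-hard on bipartite instances by Feder, Hell and Huang. Write $V(H') = P \cup B$ with $H'[P]$ a reflexive clique and $B$ independent. Consider the bipartite graph $J$ obtained from the associated bipartite graph $(H')^*$ by keeping only the vertices $P' = \{p' : p \in P\}$ on one side and all of $V'' = \{a'' : a \in V(H')\}$ on the other; thus $J$ has parts $(P', V'')$ with $p' a'' \in E(J)$ exactly when $pa \in E(H')$, and in particular $P'$ is complete to $P'' := \{p'' : p \in P\}$ since $P$ is a reflexive clique. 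The key structural point --- and the main obstacle, discussed below --- is that $J$ is not a bi-arc graph whenever $H'$ is not. Granting this, by the classification of Feder, Hell and Huang together with the standard fact that the underlying hardness reduction is linear, $\lhomo{J}$ admits no $2^{o(n)}$-time algorithm on bipartite $n$-vertex instances unless the ETH fails.

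The reduction from $\lhomo{J}$ on bipartite instances to $\lhomo{H'}$ on split graphs is then short. Let $(G_0, L_0)$ be an instance of $\lhomo{J}$; we may assume $G_0$ is connected bipartite with bipartition $(X, Y)$, and, swapping the two sides of $J$'s bipartition if necessary (this choice is forced up to a swap on a connected bipartite instance), that $L_0(x) \subseteq P'$ for $x \in X$ and $L_0(y) \subseteq V''$ for $y \in Y$. Identify $P'$ with $P \subseteq V(H')$ and $V''$ with $V(H')$, so that $L_0$ becomes a list assignment with values in $V(H')$. Let $G$ be $G_0$ with all edges inside $X$ added; then $G$ is a split graph (clique $X$, independent set $Y$), with $|V(G)| = |V(G_0)|$, constructed in polynomial time. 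Now $(G_0, L_0) \to J$ if and only if $(G, L_0) \to H'$: a list homomorphism $h \colon (G_0, L_0) \to J$ has $h(X) \subseteq P$ and $h(Y) \subseteq V(H')$, maps the edges of $G_0$ (all between $X$ and $Y$) to edges of $H'$, and maps the new edges inside $X$ into the reflexive clique $P$, hence to edges or loops of $H'$; conversely, any $h \colon (G, L_0) \to H'$ satisfies $h(X) \subseteq P$ and $h(Y) \subseteq V(H')$ by the lists, and restricting to the edges of $G_0 \subseteq G$ shows these are mapped to edges of $H'$ incident to $P$, i.e. to edges of $J$. Consequently, a $2^{o(n)}$-time algorithm for $\lhomo{H'}$ on $n$-vertex split graphs would solve $\lhomo{J}$ on bipartite instances in subexponential time, contradicting the ETH.

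It remains to address the one nonroutine ingredient: $J$ is not a bi-arc graph if $H'$ is not. One direction is immediate --- $J$ is an induced subgraph of $(H')^*$ and bi-arc graphs are hereditary, so $(H')^*$ bi-arc would force $J$ bi-arc --- but what is needed is the converse: if $J$ is a bi-arc graph, then so is $H'$ (equivalently $(H')^*$). This is a structural assertion about strong split graphs, namely that the bi-arc property of $H'$ is governed by the incidence pattern of $B$ into $P$, which is precisely what $J$ records. Using the mirror automorphism $a' \leftrightarrow a''$ of $(H')^*$, one sees that $(H')^*$ is obtained by gluing two isomorphic copies of $J$ along the complete-bipartite block on $P'$ versus $P''$ (whose bipartite complement is a disjoint union of two cliques); so the task is to promote a bi-arc --- equivalently, circular-arc-complement --- representation of $J$ to one of $(H')^*$, which one does by interleaving the two copies around the circle using this two-clique structure. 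I expect this representation step, together with the bookkeeping that the overall reduction is linear so that the ETH bound transfers, to be where the real work lies; everything else is the routine ``make one side of a bipartite instance into a clique'' argument above. Finally, since split graphs are $\{C_4, C_5, 2K_2\}$-free and hence $P_5$-free, the theorem also yields ETH-hardness of $\lhomo{H'}$ in $P_5$-free graphs.
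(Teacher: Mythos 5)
Your proposal diverges substantially from the paper's argument, and the divergence is where the trouble lies. The paper does not reduce from a new bipartite target $J$ at all. Instead it takes an \emph{arbitrary} instance $(G,L)$ of $\lhomo{H'}$ (ETH-hard because $H'$ is non-bi-arc) and shows it is equivalent to a split-graph instance of the same size, as follows. After the standard normalization that each list is an incomparable set in $H'$, observe that for every $p\in P$ and $b\in B$ one has $N_{H'}(b)\subseteq P\subseteq N_{H'}(p)$, since $B$ is independent and $P$ is a reflexive clique; hence no list contains a vertex of $P$ and a vertex of $B$ simultaneously. So $V(G)$ splits into $X$ (lists in $P$) and $Y$ (lists in $B$). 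If $Y$ is not independent we have a no-instance; otherwise, turning $X$ into a clique changes nothing, because any homomorphism maps $X$ into the reflexive clique $P$. The result is a split graph, and the hardness of $\lhomo{H'}$ transfers with no new structural lemma needed.

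The concrete gap in your approach is the step you flag yourself: you need \textquotedblleft if $J$ is a bi-arc graph then so is $H'$\textquotedblright\ (equivalently, $H'$ non-bi-arc $\Rightarrow J$ non-bi-arc), and you do not prove it. This is a real piece of work: $J$ drops the $B'$ side of $(H')^*$ entirely, so one has to argue that a circular-arc-complement representation of $J$ can be symmetrized and extended over the discarded $B'$ vertices. This is not obviously routine, and you give only a sketch of intent (\textquotedblleft interleaving the two copies around the circle\textquotedblright). You also implicitly use that the Feder--Hell--Huang hardness for the bipartite non-bi-arc target $J$ is already hard on \emph{bipartite} instances with a \emph{linear} blow-up; this is true and standard, but it needs to be cited and checked rather than asserted. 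Your reduction from $\lhomo{J}$ on bipartite instances to $\lhomo{H'}$ on split graphs is correct as written, but it is doing extra work to get to a place the paper reaches in one step: you should instead notice that the normalization to incomparable lists already forces every vertex of an arbitrary $\lhomo{H'}$-instance to have its list entirely inside $P$ or entirely inside $B$, after which the \textquotedblleft make the $P$-side a clique\textquotedblright\ trick applies directly, with no detour through a new target graph and no unproven bi-arc claim.
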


\begin{proof}
Let $P$ be the set of vertices in $H'$ that have loops, and let $B$ be the set of vertices of $H'$ without loops.
Consider an instance $(G,L)$ of \lhomo{H}.
Recall that without loss of generality we can assume that each list $L(v)$ is an incomparable set.
As for every $p \in P$ and $b \in B$ it holds that $N_{H'}(b) \subseteq N_{H'}(p)$, no vertex in $G$ has both a vertex from $P$ and a vertex from $B$ in its list. Since every list is non-empty, we can partition the vertex set of $V(G)$ into two sets:
\[
X:= \{v \in V(G) ~|~ L(v) \cap P \neq \emptyset\} \quad \text{ and } \quad
Y:= \{v \in V(G) ~|~ L(v) \cap B \neq \emptyset\}.
\]
Furthermore, as $B$ is independent, without loss of generality we can assume that $Y$ is independent, as otherwise $(G,L)$ is a no-instance.
Let $G'$ be obtained from $G$ by turning $X$ into a clique, i.e., we add all edges with both endvertices in $X$ (except for loops).
It is straightforward to verify that $(G,L) \to H'$ if and only if $(G',L) \to H'$.
As $V(G')$ is partitioned into a clique $X$ and an independent set $Y$, the theorem follows.
\end{proof}

Now we can show the main result of this subsection.

\begin{proof}[Proof of \cref{thm:ptfree-main}~b) in Case A]
Let $H$ be a graph satisfying the assumptions of the case, and let $H',H''$ be as in \cref{thm:factorization}~\cref{factors:types:strongsplit}.
Since $H''$ is an induced subgraph of $H'$, it is also a strong split graph, so by \cref{thm:ssplit-hard} we know that \lhomo{H''} admits no subexponential-time algorithm in split graphs.
As $H''$ is an induced subgraph of $H$, every instance of \lhomo{H''} can be seen as an instance of \lhomo{H}, and we are done.
\end{proof}
\subsubsection{Case B: Target graphs with the associated bipartite graph undecomposable}

We prove \cref{thm:ptfree-main}~b) in Case B in two steps. First we consider a special case that $H$ is a bipartite, undecomposable, non-bi-arc graph, which contains a predator. Then we lift this result to the general case.

\paragraph{Bipartite target graphs.}
Recall that the only bipartite predator is an incomparable $C_4$. In this section we prove the following.

\begin{theorem} \label{thm:pt-free-bip-hard-undecomp}
Let $H$ be a fixed, bipartite, non-bi-arc, undecomposable graph, which contains an incomparable $C_4$.
Then there exists $t$, such that $\lhomo{H}$ cannot be solved in time $2^{o(n)}$ in $n$-vertex $P_t$-free graphs, unless the ETH fails.
\end{theorem}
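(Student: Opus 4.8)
The plan is to reduce from a problem known to require exponential time under the ETH --- a natural candidate is \lhomo{H} (or ordinary \textsc{Hom}) in graphs of bounded maximum degree, or $3$-\textsc{Sat} with bounded occurrences --- to \lhomo{H} restricted to $P_t$-free instances for a suitable constant $t=t(H)$. The high-level strategy follows the standard template for lower bounds of this type (as in Okrasa--Rz\k{a}\.zewski~\cite{DBLP:journals/jcss/OkrasaR20} and Chudnovsky \emph{et al.}~\cite{DBLP:conf/esa/ChudnovskyHRSZ19}): given a hard instance, we first route all ``interaction'' through a small number of long induced paths, or more precisely we build the instance out of constant-length pieces connected in a pattern whose induced-path length is globally bounded, and then we simulate the needed Boolean logic (edges/clauses of the source instance) using gadgets whose internal structure is controlled. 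The incomparable $C_4$ inside $H$ is exactly what lets us build a ``choice'' gadget: if $(a_1,a_2,b_1,b_2)$ is an incomparable $C_4$, then a vertex with list $\{a_1,a_2\}$ adjacent to a vertex with list $\{b_1,b_2\}$ has all four combinations available, which is the base case of a wire carrying one bit. Because $a_1,a_2$ (and $b_1,b_2$) are incomparable, one can further find distinguishing neighbors that allow us to ``read'' the bit and feed it into consistency checks.

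The key steps, in order: (i) Invoke the structural analysis of bipartite undecomposable non-bi-arc graphs from Okrasa \emph{et al.}~\cite{FullComplexity} to extract the concrete gadgets we need --- in particular, a pair of vertices carrying an independent bit, an equality/negation gadget allowing us to propagate and copy bits along a path, and an ``OR''-type or ``NAND''-type gadget that enforces a clause; the claim that $H$ is non-bi-arc and undecomposable is precisely the hypothesis under which that paper guarantees enough structure to realize these gadgets. (ii) Take as source a sparse ETH-hard problem, e.g.\ $3$-\textsc{Sat} with each variable in $\Oh(1)$ clauses (equivalently, \lhomo{H_0} on bounded-degree graphs for an appropriate hard $H_0$); such instances with $m$ clauses have no $2^{o(m)}$ algorithm under the ETH. (iii) Build the reduction graph $G$: create, for each variable, a short ``distributor'' path whose vertices carry the variable's bit via copies of the bit gadget glued by equality gadgets; for each clause, attach a constant-size clause gadget to the three relevant variable paths. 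The crucial design constraint is that the whole construction, although it may have $\Theta(m)$ vertices, is $P_t$-free for a constant $t$: we arrange the variable distributors and the clause gadgets so that any induced path can pass through only a bounded number of gadgets before being ``blocked,'' e.g.\ by making the clause gadgets into near-cliques / using the bipartite structure of $H$ to bound how far an induced path can travel, and routing the global structure through a constant-diameter backbone. (iv) Prove correctness: a satisfying assignment yields a list homomorphism by setting each variable path to the corresponding bit and checking each clause gadget is satisfiable; conversely any list homomorphism, restricted to the bit gadgets, induces a consistent assignment (using that the equality gadgets force all copies of a variable to agree, and possibly the remapping-to-supersets normalization from \cref{sec:prelims} to clean up lists), and each clause gadget being colorable forces the clause to be satisfied.

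The main obstacle --- and the step where the real work lies --- is (iii): simultaneously (a) realizing the Boolean gadgets only from the structure guaranteed by undecomposability and the presence of an incomparable $C_4$, and (b) keeping the resulting graph $P_t$-free for a \emph{constant} $t$ independent of the instance size. These two requirements pull against each other, since propagating a bit across many clauses naively creates long induced paths; the resolution is to not propagate along a single long path but to use a branching/backbone structure of constant depth, or to exploit that in a bipartite host an induced path alternating between list-types cannot be too long before a forced collision --- this is exactly the kind of argument that the hardness reductions in~\cite{DBLP:conf/esa/ChudnovskyHRSZ19,DBLP:journals/jcss/OkrasaR20} use, and adapting it to the list-homomorphism (rather than weighted) setting, so that no vertex- or edge-weights are needed, is the technical heart of the proof. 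A secondary obstacle is purely bookkeeping: verifying that the gadgets from~\cite{FullComplexity}, which are stated for the bipartite undecomposable graph in isolation, compose correctly and that their lists remain incomparable sets after gluing, so that the preprocessing normalization does not destroy them.
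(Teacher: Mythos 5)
Your high-level blueprint --- reduce from 3-\textsc{Sat}, extract ``bit,'' ``equality/translation,'' and ``OR''-type gadgets from the structural theory of bipartite undecomposable non-bi-arc graphs in~\cite{FullComplexity}, and then argue $P_t$-freeness of the resulting instance --- is indeed the shape of the paper's proof. But the single idea that actually makes step (iii) work is missing from your proposal, and it is not a bookkeeping detail: it is the heart of the argument. You correctly identify that ``propagating a bit across many clauses naively creates long induced paths'' and that the two requirements ``pull against each other,'' and you gesture at several possible fixes (``near-cliques,'' ``constant-diameter backbone,'' ``exploit that in a bipartite host an induced path cannot be too long''), but you never commit to one, and the concrete construction you do commit to --- a per-variable ``distributor path'' carrying the variable's bit to each of its occurrences --- is precisely the construction that would produce unbounded induced paths, and you offer no mechanism to block them.

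The paper's resolution is simple and quite different from distributor paths: introduce \emph{one} vertex $v_j$ per variable, \emph{one} vertex $u_i$ per literal occurrence, and place \emph{all} of $V=\{v_1,\dots,v_N\}$ and $U=\{u_1,\dots,u_{3M}\}$ in a single complete bipartite graph, with $L(v_j)=\{a_1,a_2\}$ and $L(u_i)=\{b_1,b_2\}$, where $(a_1,a_2,b_1,b_2)$ is the incomparable $C_4$. Consistency between $v_j$ and its occurrence $u_i$ is enforced by a constant-size $(a_1/a_2\to b_1/b_2)$- or $(a_1/a_2\to b_2/b_1)$-gadget (\cref{lem:occur-gadget}), and each clause $C_i$ gets a constant-size $\ork{3}(b_1,b_2)$-gadget attached to its three literal vertices, built from the $\ork3(\alpha',\beta')$-gadget and the distinguisher paths of \cref{thm:distinguisher}. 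The biclique is what caps the induced-path length: any induced path in $G_\Phi$ decomposes into bounded ``segments'' inside gadgets, separated by vertices of $V\cup U$; but on a biclique no two vertices of $V$ can appear consecutively, no three of $U$ can appear consecutively, and four or more vertices of $V\cup U$ on an induced path already force a chord. With gadget sizes bounded by a constant $t'$ depending only on $H$, this gives $P_{4t'+4}$-freeness. Note also that with the biclique in place the paper reduces from plain 3-\textsc{Sat} (giving $\Oh(N+M)$ vertices and the ETH bound via sparsification), so there is no need to assume bounded occurrence as your step (ii) suggests.

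So: the gadget-extraction part of your plan is sound and is what~\cite{FullComplexity} provides, and your worry about list compatibility after gluing is legitimate but routine. What is genuinely absent is the biclique trick, without which your step (iii) does not close; as written, the distributor-path construction would have to be abandoned and replaced, not merely tightened.
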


Before we proceed to the proof of \cref{thm:pt-free-bip-hard-undecomp}, we need to introduce some tools which we will need. For a pair of vertices $(a,b)$ of $V(H)$, an \emph{$\ork{3}(a,b)$-gadget} is an instance $(F,L)$ of \lhomo{H} with $o_1,o_2,o_3 \in V(F)$, such that $L(o_1)=L(o_2)=L(o_3)=\{a,b\}$,~and
\[
\{f(o_1)f(o_2)f(o_3) ~|~  f: (F,L) \to H \}  =\{aaa,aab,aba,baa,abb,bab,bba\}.
\]
In other words, the only triple that cannot be extended to a list homomorphism of $(F,L)$ is $bbb$.

For an incomparable set of vertices $S$, such that $|S| \geq 2$, a \emph{$\rneq(S)$-gadget} is an instance $(F,L)$ of \lhomo{H} with $s_1,s_2 \in V(F)$, such that $L(s_1)=L(s_2)=S$,~and
\[
\{f(s_1)f(s_2) ~|~  f: (F,L) \to H \}  =\{uv ~|~ u,v \in S, u \neq v\}.
\]

Last, let $\{a_1,a_2\}$ and $\{b_1,b_2\}$ be incomparable sets of vertices.
An \emph{$(a_1/a_2 \to b_1/b_2)$-gadget} is an instance $(F,L)$ of \lhomo{H} with $s_1,s_2 \in V(F)$, such that $L(s_1)=\{a_1,a_2\}$, $L(s_2)=\{b_1,b_2\}$,~and
\[
\{f(s_1)f(s_2) ~|~  f: (F,L) \to H \}  =\{a_1b_1,a_2b_2\}.
\]
Note that a $(a/b \to b/a)$-gadget is a special case of $\rneq(S)$-gadget, for $S=\{a,b\}$.

The vertices $o_1,o_2,o_3,s_1,s_2$ in the definitions above are called \emph{interface vertices}.

The following structural result is proven by Okrasa \emph{et al.}~\cite[Lemma 19 and Corollary 20]{FullerComplexity}. 

\begin{lemma}[Okrasa \emph{et al.}~\cite{FullerComplexity}]
\label{thm:distinguisher} 
Let $H$ be a connected, bipartite, non-bi-arc, undecomposable graph with bipartition classes $X$ and $Y$. 
Then there exist two incomparable sets of vertices $\{\alpha,\beta\} \subseteq X$ and $\{\alpha',\beta'\} \subseteq Y$, such that $\alpha\alpha', \beta\beta' \in E(H)$, $\alpha\beta', \beta\alpha' \notin E(H)$, and the following conditions hold.

\begin{enumerate}[(1)]
\item For any incomparable two-element set $\{a,b\} \subseteq V(H)$, and for any $\{\gamma,\delta\} \in \{\{\alpha, \beta\},\{\alpha', \beta'\}\}$, such that $\{a,b,\gamma,\delta\}$ is contained in one bipartition class, there exist a path $D^{\gamma/\delta}_{a/b}$ with endvertices $x,y$ and $H$-lists $L$, such that:
\begin{enumerate}[(D1)]
\item $L(x) = \{a,b\}$ and $L(y)=\{\gamma,\delta\}$,
\item there is a list homomorphism $h_a:(D^{\gamma/\delta}_{a/b},L) \to H$, such that $h_a(x)=a$ and $h_a(y)=\gamma$,
\item there is a list homomorphism $h_b: (D^{\gamma/\delta}_{a/b},L) \to H$, such that $h_b(x)=b$ and $h_b(y)=\delta$,
\item there is no list homomorphism $h: (D^{\gamma/\delta}_{a/b},L) \to H$, such that $h(x)=a$ and $h(y)=\delta$.
\end{enumerate}
\item There exist an $\ork{3}(\alpha,\beta)$-gadget (resp. an $\ork{3}(\alpha',\beta')$-gadget) $(F,L)$ such that $F=S_{a,b,c}$ for some $a,b,c>0$.
\end{enumerate}
\end{lemma}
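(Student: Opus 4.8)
The statement to prove is \cref{thm:distinguisher} (Okrasa \emph{et al.}~\cite{FullerComplexity}): for a connected, bipartite, non-bi-arc, undecomposable graph $H$ with bipartition classes $X,Y$, there exist incomparable pairs $\{\alpha,\beta\}\subseteq X$ and $\{\alpha',\beta'\}\subseteq Y$ with $\alpha\alpha',\beta\beta'\in E(H)$ and $\alpha\beta',\beta\alpha'\notin E(H)$, admitting the ``distinguisher paths'' $D^{\gamma/\delta}_{a/b}$ and the $S_{a,b,c}$-shaped $\ork 3$-gadgets.

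\textbf{Locating the base pair $\alpha,\beta,\alpha',\beta'$.} The plan is to first extract the pair of incomparable edges forming an induced $C_4$-like pattern. Since $H$ is non-bi-arc, its complement (equivalently the complement of $H=H^*$ as it is already bipartite) is not a circular-arc graph; by the structural characterizations underlying the Feder--Hell--Huang dichotomy, a non-bi-arc bipartite graph must contain a specific small obstruction, and in the undecomposable case one can pin down two incomparable vertices $\alpha,\beta$ on one side and $\alpha',\beta'$ on the other with exactly the adjacency pattern $\alpha\alpha',\beta\beta'\in E(H)$, $\alpha\beta',\beta\alpha'\notin E(H)$ (an ``incomparable $C_4$'' up to choosing which diagonals are present). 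The incomparability gives witnesses: a vertex $p\in N(\alpha)\setminus N(\beta)$, a vertex $q\in N(\beta)\setminus N(\alpha)$, and symmetrically on the $Y$ side. The main work is to show these witnesses can be chosen so that walks between them are \emph{forced}, which is where undecomposability is essential.

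\textbf{Building the distinguisher paths $D^{\gamma/\delta}_{a/b}$.} Given an arbitrary incomparable two-element set $\{a,b\}$ in the same bipartition class as $\{\gamma,\delta\}\in\{\{\alpha,\beta\},\{\alpha',\beta'\}\}$, I would construct the path by concatenating three pieces: a gadget that ``transports'' the alternative $a/b$ to a canonical incomparable pair using walks that exploit the distinguishing neighbors $p,q$ of $a$ resp.\ $b$; a central ``core'' path living inside a connected region of $H$ forced by undecomposability (the absence of a bipartite decomposition $(D,N,R)$ is exactly what guarantees that no small separator $N$ can absorb the ambiguity, so a suitable induced path propagates the constraint); and a final gadget transporting the canonical pair to $\{\gamma,\delta\}$. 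Properties (D1)--(D3) are obtained by exhibiting the two homomorphisms $h_a,h_b$ explicitly as the ``parallel'' walks; property (D4) — that $h(x)=a$ cannot be completed with $h(y)=\delta$ — is the crux and follows from the non-edge $\alpha\beta'$ (or its analogue) together with the forcing provided by undecomposability: any attempted list homomorphism sending $x\mapsto a$ is pushed along a path of narrowing lists until it must use a non-edge. I would prove (D4) by a parity/connectivity argument on the ``list trace'' of a hypothetical homomorphism.

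\textbf{The $S_{a,b,c}$-shaped $\ork 3$-gadget and the main obstacle.} For part (2), the plan is to take three copies of the distinguisher path $D^{\alpha/\beta}_{\alpha/\beta}$ (the self-referential case, which reduces to a path that forbids the single pattern $\beta$ at the interface while allowing $\alpha$ freely, i.e.\ an implication gadget), and glue their non-interface endpoints at a single common vertex $c$ whose list is forced to a value that is adjacent to $\alpha'$ (say) but not $\beta'$ — this makes the glued structure behave as ``not all three interfaces are $\beta$''. Choosing the three path lengths so the result is literally $S_{a,b,c}$ for positive $a,b,c$ is a matter of padding each branch with neutral segments of the right parity; the main obstacle will be ensuring that the central vertex $c$ genuinely implements the $\mathrm{NAND}$ (i.e.\ exactly $bbb$ is excluded and \emph{all seven} other triples survive) rather than something stronger like ``at most one $\beta$''. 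This requires that the three branches do not interfere at $c$ beyond the intended constraint, which in turn needs the forcing along each branch to be ``one-sided'': fixing an interface to $\alpha$ must not over-constrain $c$. I expect verifying this non-interference — that the seven good triples all extend — to be the technically heaviest step, handled by explicitly describing a homomorphism for each of the seven cases using the freedom granted by properties (D2) and the fact that $\alpha$ is a ``permissive'' value at $c$.
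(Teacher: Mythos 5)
The paper you are working from does \emph{not} prove this statement: it imports it verbatim as ``Lemma~19 and Corollary~20'' of Okrasa et al.~\cite{FullerComplexity}, so there is no in-paper proof to compare your attempt against. Judged on its own terms, your sketch has a genuine gap in the construction of the $\ork{3}$ gadget, which is exactly the part you flag as ``the technically heaviest step.''

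Your plan is to take three copies of the implication path $D^{\alpha/\beta}_{\alpha/\beta}$ and glue their $y$-endpoints to a central vertex $c$ whose list forces a value adjacent to one of $\alpha,\beta$ but not the other. Walk through what this actually enforces. The lemma only grants you (D2), (D3), and (D4): $\alpha\to\alpha$ and $\beta\to\beta$ are realizable, $\alpha\to\beta$ is forbidden, and $\beta\to\alpha$ is left unconstrained. If $c$ is a forced private neighbor of $\alpha$, all three $y$'s are forced to $\alpha$; then either $\beta\to\alpha$ is realizable along the path (in which case every interface can independently be $\alpha$ or $\beta$ and nothing is excluded — all $8$ triples survive), or $\beta\to\alpha$ is blocked (in which case every interface is forced to $\alpha$ and only $\alpha\alpha\alpha$ survives). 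Symmetrically, if $c$ permits a value adjacent to both $\alpha$ and $\beta$, the center stops constraining anything. In no branch of this case analysis does the center selectively kill \emph{exactly} the triple $\beta\beta\beta$: the constraint at $c$ acts uniformly on all three branches, and a single gluing vertex has no mechanism to couple the three interfaces with the asymmetry a $\mathrm{NAND}$ requires. This is not a detail to be ``handled by explicitly describing a homomorphism for each of the seven cases''; as described the construction cannot produce the required behavior, and a different idea is needed.

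A secondary gap: you assert that a non-bi-arc, undecomposable, bipartite $H$ must contain two vertices $\alpha,\beta\in X$ and $\alpha',\beta'\in Y$ with the ``matching'' pattern $\alpha\alpha',\beta\beta'\in E(H)$, $\alpha\beta',\beta\alpha'\notin E(H)$, and incomparability on both sides. Note this is not the paper's ``incomparable $C_4$'' (which requires $\{\alpha,\beta\}$ and $\{\alpha',\beta'\}$ to be \emph{complete} to each other), so the obstruction you need is a different one, and its existence for every non-bi-arc, undecomposable, bipartite $H$ is claimed but not argued. (Also, minor: you write $H=H^*$ for bipartite $H$, but in fact $H^*$ is two disjoint copies of $H$; the intended statement ``$H$ is bi-arc iff $\overline{H}$ is circular-arc'' is correct, the justification you gave is not.)
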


\ko{tego chyba już potrzebujemy tylko w subdivided-claw}
\begin{lemma}[\cite{FullComplexity,FullerComplexity}]\label{lem:edge-gadget}
Let $H$ be a connected, bipartite, non-bi-arc, undecomposable graph.
Let $S \subseteq V(H)$ be an incomparable set of vertices contained in one bipartition class of $H$.
Then there exists a $\rneq(S)$-gadget.
\end{lemma}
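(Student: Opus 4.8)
\textbf{Proof plan for \cref{lem:edge-gadget}.}
The plan is to build the $\rneq(S)$-gadget by decomposing the task according to which bipartition class $S$ lies in and by reducing the general incomparable set $S$ to the two-element distinguishing sets $\{\alpha,\beta\}$ and $\{\alpha',\beta'\}$ supplied by \cref{thm:distinguisher}. First I would fix notation: let $X,Y$ be the bipartition classes, and assume without loss of generality that $S \subseteq X$ (the case $S \subseteq Y$ is symmetric, swapping the roles of $\{\alpha,\beta\}$ and $\{\alpha',\beta'\}$). Write $S = \{s_1,\dots,s_m\}$ with $m \geq 2$. The goal is an instance $(F,L)$ with two interface vertices carrying list $S$ whose realizable pairs of values are exactly the ordered pairs of distinct elements of $S$.

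The key construction step is a ``forbidden-equality'' gadget: for each pair $\{s_i,s_j\} \subseteq S$ I would construct a small gadget $E_{ij}$ with two interface vertices $p,q$ of list $S$ such that $(p,q)$ can realize every ordered pair from $S \times S$ \emph{except} $(s_i,s_i)$ and $(s_j,s_j)$ — or, more conveniently, a gadget that forbids exactly the single diagonal pair $(s_i,s_i)$ and nothing else. The plan is to obtain such a ``diagonal-forbidding'' gadget from the paths $D^{\gamma/\delta}_{a/b}$ of \cref{thm:distinguisher}: take the incomparable two-element set $\{a,b\} = \{s_i,s_j\}$ (it is incomparable since $S$ is an incomparable set) and the distinguisher pair $\{\gamma,\delta\} = \{\alpha,\beta\}$ (legal since $\{s_i,s_j,\alpha,\beta\} \subseteq X$); glue a copy of $D^{\alpha/\beta}_{s_i/s_j}$ and a copy of $D^{\alpha/\beta}_{s_j/s_i}$ at their $\{\alpha,\beta\}$-endpoints and at their $\{s_i,s_j\}$-restricted endpoints appropriately, so that the composed constraint forces: if $p = s_i$ then $q \neq s_i$, i.e.\ it forbids only $(s_i,s_i)$. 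Here I would use properties (D1)--(D4): (D4) gives the one-sided forbidding behaviour, (D2)--(D3) guarantee that all the ``good'' pairs survive. A subtlety is that the endpoint of $D^{\alpha/\beta}_{s_i/s_j}$ carries list $\{s_i,s_j\}$, not the full set $S$; to promote this to list $S$ I would attach, at that endpoint, a pendant vertex (or short path into $H$) whose lists relax the constraint so the interface can take any value of $S$ while the internal part still only constrains the $s_i$-versus-$s_j$ behaviour — this is a routine ``list-widening'' move using the connectivity of $H$ and the fact that every vertex of $H$ has a neighbour (no isolated vertices, as $H$ is connected with $|V(H)| \geq 2$).

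Having the single-diagonal-forbidding gadgets, I would assemble the full $\rneq(S)$-gadget as follows: introduce two new interface vertices $s_1^\star, s_2^\star$, each with list $S$, and identify them (by adding a path-of-gadgets in series) with the $p,q$ endpoints of a chain consisting of, for \emph{every} $i$, one gadget that forbids $(s_i,s_i)$. Running all these in series on the same pair of interface vertices (sharing $p$ across all copies and $q$ across all copies, or equivalently taking a disjoint union identified only at $p$ and at $q$) intersects the realizable relations, yielding exactly $\{(u,v) : u,v \in S,\ u \neq v\}$: every diagonal pair is killed by the corresponding gadget, and every off-diagonal pair survives every gadget by the ``good pair'' clauses. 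Finally I would check the two defining conditions of a $\rneq(S)$-gadget literally hold — non-emptiness of the relation needs $m \geq 2$, which is assumed.

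\textbf{Main obstacle.}
The hardest part will be the list-widening: \cref{thm:distinguisher} hands us constraint paths whose endpoints carry only the \emph{two-element} lists $\{s_i,s_j\}$ and $\{\alpha,\beta\}$, whereas an $\rneq(S)$-gadget must have interface lists equal to the full incomparable set $S$. Relaxing an endpoint's list from $\{s_i,s_j\}$ up to $S$ without accidentally creating spurious realizable pairs (e.g.\ making some diagonal pair reappear, or some off-diagonal pair disappear) requires care: one must verify that for every $s_k \in S \setminus \{s_i,s_j\}$ the widened gadget imposes no constraint at all relating the value $s_k$ at $p$ to the value at $q$, which amounts to checking that $s_k$ has an appropriate common neighbour with the internal attachment vertices — and this is exactly where the incomparability of $S$ and the undecomposability of $H$ are used (to rule out domination phenomena that would collapse the list). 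I would expect this verification, rather than the overall series-composition idea, to absorb most of the technical work, and it is presumably where the cited \cite[Lemma 19 and Corollary 20]{FullerComplexity} machinery, or a direct appeal to the results of \cite{FullComplexity}, is invoked in the authors' actual proof.
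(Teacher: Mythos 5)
The paper does not prove this lemma at all; it is imported from \cite{FullComplexity,FullerComplexity} (the lemma header carries the citation and no proof appears in the text), so there is no in-paper argument to compare against. Judging your plan on its own terms, the genuine gap is exactly the one you flag: list-widening. The distinguisher paths $D^{\gamma/\delta}_{a/b}$ from \cref{thm:distinguisher} come with \emph{two-element} endpoint lists, and (D1)--(D4) say nothing about what happens once an endpoint list is relaxed to $S$. For your series composition to realize the $\rneq(S)$ relation, you need that for every $s_k \in S\setminus\{s_i,s_j\}$, colouring the widened interface vertex with $s_k$ imposes no constraint at all through the attached $D$-path --- not merely no constraint relating $p$ to $q$, but that some extension into the path exists at all. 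The path's internal lists are tailored to $s_i,s_j$, and none of the stated properties guarantee that $s_k$ can even enter. A pendant vertex does not obviously fix this, because the problem is propagation into the first internal vertex, not the interface list itself. Establishing this ``free passage'' for the extra colours is precisely where the incomparability of $S$ and the undecomposability of $H$ must do real work; you acknowledge this but supply no argument, so the plan is incomplete.

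A secondary problem is the single-diagonal gadget description. As written, gluing $D^{\alpha/\beta}_{s_i/s_j}$ and $D^{\alpha/\beta}_{s_j/s_i}$ ``at their $\{\alpha,\beta\}$-endpoints and at their $\{s_i,s_j\}$-restricted endpoints'' identifies both ends and leaves no interface pair at all. If you meant to identify only the $\{\alpha,\beta\}$-endpoints, the correct partner to $D^{\alpha/\beta}_{s_i/s_j}$ is $D^{\beta/\alpha}_{s_i/s_j}$ (which forbids $(s_i,\alpha)$), not $D^{\alpha/\beta}_{s_j/s_i}$ (which forbids $(s_j,\beta)$ and hence does not block the diagonal you want). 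Moreover, (D1)--(D4) do not determine the $(s_j,\alpha)$ behaviour of $D^{\alpha/\beta}_{s_i/s_j}$, so even the correctly-glued path may forbid $(s_j,s_j)$ as well as $(s_i,s_i)$. That is harmless for your assembly, but it means the advertised ``forbid exactly one diagonal pair and nothing else'' gadget is not actually delivered by these properties. In fact, a gadget with interface lists $\{s_i,s_j\}$ forbidding both diagonals is precisely an $\rneq(\{s_i,s_j\})$-gadget, which the paper already obtains from \cref{lem:occur-gadget} with $(a_1,a_2,b_1,b_2)=(s_i,s_j,s_j,s_i)$; the substantive missing step, both in your plan and in what would be needed to replicate the cited result, is promoting those two-element interface lists to $S$.
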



\begin{lemma}\label{lem:occur-gadget}
Let $H$ be a connected, bipartite, non-bi-arc, undecomposable graph with bipartition classes $X$ and $Y$.
Let $\{a_1,a_2\}$ and $\{b_1,b_2\}$ be incomparable sets of vertices, each contained in one bipartition class of $H$.
Then there exists an $(a_1/a_2 \to b_1/b_2)$-gadget $(F,L)$ with interface vertices $s_1,s_2$ such that $F$ is a cycle and $s_1,s_2$ are non-adjacent. 
\end{lemma}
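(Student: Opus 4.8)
The plan is to realize $F$ as a cycle obtained by gluing two internally disjoint paths $A$ and $B$, both joining the interface vertex $s_1$ (with $L(s_1)=\{a_1,a_2\}$) to the interface vertex $s_2$ (with $L(s_2)=\{b_1,b_2\}$); here I assume $a_1\ne a_2$ and $b_1\ne b_2$, the statement being degenerate otherwise. Let $R_A$ (resp.\ $R_B$) be the set of words $h(s_1)h(s_2)$ realized by list homomorphisms $h\colon(A,L)\to H$ (resp.\ $(B,L)\to H$). I will arrange that $A$ is a ``one-way'' gadget with $\{a_1b_1,a_2b_2\}\subseteq R_A\subseteq\{a_1b_1,a_2b_2,a_1b_2\}$ and, symmetrically, $\{a_1b_1,a_2b_2\}\subseteq R_B\subseteq\{a_1b_1,a_2b_2,a_2b_1\}$. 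A list homomorphism of $(F,L)$ restricts to list homomorphisms of $(A,L)$ and $(B,L)$ that agree on $\{s_1,s_2\}$, so the words realized by $(F,L)$ form exactly $R_A\cap R_B=\{a_1b_1,a_2b_2\}$; conversely, the homomorphisms realizing $a_1b_1$ (resp.\ $a_2b_2$) on $A$ and on $B$ agree on $s_1,s_2$ and so glue into a homomorphism of $F$. Since $A$ and $B$ will each have interior vertices, $F=A\cup B$ is a cycle and $s_1,s_2$ are non-adjacent in it, so $(F,L)$ is the desired $(a_1/a_2\to b_1/b_2)$-gadget.

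To build $A$ I would route the choice at $s_1$ through one of the canonical incomparable pairs of \cref{thm:distinguisher}. Let $\{\alpha,\beta\}\subseteq X$ and $\{\alpha',\beta'\}\subseteq Y$ be those pairs, with $\alpha\alpha',\beta\beta'\in E(H)$ and $\alpha\beta',\beta\alpha'\notin E(H)$, and let $\{\gamma_a,\delta_a\}$ (resp.\ $\{\gamma_b,\delta_b\}$) denote whichever of them lies in the bipartition class containing $\{a_1,a_2\}$ (resp.\ $\{b_1,b_2\}$). The path $A$ is the concatenation of a fresh copy of the distinguisher path $D^{\gamma_a/\delta_a}_{a_2/a_1}$, attached so that its $\{a_1,a_2\}$-endpoint is $s_1$ and its $\{\gamma_a,\delta_a\}$-endpoint is a vertex $y$, and a fresh copy of $D^{\delta_b/\gamma_b}_{b_1/b_2}$, attached so that its $\{b_1,b_2\}$-endpoint is $s_2$ and its $\{\gamma_b,\delta_b\}$-endpoint is a vertex $y''$. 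If $\{a_1,a_2\}$ and $\{b_1,b_2\}$ lie in the same class, name the colours so that $\gamma_a=\gamma_b$, $\delta_a=\delta_b$, and identify $y=y''$; otherwise insert between $y$ and $y''$ a single \emph{transfer edge} $pq$ with $L(p)=\{\alpha,\beta\}$, $L(q)=\{\alpha',\beta'\}$, whose only list homomorphisms send $\alpha\mapsto\alpha'$ and $\beta\mapsto\beta'$ (this uses both $\alpha\alpha',\beta\beta'\in E(H)$ and $\alpha\beta',\beta\alpha'\notin E(H)$), with $\gamma_b,\delta_b$ named so that this edge carries $\gamma_a$ to $\gamma_b$ and $\delta_a$ to $\delta_b$. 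The path $B$ is built in exactly the same way, but with $a_1,a_2$ interchanged and $b_1,b_2$ interchanged, i.e.\ from $D^{\gamma_a/\delta_a}_{a_1/a_2}$ and $D^{\delta_b/\gamma_b}_{b_2/b_1}$.

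The verification of the two inclusions is a short relational computation using only properties (D2)--(D4) of the distinguisher paths. On its endpoints the copy of $D^{\gamma_a/\delta_a}_{a_2/a_1}$ realizes a relation that contains $\{a_2\gamma_a,a_1\delta_a\}$ (by (D2),(D3)) and avoids $a_2\delta_a$ (by (D4)); nothing is promised about the fourth pair $a_1\gamma_a$, so it must be carried along as possible ``noise'', and likewise for the copy of $D^{\delta_b/\gamma_b}_{b_1/b_2}$, while the transfer edge is noise-free. Composing along the shared interior vertices: if $h(s_1)=a_2$ then (D4) on the first sub-path forces $h(y)=\gamma_a$, hence $h(y'')=\gamma_b$, hence (D4) on the second sub-path forces $h(s_2)=b_2$; therefore $a_2b_1\notin R_A$. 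The words $a_1b_1$ and $a_2b_2$ are realized because (D2)--(D3) supply homomorphisms of the two sub-paths taking matching values at $y$ and at $p,q$, which glue. The only further word that can enter $R_A$ is $a_1b_2$, so $R_A$ is as required; by the symmetric argument $B$ never realizes $a_1b_2$ and $R_B$ is as required, whence $R_A\cap R_B=\{a_1b_1,a_2b_2\}$. Finally, each distinguisher path has length at least $2$ (its two endpoints lie in a common bipartition class of $H$), so each of $A$, $B$ has at least one interior vertex; thus $F=A\cup B$ is genuinely a cycle and $s_1\not\sim s_2$ in $F$.

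The main obstacle I expect is bookkeeping rather than any conceptual difficulty: one must keep straight, in each of the (up to three) pieces making up $A$ and of those making up $B$, which endpoint pairs are guaranteed by (D2)--(D3), which are forbidden by (D4), and which are unknown noise, and then confirm that in every combination of these the unwanted word ($a_2b_1$ for $A$, $a_1b_2$ for $B$) is still killed --- which it is, because each unwanted word is annihilated by a single (D4) condition at one of the two distinguisher paths, independently of the noise elsewhere. A secondary point needing care is the case split according to whether $\{a_1,a_2\}$ and $\{b_1,b_2\}$ lie in the same or in different bipartition classes, which is handled uniformly by the transfer edge built from the non-edges $\alpha\beta',\beta\alpha'\notin E(H)$; and one should check the routine fact that the homomorphisms realizing the two good words on the sub-paths are mutually consistent on the shared interior vertices, and therefore, since $A$ and $B$ meet only in $\{s_1,s_2\}$, glue into homomorphisms of $F$.
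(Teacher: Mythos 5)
Your proof is correct and follows essentially the same route as the paper's: both realize $F$ as a cycle made of two internally disjoint $s_1$--$s_2$ paths, each obtained by concatenating two distinguisher paths from \cref{thm:distinguisher} at their $\{\alpha,\beta\}$/$\{\alpha',\beta'\}$ ends (by identification in the same-class case, or via the single forced edge $\alpha\alpha',\beta\beta'$ in the mixed-class case), with one half killing $a_1b_2$ and the other killing $a_2b_1$. The only differences are presentational: the paper names your $B$ and $A$ as $F_1$ and $F_2$, and your explicit bookkeeping of the relation sets $R_A,R_B$ and of the ``noise'' direction is a cleaner way of stating what the paper calls ``straightforward to verify.''
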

\begin{proof}
Without loss of generality assume that $a_1,a_2 \in X$ and let $\alpha,\beta,\alpha', \beta'$ be as in \cref{thm:distinguisher}.

First, consider the case in which $b_1,b_2 \in X$.
We start by introducing instances $(D^{\alpha/\beta}_{a_1/a_2},L)$ and $(D^{\beta/\alpha}_{b_2/b_1},L)$, with interface vertices $x_1,y_1 \in D^{\alpha/\beta}_{a_1/a_2}$, and $x_2,y_2 \in D^{\beta/\alpha}_{b_2/b_1}$ such that $L(x_1)=\{a_1,a_2\}$, $L(x_2)=\{b_1,b_2\}$, and $L(y_1)=L(y_2)=\{\alpha,\beta\}$. 
We identify vertices $y_1$ and $y_2$ into $y$ and denote the obtained instance by $(F_1,L)$.
Observe that $F_1$ is a path and the properties of $(D^{\alpha/\beta}_{a_1/a_2},L)$ and $(D^{\beta/\alpha}_{b_2/b_1},L)$ implies that
\begin{enumerate}
\item there is a list homomorphism $h_1:(F_1,L) \to H$, such that $h_1(x_1)=a_1$ and $h_1(x_2)=b_1$,
\item there is a list homomorphism $h_2:(F_1,L) \to H$, such that $h_2(x_1)=a_2$ and $h_2(x_2)=b_2$,
\item there is no list homomorphism $h: (F_1,L) \to H$, such that $h(x_1)=a_1$ and $h(x_2)=b_2$.
\end{enumerate}
By symmetry of pairs $(a_1,b_2)$ and $(a_2,b_1)$, we can introduce an instance $(F_2,L)$, with the roles of $(a_1,b_2)$ and $(a_2,b_1)$ swapped, let $x'_1$ and $x'_2$ be its corresponding interface vertices.
We construct the final instance $(F,L)$ by identifying vertices $x_1$ with $x'_1$ into $s_1$ and $x_2$ with $x'_2$ into $s_2$, it is straightforward to verify that we obtained a $\rneq(\{a,b\})$-gadget with interface vertices $s_1$ and $s_2$. \prz{to zdanie chyba jest bez sensu tutaj}

If $b_1,b_2 \in Y$, we start the construction by introducing instances $(D^{\alpha/\beta}_{a_1/a_2},L)$ and $(D^{\beta'/\alpha'}_{b_2/b_1},L)$, again with interface vertices $x_1,y_1 \in D^{\alpha/\beta}_{a_1/a_2}$, and $x_2,y_2 \in D^{\beta'/\alpha'}_{b_2/b_1}$.
Note $L(y_1)=\{\alpha,\beta\}$ and $L(y_2)=\{\alpha',\beta'\}$ are now subsets of different bipartite classes of $H$, therefore instead of identifying them, we add an edge between $y_1$ and $y_2$.
This way we obtain an instance $(F_1,L)$ that satisfies properties 1.-3.
Again, we use the symmetry of pairs $(a_1,b_2)$ and $(a_2,b_1)$ to construct an analogous instance $(F_2,L)$ and identify the appropriate vertices to obtain the final gadget (see \cref{fig:occu-gadget}).

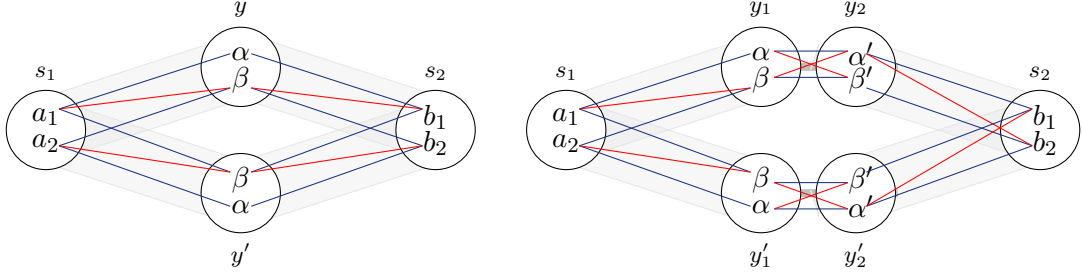
\begin{figure}
\begin{center}
\tikzstyle{vertv}=[draw,circle,fill=white,inner sep=0pt,minimum size=30pt]
\tikzstyle{nooo}=[draw=none,fill=none,inner sep=0pt]
\begin{tikzpicture}[scale=0.7]
\draw[color=gray!20,rounded corners=17pt,rotate=18,fill=gray!20,fill opacity=0.3] (-.6,.6) rectangle (4.5,-.6) {};
\draw[color=gray!20,rounded corners=17pt,rotate=-18,fill=gray!20,fill opacity=0.3] (-.6,.6) rectangle (4.5,-.6) {};
\begin{scope}[xshift=7.5cm,yscale=1,xscale=-1]
\draw[color=gray!20,rounded corners=17pt,rotate=18,fill=gray!20,fill opacity=0.3] (-.6,.6) rectangle (4.5,-.6) {};
\draw[color=gray!20,rounded corners=17pt,rotate=-18,fill=gray!20,fill opacity=0.3] (-.6,.6) rectangle (4.5,-.6) {};
\end{scope}

\node[vertv,label=above:\footnotesize{$s_1$}] at (0,0) {};
\node[vertv,label=below:\footnotesize{$y'$}] at (3.7,-1.2) {};
\node[vertv,label=above:\footnotesize{$y$}] at (3.7,1.2) {};
\node[nooo] at (0,0.25) {$a_1$};
\node[nooo] at (0,-.25) {$a_2$};

\node[nooo] at (3.7,1.45) {$\alpha$};
\node[nooo] at (3.7,0.95) {$\beta$};

\node[nooo] at (3.7,-1.45) {$\alpha$};
\node[nooo] at (3.7,-0.95) {$\beta$};

\draw[blue] (0.25,0.4) -- (3.5,1.45);
\draw[blue] (0.25,-0.3) -- (3.5,.8);
\draw[red] (0.25,0.4) -- (3.5,.8);

\draw[blue] (0.25,-0.3) -- (3.5,-1.45);
\draw[blue] (0.25,0.4) -- (3.5,-.8);
\draw[red] (0.25,-0.3) -- (3.5,-.8);

\begin{scope}[xshift=7.4cm,yscale=1,xscale=-1]
\node[vertv,label=above:\footnotesize{$s_2$}] at (0,0) {};
\node[nooo] at (0,0.25) {$b_1$};
\node[nooo] at (0,-.25) {$b_2$};

\draw[blue] (0.25,0.4) -- (3.5,1.45);
\draw[blue] (0.25,-0.3) -- (3.5,.8);
\draw[red] (0.25,0.4) -- (3.5,.8);

\draw[blue] (0.25,-0.3) -- (3.5,-1.45);
\draw[blue] (0.25,0.4) -- (3.5,-.8);
\draw[red] (0.25,-0.3) -- (3.5,-.8);
\end{scope}
\end{tikzpicture} \hskip .5cm
\begin{tikzpicture}[scale=0.7]
\draw[line width=3,gray!50] (6.3,-1.2) -- (3.7,-1.2);
\draw[line width=3,gray!50] (6.3,1.2) -- (3.7,1.2);

\draw[color=gray!20,rounded corners=17pt,rotate=18,fill=gray!20,fill opacity=0.3] (-.6,.6) rectangle (4.5,-.6) {};
\draw[color=gray!20,rounded corners=17pt,rotate=-18,fill=gray!20,fill opacity=0.3] (-.6,.6) rectangle (4.5,-.6) {};
\node[vertv,label=above:\footnotesize{$s_1$}] at (0,0) {};
\node[vertv,label=below:\footnotesize{$y'_1$}] at (3.7,-1.2) {};
\node[vertv,label=above:\footnotesize{$y_1$}] at (3.7,1.2) {};
\node[nooo] at (0,0.25) {$a_1$};
\node[nooo] at (0,-.25) {$a_2$};

\node[nooo] at (3.7,1.45) {$\alpha$};
\node[nooo] at (3.7,0.95) {$\beta$};

\node[nooo] at (3.7,-1.45) {$\alpha$};
\node[nooo] at (3.7,-0.95) {$\beta$};

\draw[blue] (0.25,0.4) -- (3.5,1.45);
\draw[blue] (0.25,-0.3) -- (3.5,.8);
\draw[red] (0.25,0.4) -- (3.5,.8);

\draw[blue] (0.25,-0.3) -- (3.5,-1.45);
\draw[blue] (0.25,0.4) -- (3.5,-.8);
\draw[red] (0.25,-0.3) -- (3.5,-.8);
\begin{scope}[xshift=9cm,yscale=1,xscale=-1]
\draw[color=gray!20,rounded corners=17pt,rotate=18,fill=gray!20,fill opacity=0.3] (-.6,.6) rectangle (4.5,-.6) {};
\draw[color=gray!20,rounded corners=17pt,rotate=-18,fill=gray!20,fill opacity=0.3] (-.6,.6) rectangle (4.5,-.6) {};
\end{scope}
\node[vertv,label=above:\footnotesize{$s_2$}] at (9,0) {};
\node[vertv,label=below:\footnotesize{$y'_2$}] at (5.5,-1.2) {};
\node[vertv,label=above:\footnotesize{$y_2$}] at (5.5,1.2) {};
\node[nooo] at (9.1,0.25) {$b_1$};
\node[nooo] at (9.1,-.25) {$b_2$};

\node[nooo] at (5.6,1.45) {$\alpha'$};
\node[nooo] at (5.6,0.95) {$\beta'$};

\node[nooo] at (5.6,-1.45) {$\alpha'$};
\node[nooo] at (5.6,-0.95) {$\beta'$};

\begin{scope}[xshift=9cm,yscale=1,xscale=-1]
\draw[blue] (0.15,0.4) -- (3.3,1.45);
\draw[blue] (0.15,-0.3) -- (3.3,.8);
\draw[red] (0.15,-0.3) -- (3.3,1.45);

\draw[blue] (0.15,-0.3) -- (3.3,-1.45);
\draw[blue] (0.15,0.4) -- (3.3,-.8);
\draw[red] (0.15,0.4) -- (3.3,-1.45);
\end{scope}

\draw[blue] (5.35,-1) -- (3.95,-1);
\draw[blue] (5.35,1) -- (3.95,1);
\draw[blue] (5.35,-1.5) -- (3.95,-1.5);
\draw[blue] (5.35,1.5) -- (3.95,1.5);
\draw[red] (5.35,-1) -- (3.95,-1.5);
\draw[red] (5.35,1.5) -- (3.95,1);
\draw[red] (5.35,-1.5) -- (3.95,-1);
\draw[red] (5.35,1) -- (3.95,1.5);

\end{tikzpicture}
\caption{A schematic view of the construction of a $(a_1/a_2 \to b_1/b_2)$-gadget if $\{a_1,a_2\}$ and $\{b_1,b_2\}$ are contained in the same (left) and in different (right) bipartition classes.
On every picture, the blue lines indicate that there exists an $H$-coloring of the respective part of the graph, which assigns chosen values to white vertices, and the red ones indicate that there is no such $H$-coloring.}
\label{fig:occu-gadget}
\end{center}
\end{figure}

To conclude the proof observe that in both cases $F$ is a cycle.
\end{proof}

We proceed to the proof of \cref{thm:pt-free-bip-hard-undecomp}.

\begin{proof}[Proof of \cref{thm:pt-free-bip-hard-undecomp}]
Let $(a_1,a_2,b_1,b_2)$ be an incomparable $C_4$ in $H$.
Let $X$ and $Y$ be the bipartition classes of $H$, so that $a_1,a_2 \in X$ and $b_1,b_2 \in Y$.

We reduce from 3-\sat. 
Consider a formula $\Phi$ of 3-\sat with variables $x_1,\ldots, x_N$ and clauses $C_1,\ldots,C_M$.
Without loss of generality we can assume that each clause has exactly three literals (we can ensure this by duplicating some literal in a shorter clause).
We construct an instance $(G_\Phi,L)$ of \lhomo{H} as follows.
First, we introduce a biclique with partite sets $V:= \{v_1,\ldots, v_N\}$ and $U:= \{u_1,\ldots,u_{3M}\}$.
Vertices in $V$ are in one-to-one correspondence to the variables of $\Phi$, while vertices in $U$ are in one-to-one correspondence to literals in $\Phi$, i.e., the occurrences of the variables in clauses. 
For a clause $C_i$, by $U_i$ we denote the three-element subset of vertices of $U$ corresponding to the literals of $C_i$. 
For every $j \in [N]$ we set $L(v_j)=\{a_1,a_2\}$ and for every $i \in [3M]$ we set $L(u_i)=\{b_1,b_2\}$.

The intuition is that mapping the vertex $v_j$ to $a_1$ ($a_2$, resp.) will correspond to making the variable $v_j$ true (false, resp.).
Similarly, we will interpret $u_j$ being mapped to $b_1$ ($b_2$, resp.) as setting the corresponding literal true (false, resp.).
So we need to ensure that (i) the coloring of vertices in $V$ is consistent with the coloring of vertices in $U$, according to the above interpretation, and (ii) for each clause $C_i$, at least one vertex in $U_i$ is mapped to $b_1$.

To ensure property (i), we will introduce two types of \emph{occurrence gadgets}.
We use \cref{lem:occur-gadget} to construct $(a_1/a_2 \to b_1/b_2)$-gadget, called \emph{positive occurence gadget}, and $(a_1/a_2 \to b_2/b_1)$-gadget, called \emph{negative occurence gadget}.
%
In the following straightforward claim we summarize that the constructed gadgets can indeed be used to ensure property (i). 

\begin{manualclaim}{10.1} \label{cla:occurrence}
Let $(G,L)$ be a positive (resp., negative) occurrence gadget with the variable vertex $w$ and the literal vertex $w'$.
There exist homomorphisms $h_1,h_2 : (G,L) \to H$, such that $h_1(w)=a_1$ and $h_2(w)=a_2$.
Moreover, for any homomorphism $(G,L) \to H$, it holds that $h(w) = a_1$ if and only if $h(w') = b_1$ (resp., $h(w')=b_2$).
\end{manualclaim}

Now consider a vertex $u_i \in U$, which corresponds to an occurrence of a variable $x_j$, and thus to the vertex $v_j$.
If $u_i$ corresponds to a positive (resp., negative) literal, we introduce a positive (resp., negative) occurrence gadget,
and identify $v_j$ with its variable vertex and $u_i$ with its literal vertex.
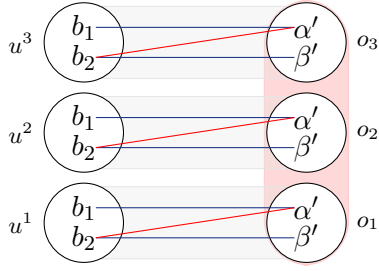
\begin{figure}
\begin{center}
\tikzstyle{vertv}=[draw,circle,fill=white,inner sep=0pt,minimum size=30pt]
\tikzstyle{nooo}=[draw=none,fill=none,inner sep=0pt]
\begin{tikzpicture}[scale=0.8]
\draw[color=red!20,rounded corners=17pt,fill=red!50,fill opacity=0.3] (3,2.2) rectangle (4.4,-2.2) {};
\draw[color=gray!20,rounded corners=17pt,fill=gray!20,fill opacity=0.3] (-.6,.6) rectangle (4.3,-.6) {};
\draw[color=gray!20,rounded corners=17pt,fill=gray!20,fill opacity=0.3] (-.6,2.1) rectangle (4.3,0.9) {};
\draw[color=gray!20,rounded corners=17pt,fill=gray!20,fill opacity=0.3] (-.6,-2.1) rectangle (4.3,-.9) {};
\node[vertv,label=left:\footnotesize{$u^1$}] at (0,-1.5) {};
\node[vertv,label=left:\footnotesize{$u^2$}] at (0,0) {};
\node[vertv,label=left:\footnotesize{$u^3$}] at (0,1.5) {};
\node[vertv,label=right:\footnotesize{$o_1$}] at (3.7,-1.5) {};
\node[vertv,label=right:\footnotesize{$o_2$}] at (3.7,0) {};\node[vertv,label=right:\footnotesize{$o_3$}] at (3.7,1.5) {};
\node[nooo] at (0,0.25) {$b_1$};
\node[nooo] at (0,-.25) {$b_2$};

\node[nooo] at (0,1.75) {$b_1$};
\node[nooo] at (0,1.25) {$b_2$};

\node[nooo] at (0,-1.25) {$b_1$};
\node[nooo] at (0,-1.75) {$b_2$};

\node[nooo] at (3.7,1.75) {$\alpha'$};
\node[nooo] at (3.7,1.25) {$\beta'$};

\node[nooo] at (3.7,0.25) {$\alpha'$};
\node[nooo] at (3.7,-.25) {$\beta'$};

\node[nooo] at (3.7,-1.25) {$\alpha'$};
\node[nooo] at (3.7,-1.75) {$\beta'$};

\draw[blue] (.2,1.75) -- (3.5,1.75);
\draw[blue] (.2,1.25) -- (3.5,1.25);
\draw[blue] (.2,0.25) -- (3.5,0.25);
\draw[blue] (.2,-.25) -- (3.5,-.25);
\draw[blue] (.2,-1.75) -- (3.5,-1.75);
\draw[blue] (.2,-1.25) -- (3.5,-1.25);
\draw[red] (.2,-1.75) -- (3.5,-1.25);
\draw[red] (.2,-.25) -- (3.5,.25);
\draw[red] (.2,1.25) -- (3.5,1.75);
\end{tikzpicture}
\caption{A schematic view of the construction of an $\ork{3}(b_1,b_2)$-gadget.
The red gadget indicates an $\ork{3}(\alpha',\beta')$-gadget with interface vertices $o_1,o_2,o_3$, where we omitted the blue and red lines for clarity.
}
\label{fig:occu-gadget}
\end{center}
\end{figure}

\medskip
Consider a set $U_i=\{u^1,u^2,u^3\}$, corresponding to the literals of some clause $C_i$. 
We observe that in order to ensure property (ii), we need to construct an $\ork{3}(b_1,b_2)$-gadget, whose interface vertices are precisely $u^1,u^2$, and $u^3$.
We call \cref{thm:distinguisher} to construct an $\ork{3}(\alpha',\beta')$-gadget with interface vertices $o_1,o_2,o_3$
and three copies of the graph $D^{\beta'/\alpha'}_{b_2/b_1}$.
For $s \in \{1,2,3\}$, we identify one endvertex of the $s$-th copy of $D^{\beta'/\alpha'}_{b_2/b_1}$ (the one with the list $\{b_1,b_2\}$) with $u^s$, and the other endvertex (the one with the list $\{\alpha',\beta'\}$) with $o_s$, see~\cref{fig:occu-gadget}. 
Again, it is straightforward to verify that the constructed subgraph is indeed an $\ork{3}(b_1,b_2)$-gadget with interface vertices $u^1,u^2,u^3$.

This completes the construction of $(G_\Phi,L)$. The following claim follows directly from the discussion above.

\begin{manualclaim}{10.2}
There exists a list homomorphism $h: (G_\Phi,L) \to H$ if and only if there exists a truth assignment of variables of $\Phi$ such that all clauses are satisfied.  
\end{manualclaim}

Let $t'$ be the maximum of the numbers of vertices in a positive occurrence gadget, a negative occurrence gadget, and an $\ork{3}(b_1,b_2)$-gadget, and let $t := 4t'+4$.

\begin{manualclaim}{10.3}
The graph $G_{\Phi}$ is $P_t$-free.
\end{manualclaim}
\begin{claimproof}
For contradiction, suppose that $P$ is an induced path in $G_{\Phi}$ with at least $t$ vertices.
A \emph{segment} of $P$ is an inclusion-wise maximal subpath consisting of vertices of one gadget, excluding the vertices of $V \cup U$.
Note that each two consecutive segments on $P$ are separated by a vertex from $V \cup U$.
As each segment has at most $t'$ vertices, we obtain that $P$ contains (at least) four vertices from $V \cup U$.
Observe that among these vertices, no two vertices from $V$ appear in a row.
Furthermore, no three vertices from $U$ might appear in a row.
Note that two vertices from $U$ might appear in a row, this might happen if they belong to the same $\ork{3}(b_1,b_2)$-gadget;
recall that each vertex from $U$ is in exacly one such gadget, and has exactly one neighbor in the gadget.
We conclude that $P$ contains either two vertices from $V$ and two vertices from $U$, or one vertex from $V$ and three vertices from $U$.
In both cases $P$ is not an induced path, a contradiction.
\end{claimproof}

As the number of vertices in $G_{\Phi}$ is $\Oh(N + M)$, the theorem holds.
\end{proof}
\paragraph{Wrapping up the proof.}
Before we describe how to use \cref{thm:pt-free-bip-hard-undecomp} to prove \cref{thm:ptfree-main}~b) in Case B,
we first need one more definition.
Let $H$ be a bipartite graph with bipartition classes $A,B$.
We say that an instance $(G,L)$ of \lhomo{H} is \emph{consistent} if $G$ is bipartite with bipartition classes $X,Y$, and
either $\bigcup_{v \in X} L(v) \subseteq A$ and  $\bigcup_{v \in Y} L(v) \subseteq B$,
or $\bigcup_{v \in X} L(v) \subseteq B$ and  $\bigcup_{v \in Y} L(v) \subseteq A$.
%
%
We are going to use the following lemma.
\begin{lemma}[Okrasa~\emph{et al.}~{\cite[Proposition 43]{FullerComplexity}}]\label{lem:homo-star}
Let $H$ be a graph and let $(G,L')$ be a consistent  instance of \lhomo{H^*}.
Define $H$-lists $L$ of $G$ as follows: for every $v \in V(G)$ we have $L(v):= \{a ~|~ \{a',a''\} \cap L'(v) \neq \emptyset\}$.
Then $(G,L') \to H^*$ if and only if $(G,L) \to H$.
\end{lemma}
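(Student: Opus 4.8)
The plan is to chase definitions: I would exhibit an explicit, essentially bijective, correspondence between list homomorphisms $(G,L')\to H^*$ and list homomorphisms $(G,L)\to H$. First I would recall that $H^*$ is bipartite with classes $A=\{a'\mid a\in V(H)\}$ and $B=\{a''\mid a\in V(H)\}$, and that its edge set is exactly $\{a'b''\mid ab\in E(H)\}$; I would write $\pi\colon V(H^*)\to V(H)$ for the projection $\pi(a')=\pi(a'')=a$, and note that $L(v)=\pi(L'(v))$ for every $v$ by definition of $L$. Since $(G,L')$ is consistent, I would fix the bipartition $(X,Y)$ of $G$ witnessing consistency and assume, without loss of generality (the two cases being symmetric under swapping $X$ and $Y$), that $\bigcup_{v\in X}L'(v)\subseteq A$ and $\bigcup_{v\in Y}L'(v)\subseteq B$.

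For the forward implication I would take $f'\colon(G,L')\to H^*$; since $f'(v)\in L'(v)$ for all $v$, consistency forces $f'(X)\subseteq A$ and $f'(Y)\subseteq B$. I would set $f:=\pi\circ f'$ and check the two required conditions. For an edge $uv\in E(G)$ with $u\in X$, $v\in Y$, we have $f'(u)f'(v)\in E(H^*)$ with $f'(u)\in A$ and $f'(v)\in B$, so $f'(u)=a'$ and $f'(v)=b''$ with $ab\in E(H)$, giving $f(u)f(v)=ab\in E(H)$; and $f(v)=\pi(f'(v))\in\pi(L'(v))=L(v)$. Hence $f\colon(G,L)\to H$.

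For the converse --- which is the only step where the consistency hypothesis is actually used --- I would take $f\colon(G,L)\to H$ and lift it by setting $f'(v):=(f(v))'$ for $v\in X$ and $f'(v):=(f(v))''$ for $v\in Y$. For an edge $uv$ with $u\in X$, $v\in Y$, the relation $f(u)f(v)\in E(H)$ yields $(f(u))'(f(v))''=f'(u)f'(v)\in E(H^*)$, so $f'$ is a homomorphism $G\to H^*$. To see it respects $L'$, take $v\in X$ (the case $v\in Y$ being symmetric): $f(v)\in L(v)=\pi(L'(v))$ means $\{(f(v))',(f(v))''\}\cap L'(v)\neq\emptyset$, and since $L'(v)\subseteq A$, $(f(v))''\in B$, and $A\cap B=\emptyset$, this intersection must contain $f'(v)=(f(v))'$. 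Thus $f'\colon(G,L')\to H^*$.

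I do not expect a genuine obstacle here --- the statement is a routine translation between $H$ and its associated bipartite graph. The one point deserving care is that in the converse direction consistency is exactly what allows a color $f(v)\in L(v)$ to be lifted unambiguously to the correct side, $(f(v))'$ or $(f(v))''$, of $L'(v)$; dropping the hypothesis makes that implication false in general. Degenerate situations (some $L'(v)=\emptyset$, or $G$ disconnected) require no extra argument, since then $L(v)=\emptyset$ as well and both sides are no-instances, and a disconnected $G$ still comes equipped with the global bipartition $(X,Y)$ provided by consistency.
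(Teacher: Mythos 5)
Your proof is correct. Note that the paper itself does not prove this statement---it is cited verbatim from Okrasa et al.\ (Proposition~43 of~\cite{FullerComplexity}), so there is no in-paper proof to compare against. That said, your argument is the natural one: project along $\pi$ to go from $H^*$ to $H$ (where the bipartiteness of $G$ together with the list restriction to one side of $H^*$ per class gives that edges of $G$ always land on $a'b''$ pairs), and lift along the unique compatible side $(\cdot)'$ or $(\cdot)''$ determined by consistency to go back. Your identification of exactly where consistency is needed (the lift direction, to disambiguate which of $a',a''$ to use) is the right observation, and the degenerate cases are handled correctly. One small point worth making explicit, since $H^*$ need not be connected: the bipartition of $H^*$ used in the definition of ``consistent'' must be the canonical one $(\{a'\},\{a''\})$, which is the one you use; with a non-canonical bipartition of a disconnected $H^*$ the lift step could fail.
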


Let us point out that the instance $(G,L)$ constructed in \cref{thm:pt-free-bip-hard-undecomp} is consistent, so the hardness holds even if we restrict to such instances.
Now we can move to the main result of this subsection.

\begin{proof}[Proof of \cref{thm:ptfree-main}~b) in Case B]
For contradiction, suppose that there exists a graph $H$, satisfying the assumptions, and for every $t$ there is an algorithm $A_t$,
which solves every $P_t$-free instance of \lhomo{H} in subexponential time.

Let $H'$ be a factor of $H$ as in the assumptions of Case B and observe that $H'^*$ satisfies the assumptions of \cref{thm:pt-free-bip-hard-undecomp}.
Let $t$ be given by \cref{thm:pt-free-bip-hard-undecomp} for $H'^*$.

Consider an arbitrary consistent instance $(G,L')$ of \lhomo{H'^*}, where $G$ is $P_t$-free.
As $H'^*$ is an induced subgraph of $H^*$, the instance $(G,L')$ can be seen as an instance of \lhomo{H^*}.
Create an instance $(G,L)$ of \lhomo{H} as in \cref{lem:homo-star}. 
Constructing $(G,L)$ clearly takes a polynomial time, and $(G,L') \to H^*$ if and only if $(G,L) \to H$. 
We can use $A_t$ to decide whether $(G,L) \to H$ or, equivalently, whether $(G,L') \to H'^*$, in subexponential time.
This contradicts \cref{thm:pt-free-bip-hard-undecomp}.
\end{proof}

\newpage
\section{$S_{a,b,c}$-free graphs}\label{sec:subdivided}
In this section we focus on the class of $S_{a,b,c}$-free graphs. 
First, in \cref{sec:separator} we show a structural result for a superclass of $\{S_{a,b,c},K_3\}$-free graphs.
Then, in \cref{sec:subdivided-algo} we use the above result to obtain a subexponential-time algorithm for $\lhomo{H}$ in  $\{S_{a,b,c},K_3\}$-free graphs.
This serves as the main tool in the proof of the algorithmic statement in \cref{thm:subdividedclaw-irreflexive}.

Finally, in \cref{sec:subdivided-hardness} we discuss the hard cases.
In particular, we complete the proof of \cref{thm:subdividedclaw-irreflexive} and show \cref{thm:subdividedclaw-reflexive}.
\subsection{Separator theorem for $\cBt$-free graphs}\label{sec:separator}
%
Let $k \geq 3$ and $t \geq 2$ be integers, and let $B'_{k,t}$ be the graph constructed as follows: take an induced cycle on $k$ vertices and  an induced path on $t$ vertices, and add an edge between one of the endvertices of the path and one of the vertices of the cycle.
Similarly, let $B^{\Delta}_{k,t}$ be the graph constructed as follows: take an induced cycle on $k$ vertices and an induced path on $t$ vertices, and add edges between one of the endvertices of the path and two consecutive vertices of the cycle (see \cref{fig:bts}).
Observe that $B^{\Delta}_{k,t}$ is the line graph of $B'_{k,t}$.
In this section we will consider graphs that exclude the following family of induced subgraphs:
\[\cBt:=\bigcup_{k>2t} \{B'_{k,t}, B^\Delta_{k,t}\}.\]
Note that $\cBt$-free graphs can be equivalently defined as excluding all induced subdivisions of $B'_{2t+1,t}$ and their line graphs.

\begin{figure}[h]
\begin{center}
\begin{tikzpicture}
[scale=0.6, every node/.style={draw,circle,fill=white,inner sep=0pt,minimum size=6pt}]
\def\n{7}
\foreach \i in {1,...,\n}
{
	\draw[line width=1] (360/\n*\i-360/\n:2) -- (360/\n*\i:2);
}
\draw[line width=1] (2,0) -- (6.5,0);
\foreach \i in {1,...,\n}
{
	\node (a\i) at (360/\n*\i:2) {};	
}
\node at (3.5,0) {};Jak zb
\node at (5,0) {};
\node at (6.5,0) {};
\end{tikzpicture}
\hskip 1cm
\begin{tikzpicture}
[scale=0.6,every node/.style={draw,circle,fill=white,inner sep=0pt,minimum size=6pt}]
\begin{scope}[yscale=1,xscale=-1]
\def\n{7}
\foreach \i in {1,...,\n}
{
	\draw[line width=1] (360/\n*\i-360/\n:2) -- (360/\n*\i:2);
}
\draw[line width=1] (-3,0) -- (-6,0);
\draw[line width=1] (-3,0) -- (-1.8,0.8);
\draw[line width=1] (-3,0) -- (-1.8,-.8);
\foreach \i in {1,...,\n}
{
	\node (a\i) at (360/\n*\i:2) {};	
}
\node at (-1.5-1.5,0) {};
\node at (-3-1.5,0) {};
\node at (-4.5-1.5,0) {};
\end{scope}
\end{tikzpicture}
\caption{Graph $B'_{7,3}$ (left) and $B^\Delta_{7,3}$ (right).}
\label{fig:bts}
\end{center}
\end{figure}
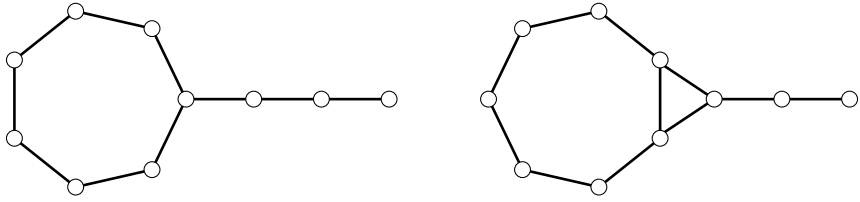

In this section we will often consider subgraphs induced by various subsets of vertices.
In order to simplify the notation, whenever is does not lead to confusion we will identify induced subgraphs with their vertex sets. So for example $X \subseteq V(G)$ is a connected set if $G[X]$ is connected.

Let $G$ be a graph.
For sets $X,Y \subseteq V(G)$, by $N_Y(X)$ and $N_Y[X]$, we denote, respectively, the sets $N_G(X) \cap Y$ and $X \cup N_Y(X)$.

Let $P$ be an induced path in $G$ with one endvertex $x$ and let $U \subseteq V(P)$ be nonempty.
We say that $u \in U$ is the \emph{$x$-extremal} vertex of $U$ with respect to $P$ if for every $v \in U$ we have that $\dist_P(x,u)\leq\dist_P(x,v)$. In other words, $u$ is $x$-extremal if it is the closest vertex to $x$, where the distance is measured along $P$. W say that $U$ is \emph{extremal} with respect to $P$ if it is $x$-extremal for an endvertex $x$ of $P$.

Let $C$ be an induced cycle. 
A path $P$ \emph{belongs} to $C$ if $P$ is an induced subgraph of $C$.
We say that a pair $(C,P)$ is a \emph{$t$-\prop}, if $C$ is an induced cycle in $G$ on at least $t$ vertices, and $P$ is a $(t-1)$-vertex path that belongs to $C$, such that the largest component of $G-N_{V(G)-C}[P]$ contains $C-P$.\prz{figure}

Finally, recall that $X \subseteq V(G)$ is a $\frac{3}{4}$-balanced separator of $G$ if every component of $G - X$ has at most $\frac{3}{4} \cdot |V(G)|$ vertices.

We are going to use the following results.

\begin{lemma}[Bacs\'o \emph{et al.}~\cite{DBLP:journals/algorithmica/BacsoLMPTL19}, Chudnovsky et al.~\cite{chudnovsky2020quasi}]\label{lem:gyarfas}
Let $G$ be a connected graph and let $v \in V(G)$.
There is an induced path $Q$ that starts with $v$, such that $N[Q]$ is a $\frac{3}{4}$-balanced separator of $G$.
\end{lemma}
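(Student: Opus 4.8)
The statement is quoted verbatim from \cite{DBLP:journals/algorithmica/BacsoLMPTL19,chudnovsky2020quasi}, so I will only describe the argument behind it; it is an instance of the classical \emph{Gyárfás path} technique. The plan is to build $Q$ greedily. Start with $Q=(q_0)=(v)$ and repeat the following. If $N[Q]$ is already a $\frac34$-balanced separator of $G$, output $Q$. Otherwise $G-N[Q]$ has a component $D$ with more than $\frac34|V(G)|$ vertices, and since $\frac34|V(G)|>\frac12|V(G)|$ this component is unique. The key elementary observation is that no vertex of $Q$ can have a neighbour in $D$ — such an edge would place a vertex of $D$ into $N[Q]$ — so, as $G$ is connected and $D\neq V(G)$ (note $v\in N[Q]$, hence $v\notin D$), the set $N_G(D)$ is nonempty and contained in $N(Q)$. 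Pick $w\in N_G(D)$, let $q_j$ be the vertex of $Q$ of smallest index adjacent to $w$, and pick $d\in D$ adjacent to $w$. Then $(q_0,\dots,q_j,w,d)$ is again an induced path starting at $v$: minimality of $j$ takes care of the edges of $w$, and $d$ has no neighbour among $q_0,\dots,q_j$ because $d\in D\subseteq V(G)\setminus N[Q]$. We replace $Q$ by this path and continue.

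It remains to argue that this process terminates, and that upon termination $N[Q]$ is a $\frac34$-balanced separator — the latter being immediate from the stopping condition. For termination, the natural potential is the number of vertices of the unique large component: in the replacement step the vertex $d$, previously in $D$, moves into the separator, so one would like to conclude that the new large component is a proper subset of the old $D$, giving strict decrease and hence at most $|V(G)|$ iterations. Each step is clearly polynomial-time, so this would also show the path can be found efficiently (though we do not need this here).

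The delicate point — and what I expect to be the real obstacle — is exactly the monotonicity of this potential. When $j$ is smaller than the index of the last vertex of $Q$, the replacement discards the tail $q_{j+1},\dots$ of $Q$, and discarding these vertices may momentarily re-expose vertices that were separated off at an earlier stage; so the claim ``the new large component is contained in $D$'' is not completely obvious and must be argued with a more careful choice of the attachment vertex $w$ (or of the component one chases along), together with a suitably refined measure of progress. This is precisely the work carried out in \cite{DBLP:journals/algorithmica/BacsoLMPTL19,chudnovsky2020quasi}, and since \cref{lem:gyarfas} is exactly their conclusion, for our purposes it suffices to invoke their result.
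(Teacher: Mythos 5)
The paper does not prove \cref{lem:gyarfas} at all; it simply imports it from the cited works, exactly as you conclude one should do. Your sketch of the Gyárfás-path argument and your honest flagging of the subtle termination/invariant issue are both accurate, and since you ultimately defer to the references—just as the paper does—your treatment matches the paper's.
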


\begin{lemma}[Chudnovsky \emph{et al.}~\cite{chudnovsky2020quasi}]\label{lem:no-cycle}
Let $G$ be a connected graph and let $t'$ be an integer.
Then at least one of the following holds. 
\begin{enumerate}
\item \label{item:no-cycle} There is a connected subset $X \subseteq V(G)$ with $|X| \leq t'-1$, such that $N[X]$ is a $\frac{3}{4}$-balanced separator of~$G$.
\item \label{item:cycle} There is a $t'$-\prop pair $(C,P)$ in $G$.
\end{enumerate}
\end{lemma}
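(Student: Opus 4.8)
The plan is to derive the statement from Lemma~\ref{lem:gyarfas} and then run a case analysis on how the components of $G$ minus the separator attach to the path it produces. First I would pick an arbitrary vertex $v$ and apply Lemma~\ref{lem:gyarfas} to obtain an induced path $Q=q_1q_2\ldots q_m$ starting at $v$ with $N[Q]$ a $\tfrac34$-balanced separator of $G$. If $m\le t'-1$, then $X:=V(Q)$ is a connected set of at most $t'-1$ vertices with $N[X]=N[Q]$ a balanced separator, which is conclusion~\cref{item:no-cycle}; so assume $m\ge t'$. For each component $D$ of $G-N[Q]$ let $I(D)\subseteq[m]$ be the set of indices $i$ for which $D$ is adjacent to $N[q_i]$; since $G$ is connected, $I(D)\neq\emptyset$.

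Suppose first that \emph{every} component $D$ of $G-N[Q]$ satisfies $\max I(D)-\min I(D)<t'-1$, that is, each ``outside'' component attaches only to a short window of $Q$. Then I would sweep an index $i$ along $Q$, comparing the total number of vertices in components with $\max I(D)<i$ (non-decreasing in $i$) against that in components with $\min I(D)>i$ (non-increasing), and pick an index $i^{*}$ at which these are balanced; setting $X$ to be the window of $t'-1$ consecutive vertices of $Q$ around $i^{*}$ (truncated at the ends of $Q$, and slid slightly if a single heavy component needs to be cut off cleanly), each outside component lies entirely on one side of $N[X]$ because its attachment window is shorter than $X$, so $G-N[X]$ splits into two parts of size at most $\tfrac34|V(G)|$ each. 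This gives conclusion~\cref{item:no-cycle} with $|X|\le t'-1$.

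Otherwise some component $D$ of $G-N[Q]$ has $a,b\in I(D)$ with $b-a\ge t'-1$. Then $q_aq_{a+1}\ldots q_b$ is an induced path on at least $t'$ vertices, and through $D$ (entered from at most one neighbour of $q_a$ and one of $q_b$) there is a path from $q_a$ to $q_b$ avoiding $q_{a+1},\ldots,q_{b-1}$; concatenating these, and then cleaning up by replacing the detour through $D$ with a shortest one and, if necessary, shrinking the arc to destroy chords, yields an induced cycle $C$ on at least $t'$ vertices whose intersection with $V(Q)$ still contains at least $t'-1$ vertices. Choose $P$ to be a $(t'-1)$-vertex sub-path of $C$ lying inside $V(Q)$. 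Then $N_{V(G)-C}[P]\subseteq N[P]\subseteq N[Q]$, so $G-N_{V(G)-C}[P]\supseteq G-N[Q]$, and the component of $G-N_{V(G)-C}[P]$ containing $C-P$ contains the whole component of $G-N[Q]$ in which the part of $C$ inside $D$ lies; checking that this component is the largest one shows $(C,P)$ is a $t'$-\prop, giving conclusion~\cref{item:cycle}.

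I expect the main obstacle to be the cleanup in the last case — producing a cycle that is simultaneously induced, of length at least $t'$, and still shares a $(t'-1)$-vertex arc with $Q$, and then verifying that $C-P$ ends up in the \emph{largest} component of $G-N_{V(G)-C}[P]$. The last point is subtle when the far-attached component $D$ is itself small (one may then need to choose a different component or arc, or apply the argument inductively on a suitable subgraph), while the bookkeeping with the window lengths and the constant $\tfrac34$ in the first case, though routine, must be arranged so that everything comes out exactly; the degenerate cases with $t'$ small should be dispatched by hand.
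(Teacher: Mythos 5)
The paper does not prove this lemma: it is imported verbatim from Chudnovsky et al.\ \cite{chudnovsky2020quasi}, so there is no ``paper's proof'' to match your route against. Your overall strategy — invoke Lemma~\ref{lem:gyarfas} to get a balanced Gyárfás path $Q$, and then case on whether some component of $G-N[Q]$ attaches to a $\geq(t'-1)$-long window of $Q$ — is a sensible line of attack and likely close in spirit to the source. But as written the sketch has genuine gaps in both branches.

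In the ``all windows short'' case, the sweep is not justified. The components of $G-N[X]$ for a window $X\subseteq V(Q)$ are assembled from pieces of $G-N[Q]$, the two tails of $Q$ outside $N[X]$, and vertices of $N(Q)\setminus N[X]$, glued according to chords into $N[Q]$; they are not simply ``left'' and ``right''. You need to (a) define precisely what the two monotone quantities count, (b) prove monotonicity survives the regluing as $X$ slides, and (c) handle the jump: a single component of size close to $\tfrac34 n$ moving across $X$ can skip the balanced interval entirely, and your fix of sliding $X$ to swallow its attachment window must still be shown to leave the two tail components below $\tfrac34 n$. None of this is carried out.

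In the ``wide window'' case there are two separate problems. First, the detour from $q_a$ to $q_b$ re-enters $N(Q)$ at vertices $u_a,u_b$ which may also be adjacent to interior vertices $q_{a+1},\ldots,q_{b-1}$; killing those chords shrinks the arc on $Q$, potentially below $t'-1$ vertices, and then no admissible $P\subseteq V(Q)\cap V(C)$ exists. Second, and more seriously, the definition of a $t'$-\prop\ requires $C-P$ to sit in the \emph{largest} component of $G-N_{V(G)-C}[P]$, and your argument only shows $C-P$ is connected to the tail of $Q$ on its own side of $P$. If most of $G$'s mass hangs off the opposite tail, or off a component $D'$ of $G-N[Q]$ that $C$ never meets, $C-P$'s component need not be largest; choosing $P$ at the other end of the arc only swaps which tail you capture and does not cover the case where heavy material lies on both sides of the cycle. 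You flag this yourself (``choose a different component or arc, or apply the argument inductively''), but the obligation is never discharged, so conclusion~\cref{item:cycle} is not actually established. These are the points you would need to resolve to have a complete proof.
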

 
We proceed to the main result of this section.
\begin{theorem}\label{thm:bt-free-sep}
Let $t \geq 2$ and let $G$ be a connected $\cBt$-free graph. There exists $X \subseteq V(G)$ of size at most $7t$ such that $N[X]$ is a $\frac{3}{4}$-balanced separator in $G$.
\end{theorem}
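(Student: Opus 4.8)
The plan is to combine \cref{lem:no-cycle} with the $\cBt$-freeness hypothesis to rule out (or tame) the second alternative, so that only the desired separator survives. I would apply \cref{lem:no-cycle} with $t' := 2t$. If alternative~\cref{item:no-cycle} holds, we obtain a connected set $X$ with $|X| \le 2t-1 \le 7t$ such that $N[X]$ is a $\frac34$-balanced separator, and we are done immediately. So assume alternative~\cref{item:cycle} holds: there is a $2t$-\prop pair $(C,P)$, meaning $C$ is an induced cycle on at least $2t$ vertices, $P$ is a $(2t-1)$-vertex subpath of $C$, and the largest component $K$ of $G - N_{V(G)-C}[P]$ contains $C - P$. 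The point of the bound $t'=2t$ is precisely that the cycle $C$ is long enough (more than $2t$ vertices, since $|C| \ge 2t$ and $|C| > |P| = 2t-1$) to be a candidate for the forbidden family $\cBt$, which consists of $B'_{k,t}$ and $B^\Delta_{k,t}$ for $k > 2t$.

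The main work is to extract from the $2t$-\prop a bounded-size set $X$ doing the job. The natural candidate is to take $X$ to be $P$ together with a short path reaching into the large component $K$. Concretely: since $C - P \subseteq K$, pick the endvertices $p_1, p_2$ of $P$ and note each $p_i$ has a neighbor on $C - P$; walking along $C$ from $p_1$ into $C-P$ gives a short induced path. More carefully, I would argue as follows. Let $x$ be an endvertex of $P$. If $C - P$ has at most, say, $5t$ vertices, then take $X := V(C)$ restricted suitably — actually $|V(C)|$ may be large, so instead take $X$ to be $P$ plus a $t$-vertex "probe" path $R$ obtained by extending $P$ one step into $C-P$ and then walking $t-1$ further steps along $C$. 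Then $N[X]$ contains $N_{V(G)-C}[P]$ plus the neighborhood of $R$; one shows every component of $G - N[X]$ other than (a piece of) $K$ is already a component of $G - N_{V(G)-C}[P]$, hence small, while $K$ itself gets split because $R$ together with $N[R]$ cuts the long cycle $C$ into a short arc. The key combinatorial fact enabling this is that $G$ is $\cBt$-free: if some component of $G - N_{V(G)-C}[P]$ other than $K$ had more than $|V(G)|/2$... no — rather, $\cBt$-freeness is what controls how long an induced path can "hang off" the cycle $C$ inside a single component, which is exactly the structure $B'_{k,t}$ forbids. I would use it to bound the distance (within $G - N_{V(G)-C}[P]$) from $C-P$ to any vertex of $K$, or equivalently to show $K$ has bounded diameter "around" $C$, so that a constant-size probe suffices to cut it.

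Assembling the bound: $X$ will be the union of $P$ (at most $2t-1$ vertices), a probe path into $K$ (at most $t$ vertices), and possibly one or two extra short paths to handle both endvertices of $P$ symmetrically and the case where $C$ wraps around; budgeting generously this totals at most $7t$ vertices. One must then verify the balanced-separator property: every component of $G - N[X]$ is either an old component of $G - N_{V(G)-C}[P]$ (all of which, except $K$, have size $\le \frac34|V(G)|$ by definition of the \prop, since $K$ was the largest and contains $C-P$), or a sub-component of $K$ whose size is bounded away from $|V(G)|$ because the probe, using $\cBt$-freeness, severs the long cycle and therefore disconnects $C-P$ from the bulk of $K$ — here is where one uses that a $\cBt$-free graph cannot contain a long induced cycle with a long induced path attached, which pins down the "shape" of $K$ near $C$.

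The hardest part will be Step~3: turning the purely structural statement "$(C,P)$ is a $2t$-\prop and $G$ is $\cBt$-free" into a quantitative claim that a constant-size probe $R$ breaks $K$ into pieces of size $\le \frac34 |V(G)|$. The delicate point is that $B'_{k,t}$ forbids a long induced path pendant to a long induced cycle, but $K$ need not itself be cyclic plus pendant path — one has to carefully choose an induced path in $K$ emanating from $C-P$, show that if it were long it would create (together with an arc of $C$ and the attachment through $P$) an induced $B'_{k,t}$ or $B^\Delta_{k,t}$ with $k > 2t$, and thereby bound its length. Handling the $B^\Delta$ case (the line-graph variant) in parallel, and making sure the path we pick is genuinely induced and attaches to the cycle in the prescribed way, is the technical crux; the rest is bookkeeping with neighborhoods and the definition of \prop.
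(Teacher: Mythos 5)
Your opening move — apply \cref{lem:no-cycle} with $t'=2t$ and dispose of alternative~\cref{item:no-cycle} immediately — matches the paper exactly, and your intuition that a ``probe'' path of bounded length must separate the big piece is in the right spirit. But the actual argument is substantially more involved than your sketch, and there are concrete gaps that would need to be filled before this goes through.

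First, you don't choose the $2t$-\prop pair $(C,P)$ to \emph{minimize} $|C|$ among all such pairs. The paper does, and this minimality is essential later: it is what bounds $|U'| \le 3$ in the analysis of the component $D$ (Claim~12.4), which in turn is what keeps $|X|$ under $7t$. Without this choice, the construction of $X$ in the ``big outside component'' case loses control of a piece of the cycle.

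Second, and more fundamentally, the paper's proof splits into two cases according to whether $N[C]$ is already a $\frac34$-balanced separator of $G$. Your proposal conflates these. When $N[C]$ is already balanced but $|C|>7t$, there is no ``big component outside $C$'' to probe into; the whole problem is that $C$ itself is too large to take as $X$, and one must find a short \emph{arc} of $C$ (plus two guard sub-arcs of length $2t$ on either side, the set $T_Z$) whose closed neighborhood still separates $G$ into small pieces. This requires an argument (Claims~12.2--12.3) that is entirely different from a probe path and that you do not sketch. In the other case, $N[C]$ is not balanced and there is a large component $D$ of $G-N[C]$; then the probe is an induced path $Q$ \emph{inside} $D$ obtained from \cref{lem:gyarfas}, whose length is bounded by $t-1$ exactly because $C \cup \{v_0\} \cup Q$ would otherwise be an induced $B'_{k,t}$ or $B^\Delta_{k,t}$ — you gesture at this but place the probe along $C$ into $C-P$, which is a different object.

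Third, you identify ``Step~3'' as the technical crux and leave it open. The paper resolves it via Claim~12.1: if $R,Q$ are arcs of $C$ whose removal leaves two long arcs $R',Q'$ of length $\ge 2t-2$, then $R$ and $Q$ cannot lie in the same component of $G - N_{\overline{C}}[R' \cup Q']$, since any connecting induced path would assemble (together with an arc of $C$) a forbidden $B'_{k,t}$ or $B^\Delta_{k,t}$. This single claim is invoked three times and is where $\cBt$-freeness actually enters; your sketch never isolates it, and without it neither case closes. So the overall shape you propose is right, but you have not located the real leverage point, and the case you implicitly focus on is not the one that carries the main difficulty.
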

\begin{proof}
Consider a connected graph $G$ that is $\cBt$-free, and let $n=|V(G)|$.
For contradiction, suppose that there is no set $X \subseteq V(G)$ that satisfies the statement of the theorem.

Therefore, calling \cref{lem:no-cycle} for $G$ and $t'=2t$ results in the second outcome, i.e., there is a $2t$-\prop in $G$. Fix $2t$-\prop $(C,P)$ that that minimizes the number of vertices in $C$ (if there is more than one such minimal pair, we choose one arbitrarily).
Recall that $P$ is a $(2t-1)$-vertex path. Denote the set $V(G) - C$ by~$\overline{C}$. 

The following claim will be used several times throughout the proof.

\begin{claim}\label{cla:not-btfree}
Let $R,Q$ be vertex-disjoint paths that belong to $C$, such that $C-(R \cup Q)$ consists of two connected components,
which are vertex-disjoint paths $R',Q'$, and $|R'|,|Q'| \geq 2t-2$.
Then $R$ and $Q$ do not belong to the same connected component of $G-N_{\overline{C}}[R' \cup Q']$.
\end{claim}
\begin{claimproof}
Assume otherwise, and let $G':=G-N_{\overline{C}}[R' \cup Q']$.
Let $r_1 \in N_G(R')$ and $r_2 \in N_G(Q')$ be the endvertices  of $R$, and let $q_1 \in N_G(Q')$ and $q_2 \in N_G(R')$ be the endvertices of $Q$.
Since $C$ is an induced cycle, there are no edges with one endvertex in $Q$ and the other in $R$.
Therefore, as $R$ and $Q$ belong to the same connected component $D$ of $G'$, there exists an induced path $Y$ in $G'$ with one endpoint, say $y_R$, in $N_{G'}(R)$ and another, say $y_Q$, in $N_{G'}(Q)$ (possibly $y_R = y_Q$).

Let $r'_1$ and $r'_2$, be, respectively, the $r_1$-extremal and the $r_2$-extremal vertex of $N_R(y_R) \cap R$ with respect to $R$ (note that it is possible that $r'_1=r_1$ or $r'_2=r_2$ or $r'_1=r'_2$). 
Similarly, let $q'_1$ and $q'_2$, be, respectively, the $q_1$-extremal and the $q_2$-extremal vertex of $N(y_Q) \cap Q$ with respect to $Q$. 

Denote by $R''$ ($Q''$, respectively), the $r'_1$-$q'_2$-path ($r'_2$-$q'_1$-path, respectively) that belongs to $C$ and contains $R'$ ($Q'$, respectively).
Since $|R'|,|Q'| \geq 2t-2$, each of $R''$ and $Q''$ has at least $2t$ vertices.
Let $Q^*$ be the subpath of $Q''$ that starts with $r'_2$ and consists of $t$ vertices.
Let $B=R'' \cup Y \cup Q^*$.
Observe that if $r'_1r'_2 \in E(G)$, then $G[B]$ induces the graph $B^\Delta_{|R'' \cup Y|,t}$, and if $r'_1r'_2 \notin E(G)$, then $G[B]$ induces the graph $B'_{|R'' \cup Y|,t}$ (see \ko{figure}).
This is a contradiction with $G$ being $\cBt$-free. 
\end{claimproof}

Let $S \subseteq V(C)$ be a connected set; note that either $S=V(C)$ or $S$ induces a path that belongs to $C$.
If $|V(C)-S|>4t$, then $G[S]$ is an $s_1$-$s_2$-path, for some $s_1,s_2 \in S$.
In this case, let $T_1,T_2 \subseteq V(C)-S$ be the paths that belong to $C$, such that $|T_1|=|T_2|=2t$, and $T_1$ starts with the neighbor of $s_1$ and $T_2$ starts with the neighbor of $s_2$. 
Let $T_S=T_1\cup T_2$. 
If $|V(C)-S|\leq 4t$, we define $T_S:=V(C) -S$; note that $T_S$ might be empty in this case.

By $A_S$ we denote the path $C - (V(S) \cup T_S)$; again, $A_S$ might be empty.\prz{figure with $S_X,A_X,\cW_X$?}
Finally, by $\cW(S)$ we denote the connected component of $G - N_{\overline{C}}[T_S]$ that contains $S$.

We split the proof into two cases: either $N_G[C]$ is a $\frac{3}{4}$-balanced separator of $G$, or not. 

\paragraph{Case 1: $N[C]$ is a $\frac{3}{4}$-balanced separator of $G$.}
Note that in this case we assume that $C$ has more than $7t$ vertices, as otherwise $V(C)$ is the set satisfying the statement of the theorem, a contradiction.

\begin{claim}\label{cla:y-exists}
There exists a path $Y$ that belongs to $C$ and $|\cW(Y)| \leq \frac{2}{3}n$. 
\end{claim}
\begin{claimproof}
Assume otherwise, and fix some $c \in V(C)$.
As $|C| >7t$, we have $|C-\{c\}| \geq 7t$ and thus $A_{\{c\}}$ is nonempty. 
In particular, for every $c \in C$ we have that $|\cW(\{c\})| > \frac{2}{3}n$ and also $|\cW(A_{\{c\}})| > \frac{2}{3}n$, as both paths $\{c\}$ and $A_{\{c\}}$ belong to $C$.
Furthermore, note that $T_{\{c\}}=T_{A_{\{c\}}}$. 
This implies that both $\cW(\{c\})$ and $\cW(A_{\{c\}})$ are connected components of the same graph $G - N_{\overline{C}}[T_{\{c\}}]$.
Moreover, since each of them has more than $\frac{2}{3}n$ vertices, we must have $\cW(\{c\})=\cW(A_{\{c\}})$.
However, $A_{\{c\}}$ and $\{c\}$ satisfy the assumptions of \cref{cla:not-btfree} and thus they cannot belong to the same component of $G - N_{\overline{C}}[T_{\{c\}}]$, a contradiction. 
\end{claimproof}

Let $Y$ be an inclusion-wise maximal connected subset of $C$ that satisfies $|\cW(Y)|\leq\frac{2}{3}n$; its existence is asserted by \Cref{cla:y-exists}.
Note that $Y \neq C$, as $\cW(C) = G$.
Let $Z$ be a minimal connected subgraph of $C$ that properly contains $Y$ (i.e., $Y \subsetneq Z$).
Clearly, $Z$ is either a path or $Z=C$, and $Z=Y\cup\{z\}$ for some $z \in N_C(Y)$.
Consequently, $|Z| \geq 2$.
From the maximality of $Y$ it follows that $|\cW(Z)|>\frac{2}{3}n$, so $\cW(Z)$ must be the largest connected component of  $G - N_{\overline{C}}[T_Z]$.


Define $X=T_Z \cup \{z\}$, and observe that $|X| \leq 4t+1 \leq 7t$. 

\begin{claim}\label{cla:cycle-thi}
The set $N_{\overline{C}}[X]$ is a $\frac{3}{4}$-balanced separator of $G$.
\end{claim}
\begin{claimproof}
For contradiction assume that there exists a connected component $D$ of $G-N_{\overline{C}}[X]$, such that $|D|>\frac{3}{4}n$. 

Let $\cW'(Y)$ be the connected component of $G-N_{\overline{C}}[X]$ that contains $Y$.
Clearly, we either have $\cW'(Y)=D$ or $\cW'(Y) \cap D = \emptyset$.
Since $N_{\overline{C}}[T_Y] \subseteq N_{\overline{C}}[X]$, we have $\cW'(Y) \subseteq \cW(Y)$. Since $|\cW(Y)| \leq \frac{2}{3}n$, we cannot have $\cW'(Y)=D$.
Therefore $\cW'(Y) \cap D = \emptyset$. In particular, no vertex from $Y$ belongs to $D$.

Recall that $A_Z = C-(Y \cup X)$. If $A_Z \cap D = \emptyset$, then $D$ is a component of $G-N_G[C]$, and thus $|D| \leq \frac{3}{4}n$, a contradiction with the choice of $D$. 
Therefore $A_Z$ must be nonempty and intersect $D$. As $A_Z$ is in one component of $G-N_{\overline{C}}[X]$,
we conclude that $A_Z \subseteq D$.
 
Clearly, $D$ is a connected subgraph of $G-N_{\overline{C}}[T_Z]$ and since $|\cW(Z)| > \frac{2}{3}n$, we have that $D \cap \cW(Z) \neq \emptyset$. Consequently, $D \subseteq \cW(Z)$.

Summing up, we observe that both $Z$ and $A_Z$ are contained in the same connected component $\cW(Z)$ of $G-N_{\overline{C}}[T_Z]$, a contradiction with \cref{cla:not-btfree}.
\end{claimproof}

As every component of $G-N_G[X]$ is fully contained in a connected component of $G-N_{\overline{C}}[X]$,
by \cref{cla:cycle-thi} we observe that $N[X]$ is a $\frac{3}{4}$-balanced separator of $G$.
This contradicts our initial assumption and completes the proof in Case 1.

\paragraph{Case 2: $N[C]$ is not a $\frac{3}{4}$-balanced separator of $G$.}
Thus there exists a component $D$ of $G - N[C]$ such that $|D| > \frac{3}{4}n$. 

Recall that $(C,P)$ is a $2t$-\prop, so the largest component $F$ of $G-N_{\overline{C}}[P]$ contains $C-P$.
Since $D$ is a connected subset of $V(G) -N_{\overline{C}}[P]$ and $|D|>\frac{3}{4}n$, clearly $D \subseteq F$.

\begin{claim}\label{cla:central-vertex}
There exists $v_0 \in (N_G(C)\cap N_G(D)) - N_{\overline{C}}[P]$.
\end{claim}
\begin{claimproof}
Recall that $F$ is a connected component of $G-N_{\overline{C}}[P]$ that contains $C - P$,
and $D \subseteq F$ is a connected component of $G-N_G[C]$. Consequently, $D \cap (C-P) = \emptyset$.
Therefore, every $D$-$(C-P)$-path in $F$ must contain a vertex from $(N_G(C)\cap N_G(D)) - N_{\overline{C}}[P]$.
\end{claimproof}

Let $v_0 \in (N_G(C)\cap N_G(D)) - N_{\overline{C}}[P]$ be obtained by \cref{cla:central-vertex}.
Define sets $U,U' \subseteq V(G)$ and vertices $u_1,u_2 \in U$ as follows. 
If $N_C(v_0)=\{u\}$ for some $u \in C$, then $U=C$, $U'=\{u\}$, and $u_1=u_2=u$. 
Otherwise, let $u_1$ and $u_2$ be the two extremal vertices of $N_C(v_0)$ with respect to $C-P$, 
and define $U$ and $U'$ to be $u_1$-$u_2$ paths that belong to $C$ such that $U$ contains $P$ and $U'$ does not.\prz{figure?}

\begin{claim}\label{clm:Uprime}
The set $|U'|$ has at most three elements.
\end{claim}
\begin{claimproof}
The claim is clear if $|N_C(v_0)|=1$, since then $|U'|=1$.

Thus consider the second case. Let $C'$ be the cycle induced by $U \cup \{v_0\}$.
Since $P$ is a subpath of $U$, we have that $P$ belongs to $C'$.
Denote by $P'$ the (possibly empty) set $U-(\{u_1,u_2\} \cup P)$.
Since the largest component $F$ of $G- N_{\overline{C}}[P]$ contains $C-P$, in particular, it contains $\{u_1,u_2\} \cup P'$.
Since $v_0 \in V(G)- N_{\overline{C}}[P]$ and $v_0 \in N(u_1)$, we must have that $v_0 \in F$.
Therefore, $C'-P=\{v_0,u_1,u_2\}\cup P' \subseteq F$, so $(C',P)$ is a $2t$-\prop.
However, if $|C-U|>1$, then $|C'|<|C|$, a contradiction with the choice of~$C$. Consequently, $|C - U| \leq 1$ and thus $|U'| = |\{u_1,u_2\} \cup (C-U)|\leq 3$.
\end{claimproof}

Let $v_1 \in D \cap N(v_0)$; it exists since $v_0 \in N_G(D)$.
By \cref{lem:gyarfas} applied for $G[D]$ and $v_1$, there exists an induced path $Q$ that starts with $v_1$, such that each component of $G[D]-N[Q]$ has at most $\frac{3}{4}|V(G[D])| \leq \frac{3}{4}n$ vertices.
Observe that $|Q| \leq t-1$, as otherwise $U \cup \{v_0\} \cup Q$ induces a subgraph in $\cBt$. 

Moreover, we observe the following.
\begin{claim} \label{clm:tadpole}
The set $N[C \cup \{v_0\} \cup Q]$ is a $\frac{3}{4}$-balanced separator of $G$.
\end{claim}
\begin{claimproof}
The claim follows immediately from the facts that $D$ is the only component of $G - N[C]$ with more that $\frac{3}{4}n$ vertices, and $N[Q]$ is a $\frac{3}{4}$-balanced separator of $D$.
\end{claimproof}

Thus we may assume that $|C|> 6t$, as otherwise $C \cup \{v_0\} \cup Q$ satisfies the statement of the theorem, which contradicts our initial assumption. 
Now we will show that it is sufficient to take a constant-sized subset of the above separator.

Since by \cref{clm:Uprime} we have $|C-U'|> 6t-3 > 4t$, the set $T_{U'}$ consists of two connected components, i.e., paths $U_1$ and $U_2$.\prz{figure?}
Let $X=T_{U'} \cup U' \cup \{v_0\} \cup Q$. Note that $|X| \leq 2 \cdot 2t + 3 + 1 + (t-1) = 5t+3 \leq 7t$.

\begin{claim}\label{obs:balloon-sep}
The set $N_{\overline{C}}[X]$ is a $\frac{3}{4}$-balanced separator of $G$.
\end{claim}
\begin{claimproof}
For contradiction suppose that there is a component $D'$ of $G-N_{\overline{C}}[X]$ of size more than $\frac{3}{4}n$.
Clearly, $C - (T_{U'} \cup U') \subseteq V(G)-N_{\overline{C}}[X]$.
By \cref{clm:tadpole}, we must have that $C - (T_{U'} \cup U') \subseteq D'$.
However, since $|D|>\frac{3}{4}n$ and $|D'|>\frac{3}{4}n$, we also have that $D \cap D' \neq \emptyset$.

Consider the graph $G'=G-N_{\overline{C}}[T_{U'}]$.
Clearly, $D, D' \subseteq V(G')$, and because $D \cap D' \neq \emptyset$, they are contained in the same, largest connected component $F'$ of $G'$.
In particular, $C - (T_{U'} \cup U') \subseteq F'$.
Observe that $u_1,v_0 \in V(G')$, and, because $u_1 \in N(v_0)$ and $v_0 \in N(D)$, we have $u_1 \in F'$.
Therefore, $F'$ contains both, $\{u_1\}$ and $C - (T_{U'} \cup U')$ which, by \cref{cla:not-btfree} gives a contradiction.
\end{claimproof}

Again, every component of $G-N_G[X]$ is fully contained in a connected component of $G-N_{\overline{C}}[X]$, 
so by \cref{obs:balloon-sep} we observe that $N[X]$ is a $\frac{3}{4}$-balanced separator of $G$.
This contradicts our initial assumption and completes the proof in Case 2, and thus the whole proof.
\end{proof}
\subsection{Subexponential-time algorithm for $\{S_{a,b,c},K_3\}$-free graphs}\label{sec:subdivided-algo}
Let us observe that every $\{S_{t,t,t},K_3\}$-free graph is in particular $\cBt$-free.
Thus \cref{thm:bt-free-sep} yields the following.

\begin{corollary} \label{cor:separator-sttt}
For every $t \geq 2$, every $\{S_{t,t,t},K_3\}$-free graph with maximum degree $\Delta$ admits a $\frac{3}{4}$-balanced separator of size at most $7t\Delta$\prz{verify the constant}. Such a separator can be found in polynomial time.
\end{corollary}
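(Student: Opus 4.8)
The plan is to obtain the corollary essentially for free from \cref{thm:bt-free-sep}, after two small bits of bookkeeping: reducing to connected graphs and converting the size bound via the maximum degree. First I would invoke the observation made just before the statement: every graph in the family $\cBt$ contains an induced $S_{t,t,t}$ or an induced triangle, so a $\{S_{t,t,t},K_3\}$-free graph is $\cBt$-free, and hence \cref{thm:bt-free-sep} is applicable. That theorem, however, assumes connectivity. If $G$ has a component $G_0$ with $|V(G_0)| > \tfrac{3}{4}|V(G)|$, apply \cref{thm:bt-free-sep} to $G_0$ to get $X\subseteq V(G_0)$ with $|X|\le 7t$ and $N_{G_0}[X]$ a $\tfrac{3}{4}$-balanced separator of $G_0$; since no vertex of $X$ has a neighbour outside $G_0$ we have $N_G[X]=N_{G_0}[X]$, and each component of $G-N_G[X]$ is either a component of $G_0-N_{G_0}[X]$ (of size $\le \tfrac{3}{4}|V(G_0)|\le \tfrac{3}{4}|V(G)|$) or another component of $G$ (of size $<\tfrac{1}{4}|V(G)|$). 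If on the other hand no component of $G$ has more than $\tfrac{3}{4}|V(G)|$ vertices, then $X:=\emptyset$ already works. In either case we obtain $X$ with $|X|\le 7t$ whose closed neighbourhood is a $\tfrac{3}{4}$-balanced separator of $G$.

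It then remains to bound $|N[X]|$ and to argue the polynomial running time. For the size, each $x\in X$ contributes at most $|N[x]|\le \Delta+1$ vertices, so $|N[X]|\le |X|(\Delta+1)\le 7t(\Delta+1)$; this is the asserted $\Oh(t\Delta)$ bound (and $\le 7t\Delta$ up to the immaterial additive term flagged in the statement). For the running time I would sidestep any need for an effective version of \cref{thm:bt-free-sep}: since $t$ is a fixed constant, there are only $\binom{V(G)}{\le 7t}=|V(G)|^{\Oh(1)}$ candidate sets $X$, and for each one a single graph search on $G-N[X]$ decides in linear time whether $N[X]$ is a $\tfrac{3}{4}$-balanced separator. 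The existence part (together with the connectivity reduction above) guarantees that at least one candidate succeeds, so enumerating all of them and returning the first success runs in polynomial time.

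I do not expect a genuine obstacle here, since all the structural content lives in \cref{thm:bt-free-sep}; the only two points that deserve care are (i) verifying in the disconnected reduction that the small components cannot themselves be large relative to $|V(G)|$ — which holds precisely because in the nontrivial case the largest component already exceeds $\tfrac{3}{4}|V(G)|$ vertices — and (ii) the remark that a brute-force search over constant-size vertex subsets suffices for the ``polynomial time'' clause, so no algorithmic refinement of \cref{thm:bt-free-sep} is needed.
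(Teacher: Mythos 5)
Your proof follows the paper's (unstated) argument exactly: invoke the observation that a $\{S_{t,t,t},K_3\}$-free graph is $\cBt$-free, apply \cref{thm:bt-free-sep}, and then handle the three small points the paper leaves implicit — the connectivity reduction, the conversion $|N[X]|\le |X|(\Delta+1)$, and the brute-force enumeration over $\binom{V(G)}{\le 7t}$ for the algorithmic claim. All three of those details are handled correctly; in particular, your reduction to the connected case (apply the theorem to a component exceeding $\frac34 n$ if one exists, else take $X=\emptyset$) is exactly what is needed.

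There is, however, a subtle off-by-one that both you and the paper gloss over. The observation that every member of $\cBt$ contains an induced $S_{t,t,t}$ or a triangle fails at the boundary $k=2t+1$: the graph $B'_{2t+1,t}$ has $3t+1$ vertices (the same as $S_{t,t,t}$) but $3t+1$ edges (one more than $S_{t,t,t}$), so it cannot contain an induced $S_{t,t,t}$, and it is triangle-free since its cycle has length $2t+1\ge 5$. Hence $B'_{2t+1,t}$ is $\{S_{t,t,t},K_3\}$-free yet belongs to $\cBt$, so the implication ``$\{S_{t,t,t},K_3\}$-free $\Rightarrow$ $\cBt$-free'' is false as stated. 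The fix is to argue instead that $\{S_{t,t,t},K_3\}$-free implies $\cB_{t+1}$-free: for $k\ge 2t+3$ the tips of the two length-$t$ cycle branches of $B'_{k,t+1}$ sit at cyclic distance at least $3$, so together with a length-$t$ prefix of the pendant path they induce an $S_{t,t,t}$, while $B^\Delta_{k,t+1}$ still contains a triangle. Applying \cref{thm:bt-free-sep} with parameter $t+1$ then gives $|X|\le 7(t+1)$, hence a separator of size at most $7(t+1)(\Delta+1)$. This is still $\Oh(t\Delta)$ — consistent with the constant the authors flagged for verification — and nothing downstream changes, since only the asymptotic bound is ever used.
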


The existence of balanced separators is closely related to the treewidth. In particular, we will use the following result by Harvey and Wood~\cite{DBLP:journals/jgt/HarveyW17}. Let us point out that the actual statement in~\cite{DBLP:journals/jgt/HarveyW17} is much more general, but we cite only the version we are going to use in order not to introduce the new notation.

\begin{theorem}[Harvey, Wood~\cite{DBLP:journals/jgt/HarveyW17}] \label{thm:harveywood}
Let $G$ be a graph and $k$ be an integer. If every induced subgraph of $G$ has a $\frac{3}{4}$-balanced separator of size at most $k$, then $\tw{G} \leq 4k-1$. 
\end{theorem}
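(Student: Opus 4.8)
The plan is to reconstruct the standard argument that turns small balanced separators into a tree decomposition of bounded width, tracking the constants carefully so as to land exactly on $4k-1$. The statement I would actually prove by induction is slightly stronger than the one in the theorem: for every $W\subseteq V(G)$ with $|W|\le 3k$ there is a tree decomposition of $G$ of width at most $4k-1$ in which $W$ is contained in a single bag; the theorem then follows by taking $W=\emptyset$. Equivalently, I would phrase this as a recursive procedure that, given an induced subgraph $H$ of $G$ together with an \emph{interface} set $W\subseteq V(H)$ with $|W|\le 3k$ and $N_G(V(H)\setminus W)\subseteq W$, outputs a tree decomposition of $H$ of width at most $4k-1$ whose root bag contains $W$. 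The interface condition $N_G(V(H)\setminus W)\subseteq W$ is precisely the invariant that makes the decompositions of the various pieces gluable into one decomposition of the whole graph.

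The recursion itself is straightforward. If $|V(H)|\le 4k$, a single bag equal to $V(H)$ does the job. Otherwise $H$ is an induced subgraph of $G$, so by hypothesis it has a $\tfrac34$-balanced separator $S$ with $|S|\le k$, i.e.\ every component of $H-S$ has at most $\tfrac34|V(H)|$ vertices. I put $W\cup S$ into the root bag (of size at most $3k+k=4k$), and for each connected component $C$ of $H-(W\cup S)$ I recurse on $H[C\cup N_G(C)]$ with interface $N_G(C)$. Since each such $C$ lies inside a single component of $H-S$, every recursive call runs on a strictly smaller vertex set, so the recursion terminates; and the interface invariant is preserved because $N_G(C)\subseteq W\cup S$. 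Verifying that the tree obtained by attaching the pieces' root bags to the root bag $W\cup S$ is a valid tree decomposition — every edge of $H$ lies in some bag, and the bags containing any fixed vertex form a connected subtree — is routine and uses only that $W\cup S$ separates the components $C$ from one another together with the interface invariant.

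The one genuinely delicate point — and the reason the bound is $4k-1=\tfrac{k}{1-3/4}-1$ rather than the $\Oh(k\log n)$ that a careless count would give — is controlling the sizes of the interfaces passed down the recursion. A priori $N_G(C)$ could be as large as $|W\cup S|=4k$, breaking the invariant $|W|\le 3k$. The resolution exploits that $C$ sits inside a component $\widehat C$ of $H-S$ with $|\widehat C|\le\tfrac34|V(H)|$, so $N_G(C)\subseteq S\cup(W\cap\widehat C)$; since the sets $W\cap\widehat C$ over the components $\widehat C$ of $H-S$ are disjoint and sum to at most $3k$, at most one of them can exceed $2k$, and for all the other components the child interface has size at most $k+2k=3k$ and we recurse directly. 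The unique ``heavy'' component must instead be peeled off by an auxiliary split — deleting it leaves an interface of size at most $k+(3k-2k)=2k$, and the heavy component is itself large enough that recursing into it still strictly shrinks the vertex set — and it is exactly the geometric decay of the piece sizes by the factor $\tfrac34$ that keeps the accumulated separators from ever pushing a bag past $4k$. Pinning down this bookkeeping is precisely the content of the argument of Harvey and Wood~\cite{DBLP:journals/jgt/HarveyW17}; an essentially equivalent route proves the contrapositive through bramble/haven duality, showing that the presence of $\tfrac34$-balanced separators of size $\le k$ in every induced subgraph forbids a bramble of order exceeding $4k$, whence $\tw{G}\le 4k-1$. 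I expect this interface-size control (and in particular the treatment of the heavy component) to be the only real obstacle; everything else is the familiar divide-and-conquer on balanced separators.
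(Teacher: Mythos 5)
The paper does not prove \cref{thm:harveywood}; it cites it as a specialization of a more general result of Harvey and Wood~\cite{DBLP:journals/jgt/HarveyW17}, so there is no in-paper proof to compare against, and your reconstruction must stand on its own.

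As written, your reconstruction has a genuine gap, and it is exactly at the place you flag as the ``only real obstacle.'' The hypothesis gives a $\tfrac34$-balanced separator $S$ with respect to the vertex \emph{count} of $H$: it controls $|\widehat C|$ for each component $\widehat C$ of $H-S$, but it says nothing about $|W\cap\widehat C|$. So after placing $W\cup S$ in the root bag, the interface $N_H(C)\subseteq S\cup(W\cap\widehat C_j)$ for a component inside the heavy $\widehat C_j$ (the one with $|W\cap\widehat C_j|>2k$) can have size up to $k+3k=4k$, violating the invariant $|W|\le 3k$ that the induction hypothesis requires. Nothing you wrote resolves this. ``Deleting it leaves an interface of size at most $2k$'' bounds the interfaces of the \emph{other} pieces, which were already fine; it does not shrink the interface you must hand to the heavy piece. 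The fact that $|\widehat C_j|\le\tfrac34|V(H)|$ guarantees the recursion terminates, but termination is not in doubt; what is in doubt is that the interface stays at most $3k$. Your appeal to ``geometric decay of the piece sizes'' conflates decay of piece \emph{sizes} with decay of separator \emph{sizes}: the separators produced at every level all have size up to $k$, and they do not shrink, so the naive accumulation along a root-to-leaf path bounds the interface only by $O(k\log n)$, not $4k$. The identity $4k=\sum_{i\ge0}(3/4)^i k$ would be relevant only if successive separator contributions shrank by a $\tfrac34$ factor, which the hypothesis does not give you. To make this recursion close, one essentially needs a separator that is $\tfrac34$-balanced with respect to $W$ itself rather than with respect to $|V(H)|$; deriving such a separator from the unweighted vertex-count hypothesis (or routing around the need for one) is the actual content of the theorem, and that step is missing from your sketch. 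The bramble-duality route you mention has the same missing piece: a vertex-count-balanced separator need not isolate any particular sub-bramble, so the straightforward ``one separator per level'' induction again loses only $k$ of bramble order per step while gaining only a $\tfrac34$ size factor, which does not produce a contradiction on its own.
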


We will also use the fact that a tree decomposition of small width can be found efficiently.

\begin{theorem}[Korhonen~\cite{DBLP:journals/corr/abs-2104-07463}] \label{thm:tuukka}
There is an algorithm that, given an $n$-vertex graph $G$ and an integer $k$,
in time $2^{\Oh(k)} \cdot n$ either outputs a tree decomposition of $G$ with width at most $2k+1$, 
or correctly decides that $\tw{G} > k$.
\end{theorem}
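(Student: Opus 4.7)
My plan would be to combine a divide-and-conquer strategy based on approximate balanced separators with a dynamic-programming subroutine, so that the exponential cost in $k$ appears only in the enumeration of local interfaces between subproblems. First I would reduce the task to an approximation problem: rather than deciding treewidth exactly (which is \NP-hard), the goal is to produce a decomposition of width at most $2k+1$ or certify that $\tw{G} > k$. The certification side is handled for free, because any failure to find a balanced separator of the prescribed size in a subinstance witnesses a treewidth lower bound via \cref{thm:harveywood} applied in reverse.

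Concretely, the core primitive is the following: given $G$ together with a ``boundary'' set $S \subseteq V(G)$ of size at most $2k+2$ that will sit in every bag, decide in time $2^{\Oh(k)} \cdot |V(G)|$ whether $G$ admits a tree decomposition of width $2k+1$ rooted at a bag containing $S$. The algorithm searches for a balanced separator $T$ of size at most $k+1$ in $G$, such that $S \cup T$ has size at most $2k+2$, and recurses on each component of $G - (S \cup T)$ with the new boundary $(S \cap N[C]) \cup T$. The enumeration of candidate boundaries and separators is carried out by subset dynamic programming over $S$, giving the $2^{\Oh(k)}$ factor; the linear factor in $|V(G)|$ comes from the fact that at each level of recursion every vertex is charged $\Oh(1)$ work to decide which subproblem it belongs to. Since the separator is balanced, the recursion has depth $\Oh(\log n)$, which naively gives running time $2^{\Oh(k)} \cdot n \log n$.

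The main obstacle, and the reason the theorem is highly nontrivial, is eliminating the extra $\log n$ factor while preserving single-exponential dependence in $k$. A direct implementation of the recursion above, together with standard approximation algorithms for balanced vertex separators, yields $2^{\Oh(k)} \cdot n \log n$ or variants such as $k^{\Oh(k)} \cdot n$; to obtain $2^{\Oh(k)} \cdot n$ one needs a more subtle amortization, based on locally improving a given tree decomposition rather than rebuilding it from scratch. The key structural idea is to maintain a decomposition throughout the computation and, whenever a bag of size larger than $2k+2$ is encountered, replace the corresponding subtree by a better local decomposition computed by the above primitive; amortized over the entire run, each vertex of $G$ participates in only $2^{\Oh(k)}$ units of local-improvement work. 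For the purposes of the present paper a weaker bound of the form $2^{\Oh(k)} \cdot n \log n$ would suffice, so if the full amortization proves too delicate I would fall back to that statement and absorb the extra logarithmic factor into the target runtime.
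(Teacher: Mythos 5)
This statement is not proven in the paper at all: it is an external result, quoted verbatim with a citation to Korhonen's work, and the paper uses it as a black box. So the relevant question is whether your sketch would actually establish the quoted bound, and it would not. The recursive scheme you describe—carry a boundary $S$ in every bag, find a balanced separator $T$ of size at most $k+1$, take $S \cup T$ as the new bag, recurse on components—is the classical Robertson--Seymour-style approximation, and it cannot give width $2k+1$. Your requirement that ``$S \cup T$ has size at most $2k+2$'' is unjustified: what the standard separator lemma guarantees (for graphs of treewidth at most $k$) is a separator of size at most $k+1$ that splits the \emph{weight of $S$} in a balanced way, so the boundary passed to each child has size up to roughly $|S|/2 + (k+1)$, and the bags stabilize at size about $3k+3$ (or $4k+4$ in the usual formulation), i.e.\ width $3k+2$ or worse. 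The factor-$2$ approximation is precisely the part of Korhonen's theorem that does not follow from this recursion; it requires the amortized local-improvement machinery (lean-decomposition/improver arguments in the style of Bellenbaum--Diestel) that you only gesture at in your last paragraph. Your proposed fallback of $2^{\Oh(k)} \cdot n \log n$ has the same defect: the extra $\log n$ is not the issue, the width guarantee is. Also, certifying $\tw{G} > k$ does not go ``via \cref{thm:harveywood} applied in reverse''; \cref{thm:harveywood} is the converse implication (separators in all induced subgraphs bound treewidth), whereas the certificate comes from the standard fact that treewidth at most $k$ forces a balanced separator of size at most $k+1$.

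For the purposes of this paper your weaker scheme would actually suffice—in \cref{cor:treewidth-sttt} the width bound is only used up to a constant factor and the running time up to $2^{\Oh(t\Delta)}\cdot\operatorname{poly}(n)$, so any constant-factor single-exponential approximation (e.g.\ the classical $\Oh(1)$-approximation in time $2^{\Oh(k)} n \log n$, or $2^{\Oh(k)} n$) could be cited instead. But as a proof of \cref{thm:tuukka} as stated, the proposal has a genuine gap: the key idea that yields width $2k+1$ in time $2^{\Oh(k)} \cdot n$ is exactly the part left unproved.
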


Now let $G$ be an $\{S_{t,t,t},K_3\}$-free graph with maximum degree $\Delta$. 
Since every induced subgraph of $G$ is still $\{S_{t,t,t},K_3\}$-free and has maximum degree at most $\Delta$, \cref{cor:separator-sttt} and \cref{thm:harveywood} imply that $\tw{G} \leq 28t\Delta-1$.
Combining this this with \cref{thm:tuukka}, we obtain the following.

\begin{corollary} \label{cor:treewidth-sttt}
Let $t \geq 2$ be an integer.
Given an $n$-vertex $\{S_{t,t,t},K_3\}$-free graph $G$ with maximum degree $\Delta$,
in time $2^{\Oh(t \cdot \Delta)} \cdot n$ we can find a tree decomposition of $G$ with width at most $56t\Delta$.
\end{corollary}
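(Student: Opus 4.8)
The plan is to compose three facts that are already at hand: the separator bound of \cref{cor:separator-sttt}, the theorem of Harvey and Wood relating uniform balanced separators to treewidth (\cref{thm:harveywood}), and Korhonen's approximation algorithm for treewidth (\cref{thm:tuukka}).

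First I would record that being $\{S_{t,t,t},K_3\}$-free is a hereditary property and that deleting vertices does not increase the maximum degree; hence every induced subgraph $G'$ of $G$ is again $\{S_{t,t,t},K_3\}$-free with $\Delta(G') \le \Delta$. Applying \cref{cor:separator-sttt} to each such $G'$ then gives that \emph{every} induced subgraph of $G$ admits a $\frac{3}{4}$-balanced separator of size at most $7t\Delta$. This hereditary step is the only place where a little care is needed, since \cref{thm:harveywood} demands the separator bound for all induced subgraphs, not merely for $G$ itself.

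Next I would invoke \cref{thm:harveywood} with $k = 7t\Delta$, which yields $\tw{G} \le 4\cdot 7t\Delta - 1 = 28t\Delta - 1$. Finally I would run the algorithm of \cref{thm:tuukka} on $G$ with parameter $k := 28t\Delta - 1$. Its running time is $2^{\Oh(k)}\cdot n = 2^{\Oh(t\Delta)}\cdot n$, and it either outputs a tree decomposition of $G$ of width at most $2k+1 = 56t\Delta - 1 \le 56t\Delta$, or reports that $\tw{G} > k$. The second outcome is impossible by the previous paragraph, so the algorithm is forced to return the desired tree decomposition.

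There is no genuine obstacle in this argument; it is essentially a bookkeeping exercise in constants. The only subtleties worth stating explicitly are (i) that heredity is invoked so that \cref{thm:harveywood} applies, and (ii) that the treewidth upper bound $28t\Delta - 1$ is matched exactly to the threshold handed to \cref{thm:tuukka}, which is what rules out its failure branch and makes the procedure deterministic.
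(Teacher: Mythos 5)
Your proposal is correct and follows exactly the same route as the paper: heredity to invoke \cref{cor:separator-sttt} on all induced subgraphs, \cref{thm:harveywood} with $k=7t\Delta$ to conclude $\tw{G}\le 28t\Delta-1$, and then \cref{thm:tuukka} with that threshold so that the failure branch is ruled out. The constants work out identically to the paper's.
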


Now, combining \cref{obs:branching-works} with \cref{cor:treewidth-sttt}, we obtain the following result by a simple win-win approach.

\begin{theorem} \label{thm:subdivided-claw-free-algo}
Let $H$ be a connected graph which does not contain a predator.
Then for every $a,b,c\geq 0$, the \lhomo{H} problem can be solved in time $2^{\Oh(\sqrt{n \log n})}$ in $n$-vertex $\{S_{a,b,c},K_3\}$-free graphs.
\end{theorem}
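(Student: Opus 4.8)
The plan is to follow the win-win strategy hinted at in the surrounding text, splitting on the maximum degree of the instance graph $G$ at the threshold $\Delta_0 = \sqrt{n \log n}$. First I would set up the preliminaries: we may assume (as noted in \cref{sec:prelims}) that every list $L(v)$ is an incomparable set in $H$; if some list is empty we reject immediately. We may also assume $G$ is connected by handling each connected component separately, and that $H$ is fixed, so $|V(H)|$ is a constant and the number of distinct possible lists is at most $2^{|V(H)|}$, also a constant.

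\textbf{Case 1: every vertex of $G$ has degree at most $\sqrt{n\log n}$.} Here I would invoke \cref{cor:treewidth-sttt} with $\Delta = \sqrt{n\log n}$ to compute, in time $2^{\Oh(t\cdot \sqrt{n\log n})} \cdot n = 2^{\Oh(\sqrt{n\log n})}$ (recall $a,b,c$, hence $t := \max(a,b,c)+1$ or similar, are constants, and every $\{S_{a,b,c},K_3\}$-free graph is $\{S_{t,t,t},K_3\}$-free), a tree decomposition of $G$ of width $\Oh(t\cdot\sqrt{n\log n}) = \Oh(\sqrt{n\log n})$. Then a standard dynamic programming over the tree decomposition solves \lhomo{H}: for each bag of width $w$ we store, for every list-respecting assignment of the bag vertices to $V(H)$, whether it extends to the subtree below; there are at most $|V(H)|^{w+1} = 2^{\Oh(\sqrt{n\log n})}$ such assignments per bag, and the transitions take polynomial time in this count, giving total time $2^{\Oh(\sqrt{n\log n})}$.

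\textbf{Case 2: some vertex $v$ has degree more than $\sqrt{n\log n}$.} Here I would branch on the color of $v$. The point of assuming $H$ has no predator is \cref{obs:branching-works}: for any two incomparable sets $X,Y \subseteq V(H)$ of size at least $2$, there exist $x\in X$, $y\in Y$ that are non-adjacent in $H$. By the pigeonhole principle, among the more than $\sqrt{n\log n}$ neighbors of $v$, at least $\sqrt{n\log n}/2^{|V(H)|} = \Omega(\sqrt{n\log n})$ of them share a common list $L' \subseteq V(H)$. If $|L(v)| \le 1$ we just remove/fix $v$; otherwise $L(v)$ and $L'$ are both incomparable of size $\ge 2$, so by \cref{obs:branching-works} there are $x \in L(v)$ and $y \in L'$ with $xy \notin E(H)$. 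Now branch into two subinstances: (i) set $L(v) := \{x\}$, in which case all those $\Omega(\sqrt{n\log n})$ neighbors of $v$ with list $L'$ lose the candidate color $y$ (and $v$ itself is then effectively removed); (ii) set $L(v) := L(v) - \{x\}$, which strictly shrinks $L(v)$. Both branches preserve $\{S_{a,b,c},K_3\}$-freeness (we only delete vertices and shrink lists). To bound the running time, define a potential $\Phi(G,L) := \sum_{u\in V(G)} |L(u)|$, which starts at most $|V(H)|\cdot n$. In branch (ii) the potential drops by $1$; in branch (i) it drops by at least $\Omega(\sqrt{n\log n})$ (the shrunk lists of the common-list neighbors, plus the removal of $v$). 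A branching recurrence of the form $T(\Phi) \le T(\Phi-1) + T(\Phi - c\sqrt{\Phi/|V(H)|\cdot\log(\cdot)})$... — more cleanly, since every root-to-leaf path of the recursion tree has at most $\Oh(n)$ branch-(ii) steps between consecutive branch-(i) steps, and each branch-(i) step reduces $\Phi$ by $\Omega(\sqrt{n\log n})$ while $\Phi = \Oh(n)$, there are $\Oh(\sqrt{n/\log n})$ branch-(i) steps on any path, hence the recursion tree has $2^{\Oh(\sqrt{n/\log n}\cdot \log n)} = 2^{\Oh(\sqrt{n\log n})}$ leaves; at each leaf the instance is trivial, and the work per node is polynomial.

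\textbf{Main obstacle.} The delicate point is making the branching analysis in Case 2 airtight: I need the recursion to always land in Case 2 (invoking the high-degree branch) until the graph gets ``small'' or ``low-degree,'' at which point we fall back to Case 1 — so really the algorithm should be: repeatedly, if some vertex has degree $> \sqrt{n\log n}$ branch as above, otherwise run the treewidth DP. The subtlety is that after many branching steps the number of vertices $n'$ of the current subinstance may have shrunk, so the threshold $\sqrt{n'\log n'}$ changes; one must verify the potential/counting bound still yields $2^{\Oh(\sqrt{n\log n})}$ overall (it does, since shrinking $n$ only helps, and one can always measure everything against the original $n$). I would also need to double-check that Case 1's DP count $|V(H)|^{\Oh(\sqrt{n\log n})}$ and Case 2's leaf count multiply to stay within $2^{\Oh(\sqrt{n\log n})}$. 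This is exactly the structure of the argument of Groenland \emph{et al.}~\cite{groenland2019h}, adapted by replacing their use of the property ``$H$ has no two vertices with two common neighbors'' with the predator-free condition via \cref{obs:branching-works}, and by using \cref{cor:treewidth-sttt} in place of their sparse-graph treewidth bound.
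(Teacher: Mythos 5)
Your proposal follows essentially the same win-win approach as the paper: branch on high-degree vertices using the predator-free property via \cref{obs:branching-works}, and fall back to the treewidth bound from \cref{cor:treewidth-sttt} plus standard dynamic programming when the maximum degree is at most $\sqrt{n\log n}$. The branching-tree accounting (potential $\sum_v |L(v)|$, $\Oh(\sqrt{n/\log n})$ expensive branches along any root-to-leaf path, $2^{\Oh(\sqrt{n\log n})}$ leaves) matches the paper's recurrence $\widetilde{F}(N)\le \widetilde{F}(N-1)+\widetilde{F}(N-\ell)\le \ell^{\Oh(N/\ell)}$, and the subtleties you flag (fixing the degree threshold with respect to the original $n$, multiplying the leaf count by the per-leaf DP cost) are handled the same way in the paper; the argument is correct.
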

\begin{proof}
Let $t := \max \{2, a,b,c\}$, note that $G$ is $S_{t,t,t}$-free.
Clearly we can assume that $n$ is sufficiently large, as otherwise we can solve the problem by exhaustive enumeration.
In the proof we will present a recursive algorithm. Let $F(n)$ be the running time bound on instances with $n$ vertices.

Each step of the algorithm begins with a preprocessing phase, in which we exhaustively apply the following three reduction rules, in given order:
(i) for every $v \in V(G)$, if there exist distinct $a,b \in L(v)$ such that $N_H(a) \subseteq N_H(b)$, then remove $a$ from $L(v)$, 
(ii) for every $v \in V(G)$ and $a \in L(v)$, if there exists $u \in N_G(v)$ such that $N_H(a) \cap L(u) = \emptyset$,  then remove $a$ from $L(v)$, and 
(iii) for every $v \in V(G)$, if $|L(v)|=1$, then remove $v$ from $G$.
The correctness of the rules (i) and (ii) is straightforward, and for the rule (iii), notice that by (ii) the lists of all neighbors of $v$ contain only neighbors of the element of $L(v)$. 
If after these steps any list is empty, there is no way to properly color $G$ with lists $L$, so we immediately terminate and report a no-instance.
Otherwise, we may assume that every list is an incomparable set, has at least two elements,
and for every $uv \in E(G)$ and for every $a \in L(v)$ there exists $b \in L(u)$ such that $ab \in E(H)$. 
Clearly the preprocessing step can be performed in polynomial time.

In the algorithm we consider two cases.
First, suppose that exists a vertex $v \in V(G)$ such that $\deg_G(v)\geq \sqrt{n\log n}$.
It implies that there exists a list $L'$ assigned to at least $\ell:=\sqrt{n \log n}/2^{|V(H)|}$ neighbors of $v$. 
By \cref{obs:branching-works} there exist $a\in L(v)$ and $b\in L'$ such that $ab \not\in E(H)$.
We branch on assigning $a$ to $v$; either we remove $a$ from $L(v)$ or color $v$ with $a$ and remove $b$ from the lists of all neighbors of $v$.
Note that in the latter case at least $\ell$ lists are affected.
The complexity of this step is described by the following recursive inequality, where $N := \sum_{v \in V(G)} |L(v)|$:
\[
\widetilde{F}(N) \leq \widetilde{F}(N-1) + \widetilde{F}(N- \ell) \leq \ell^{\Oh(N/\ell)}.
\]
As $N \leq |V(H)| \cdot n$ and $\ell = \sqrt{n \log n}/2^{|V(H)|}$, we obtain that the complexity $F(n)$ in this case is $F(n) = n^{\Oh(\sqrt{n/\log n})} = 2^{\Oh(\sqrt{n \log n})}$.

So now suppose that $\Delta(G) \leq \sqrt{n \log n}$. By \cref{cor:treewidth-sttt}, in time $2^{\Oh(\sqrt{n \log n})}$ we can find a tree decomposition of $G$ with width $\Oh(\sqrt{n \log n})$.
Using a standard bottom-up dynamic programming we can solve our problem in time $2^{\Oh(\sqrt{n \log n})}$.

Summing up, the total running time of the algorithm is $2^{\Oh(\sqrt{n \log n})}$. 
This completes the proof of the theorem.
\end{proof}

Combining \cref{thm:subdivided-claw-free-algo} with \cref{thm:factorization}, we immediately obtain the following corollary.

\begin{corollary} \label{cor:subdivided-claw-free-algo-factors}
Let $H$ be a non-predacious graph.
Then for every $a,b,c\geq 0$, the \lhomo{H} problem can be solved in time $2^{\Oh(\sqrt{n \log n})}$ in $n$-vertex $\{S_{a,b,c},K_3\}$-free graphs.
\end{corollary}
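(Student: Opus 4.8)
The plan is to deduce \cref{cor:subdivided-claw-free-algo-factors} from \cref{thm:subdivided-claw-free-algo} by applying the factorization machinery of \cref{thm:factorization}, exactly as \cref{thm:ptfree-main}~a) was deduced from \cref{thm:ptfree-qpoly}. So suppose $H$ is a non-predacious graph, and let $\cH$ be the family of factors of $H$ produced by \cref{thm:factorization} in time $|V(H)|^{\Oh(1)}$. Since $H$ is non-predacious, \cref{def:predacious} tells us that every $H' \in \cH$ is either a bi-arc graph, or does not contain a predator.

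For each factor $H'$, I want to solve \lhomo{H'} on $\{S_{a,b,c},K_3\}$-free instances in time $2^{\Oh(\sqrt{n\log n})}$. If $H'$ is a bi-arc graph, then \lhomo{H'} is polynomial-time solvable on all graphs, so in particular on the restricted class. If $H'$ is not a bi-arc graph, then by the dichotomy above it contains no predator, and moreover $H'$ is connected (factors are connected by \cref{thm:factorization}), so \cref{thm:subdivided-claw-free-algo} applies and gives running time $2^{\Oh(\sqrt{n\log n})}$ on $n$-vertex $\{S_{a,b,c},K_3\}$-free graphs. In either case, setting $f(n) := 2^{\Oh(\sqrt{n\log n})}$ (a fixed bound dominating the polynomial case as well) gives a bound that is non-decreasing and convex in $n$, as required for the hypothesis of \cref{thm:factorization}~\cref{factors:bottomup}. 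One subtlety to note: \cref{factors:bottomup} requires the per-factor algorithm to work on every induced subgraph $G'$ of $G$ with arbitrary $H'$-lists; this is fine because the class of $\{S_{a,b,c},K_3\}$-free graphs is hereditary, so every induced subgraph $G'$ of $G$ is still $\{S_{a,b,c},K_3\}$-free, and \cref{thm:subdivided-claw-free-algo} (as well as the bi-arc algorithm) handles arbitrary lists.

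Finally, \cref{thm:factorization}~\cref{factors:bottomup} yields an algorithm for the original instance $(G,L)$ of \lhomo{H} running in time $\Oh\!\left(|V(H)|\, f(n) + n^2 \cdot |V(H)|^3\right) = 2^{\Oh(\sqrt{n\log n})}$, since $|V(H)|$ is a constant. This completes the proof. I do not expect any genuine obstacle here; the only point that needs a line of care is checking the convexity/monotonicity of the time bound $f$ and observing that $2^{\Oh(\sqrt{n\log n})}$ absorbs the additive polynomial overhead from \cref{factors:bottomup}, which it does.
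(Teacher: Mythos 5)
Your proof is correct and follows exactly the route the paper intends: it mirrors the derivation of \cref{thm:ptfree-main}~a) from \cref{thm:ptfree-qpoly} given in \cref{sec:ptfree-algo}, using the factorization of \cref{thm:factorization} together with \cref{thm:subdivided-claw-free-algo} for non-bi-arc predator-free factors and the polynomial bi-arc algorithm for the rest, then combining via \cref{thm:factorization}~\cref{factors:bottomup}. The points you flag (hereditariness of the class, connectivity of factors, monotonicity/convexity of the running-time bound, and absorption of the additive polynomial overhead) are exactly the small checks needed, so no gap remains.
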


\cref{cor:subdivided-claw-free-algo-factors} yields the algorithmic statement in \cref{thm:subdividedclaw-irreflexive}.

\begin{proof}[Proof of \cref{thm:subdividedclaw-irreflexive}~a)]
Observe that since $H$ is irreflexive and triangle-free, there is no homomorphism from $K_3$ to $H$.
Thus if $G$ is not triangle-free, we can immediately report a no-instance.
In the other case, we can use the algorithm from \cref{cor:subdivided-claw-free-algo-factors}.
\end{proof}

\newpage
\subsection{Hardness results}\label{sec:subdivided-hardness}
In this section we present hardness proofs for \lhomo{H} in $S_{a,b,c}$-free graphs.
First, let us highlight that known results already imply the hardness counterpart of \cref{thm:subdividedclaw-irreflexive}.

\begin{theorem} \label{thm:sabc-free-simple-hardness}
Let $H$ be a fixed graph, which is predacious or contains a simple triangle.
Then there exists $t$, such that the $\lhomo{H}$ problem cannot be solved in time $2^{o(n)}$ in $n$-vertex $S_{t,t,t}$-free graphs,
unless the ETH fails.
\end{theorem}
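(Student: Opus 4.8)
The plan is to derive the statement from two results already available in the excerpt, treating separately the two (possibly overlapping) situations permitted by the hypothesis.

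First, suppose $H$ is predacious. Here I would simply invoke \cref{thm:ptfree-main}~b), which provides an integer $t$ such that, unless the ETH fails, \lhomo{H} admits no algorithm running in time $2^{o(n)}$ on $n$-vertex $P_t$-free graphs. The only thing left to observe is that every $P_t$-free graph is $S_{t,t,t}$-free: each leg of $S_{t,t,t}$ together with its centre induces a copy of $P_{t+1}$, so $S_{t,t,t}$ contains $P_t$ as an induced subgraph, and hence any graph with an induced $S_{t,t,t}$ also has an induced $P_t$. Thus the hard $P_t$-free instances of \lhomo{H} are in particular $S_{t,t,t}$-free, and the same $t$ witnesses the present theorem.

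Second, suppose $H$ contains a simple triangle, on vertices $a_1,a_2,a_3$, all loopless; then $H[\{a_1,a_2,a_3\}]$ is a copy of $K_3$. The plan is to reduce from \textsc{3-Coloring} (equivalently, \lhomo{K_3} with full lists). I would use the classical fact that \textsc{3-Coloring} is \NP-complete on line graphs~\cite{DBLP:journals/siamcomp/Holyer81a,DBLP:journals/jal/LevenG83}, with a reduction of linear size, so that it admits no $2^{o(n)}$-time algorithm unless the ETH fails; moreover line graphs are claw-free, that is, $S_{1,1,1}$-free. Given such an instance $G$, assigning to every vertex the list $\{a_1,a_2,a_3\}$ and identifying colour $i$ with $a_i$ turns it into an instance of \lhomo{H}: since $a_1,a_2,a_3$ are pairwise adjacent and carry no loops, a map of $V(G)$ into $\{a_1,a_2,a_3\}$ is a homomorphism to $H$ precisely when the corresponding $3$-colouring of $G$ is proper. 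Hence \lhomo{H} has no $2^{o(n)}$-time algorithm on $S_{1,1,1}$-free graphs unless the ETH fails, so $t=1$ suffices in this case.

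Combining the two cases produces the required $t$, proving the theorem. I anticipate no genuine obstacle: the argument is essentially a repackaging of \cref{thm:ptfree-main}~b) together with the classical ETH-hardness of colouring line graphs. The single point that deserves a sentence of care is the adjective \emph{simple} in the hypothesis -- it is exactly what forces the three triangle vertices to be loopless, which is what makes the list reduction from \lhomo{K_3} faithful (a loop at some $a_i$ would permit an edge of $G$ to be mapped to $a_ia_i$, destroying the correspondence with proper $3$-colourings).
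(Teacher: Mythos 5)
Your proposal is correct and takes essentially the same approach as the paper: predacious $H$ is handled by invoking \cref{thm:ptfree-main}~(b) together with the observation that $P_t$-free graphs are $S_{t,t,t}$-free, and a simple triangle in $H$ is handled by reducing from \textsc{3-Coloring} on line graphs (which are $S_{1,1,1}$-free) via the ETH lower bound of~\cite{DBLP:journals/siamcomp/Holyer81a}. Your extra remark about why \emph{simple} (loopless) is exactly what makes the list reduction faithful is implicit but not spelled out in the paper.
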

\begin{proof}
First, observe that $S_{t,t,t}$ contains an induced path with $t$ vertices, so $P_t$-free graphs are $S_{t,t,t}$-free.
Thus the first case of the theorem follows directly from \cref{thm:ptfree-main}~(b).

So now consider the case that $H$ has a simple triangle with vertices $x,y,z$.
It is known that \textsc{3-Coloring} is NP-complete and admits no subexponential-time algorithm in line graphs with maximum degree 4,
unless the ETH fails~\cite{DBLP:journals/siamcomp/Holyer81a}.
As line graphs are in particular $S_{1,1,1}$-free and \textsc{3-Coloring} is restricted variant of \lhomo{H}, where we consider only instances whose lists are equal to $\{x,y,z\}$, the theorem follows. 
\end{proof}

It appears that other substructures in $H$ can be also exploited to obtain hardness for $S_{a,b,c}$-free graphs.

\begin{theorem}\label{thm:sabc-free-three-loops-hardness}
Let $H$ be a connected non-bi-arc graph such that $H^*$ is undecomposable and there exist three distinct vertices $u_1,u_2,u_3$ of $H$ with loops, such that $S=\{u_1,u_2,u_3\}$ is an incomparable set.
Then there exists $t$, such that the \lhomo{H} problem cannot be solved in time $2^{o(n)}$ in $S_{t,t,t}$-free graphs, unless the ETH fails.
\end{theorem}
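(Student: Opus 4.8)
The plan is to reduce from a variant of \textsc{Sat} (or directly from \textsc{$3$-Coloring} in bounded-degree graphs, or from a bounded-degree CSP) in which the natural ``long path'' structures of the target are replaced by vertices with loops, so that the instance graph we construct has short induced subdivided claws. The crucial observation is that a vertex with a loop can absorb arbitrarily long paths: if $uv\in E(G)$ and $v$ is forced to receive a color $u_i$ with a loop, then the constraint propagated to $u$ is vacuous. This is exactly what kills long $S_{a,b,c}$'s, since any induced path in the constructed graph passing through such ``loop-colored'' regions cannot be long.

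Concretely, I would first use the machinery already developed in the $P_t$-free section. Since $H^*$ is undecomposable, non-bi-arc, and bipartite, Lemma~\ref{thm:distinguisher} and Lemma~\ref{lem:edge-gadget} give us $\rneq$-gadgets and distinguisher paths, and hence, via the constructions of Lemma~\ref{lem:occur-gadget}, occurrence gadgets relating pairs of incomparable vertices. The key point is that all these gadgets, when applied with $S$ being a set of three pairwise incomparable vertices \emph{with loops}, live in $H$ (not merely $H^*$): because the vertices carry loops, the associated bipartite structure can be ``folded'' back. More precisely, I would set up an instance of \lhomo{H} directly: introduce, for each variable, a vertex with list $S=\{u_1,u_2,u_3\}$ (three colors giving three ``truth values'' — we only really need two, say $u_1,u_2$, but having a third incomparable loop-vertex is what guarantees the gadgets exist inside $H$ rather than $H^*$), use the $\rneq(S)$-gadget of Lemma~\ref{lem:edge-gadget} applied to $H^*$ and then pulled back to $H$ using the loops to build variable-consistency and clause-satisfaction gadgets. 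Because every interface vertex has a list contained in the incomparable loop-set $S$, and because the gadgets from Lemma~\ref{thm:distinguisher}(2) can be taken to be of the form $S_{a,b,c}$, the whole construction is assembled from constant-size pieces glued along the variable/literal vertices.

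The bound on the size of forbidden subdivided claws then follows by the same ``segment'' argument as in the proof of Theorem~\ref{thm:pt-free-bip-hard-undecomp}: an induced path (or induced $S_{a,b,c}$) in $G_\Phi$ decomposes into segments internal to single gadgets, separated by variable/literal vertices; each segment has constant length, and the number of consecutive variable/literal vertices an induced path can use is bounded, so there is a constant $t$ with $G_\Phi$ being $S_{t,t,t}$-free. Since $|V(G_\Phi)|=\Oh(N+M)$, an ETH-refuting subexponential algorithm for \lhomo{H} on $S_{t,t,t}$-free graphs would give one for \textsc{$3$-Sat}. The main obstacle I expect is verifying that the gadgets survive the pullback from $H^*$ to $H$: in $H^*$ a loop-vertex $u$ splits into $u',u''$ with $u'u''\in E(H^*)$, and we must check that the distinguisher/occurrence gadgets, which are built in $H^*$, correspond to valid \lhomo{H} gadgets with the desired interface behavior — this is where the loops and the incomparability of all three of $u_1,u_2,u_3$ are genuinely used, and where Lemma~\ref{lem:homo-star} (or a variant of it adapted to non-consistent instances exploiting loops) does the heavy lifting. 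Once that correspondence is pinned down, the rest is the routine gadget-gluing and segment-counting.
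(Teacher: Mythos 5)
Your proposal correctly identifies the right starting ingredients (reduce from 3-\textsc{Coloring} or a CSP, use $\rneq(S)$-gadgets, exploit the loops), and it astutely flags a real subtlety that the paper handles only implicitly: the gadgets from \cref{lem:edge-gadget} are built in $H^*$ and must be translated back to $H$ via \cref{lem:homo-star} applied to the consistent gadget instance (this does go through because the correspondence between list homomorphisms of consistent instances to $H^*$ and to $H$ is in fact a bijection that preserves the restriction to interface vertices). However, the central construction of the paper's proof is missing. The paper reduces from 3-\textsc{Coloring}, and encodes each vertex $v_i$ of the instance not as a single variable vertex but as a \emph{clique} $K^i = \{x_{ij} : v_j \in N_G(v_i)\}$, all with list $S$. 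This is exactly where the loops are used: without loops on $u_1,u_2,u_3$, a clique whose vertices all have list $S$ would be infeasible. Your intuition that ``loops absorb long paths'' is not quite the mechanism; rather, the loops make the clique gadget feasible, and cliques --- being $S_{1,1,1}$-free --- are what kill the long induced subdivided claws. To force all of $K^i$ to a single color (property $(\star)$), the paper attaches a small independent ``augmentation'' set $Q^i$ complete to $K^i$, whose lists are constructed by a case analysis on which of $u_1,u_2,u_3$ has a \emph{private neighbor} (a neighbor non-adjacent to the other two). This case analysis is the technical core of the construction, and nothing in your outline plays its role.

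Consequently your $S_{t,t,t}$-freeness step would also not carry over: the paper's argument is a pointed analysis of where the degree-3 center $z$ of a hypothetical $S_{t,t,t}$ can live, exploiting the clique structure (the center cannot be in $Q^i$ since its neighborhood is a clique; it cannot be $x_{ij}$ since then at least two branches must enter the same $\rneq$-gadget and both exit through $x_{ji}$; it cannot be interior to a gadget since both exits of the gadget would be used by two branches). A SAT-based biclique construction with occurrence gadgets and $S_{a,b,c}$-shaped OR-gadgets, as you sketch, is not obviously $S_{t,t,t}$-free and would require its own analysis; and the occurrence gadgets of \cref{lem:occur-gadget} handle two-element incomparable sets, not a three-element $S$, so you would need further machinery to make variable/literal consistency work. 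As written, the outline does not converge to a correct proof without the clique plus $Q^i$ idea.
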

\begin{proof}
We reduce from 3-\textsc{Coloring}.
Assuming the ETH, this problem cannot be solved in time $2^{o(N+M)}$ on instances with $N$ vertices and $M$ edges~\cite[Theorem~14.6]{DBLP:books/sp/CyganFKLMPPS15}. 
Let $G$ be such an instance of 3-\textsc{Coloring}.
We denote $V(G) = \{v_1,v_2,\ldots,v_N\}$.

We construct an instance $(G',L)$ of \lhomo{H} such that $G$ is 3-colorable if and only if $(G',L) \to H$.
First, for every $i \in [N]$ we introduce to $G'$ a graph $K^i$, which is a complete graph with the vertex set $V(K^i) := \{x_{ij}~|~v_j \in N_G(v_i)\}$. Intuitively, the vertex $x_{ij}$ represents the connection of $v_i$ and  $v_j$ from the point of view of $v_i$. We set $L(x_{ij}):=S$ for all relevant $i,j$.

Now, for each edge $v_iv_j$ of $G$, we introduce a copy of the $\rneq(S)$-gadget given by \cref{lem:edge-gadget}, and identify its two interface vertices with $x_{ij}$ and $x_{ji}$, respectively.

Suppose for now that we can ensure the following property:
\begin{itemize}
\item[$(\star)$] For each $i \in [N]$ and each list homomorphism $(K^i,L) \to H$, all vertices of $K^i$ are mapped to the same element of $S$, 
and for each $u \in S$ there is a list homomorphism $(K^i,L) \to H$ that maps all vertices of $K^i$ to $u$.
\end{itemize}

With the property above at hand, we can interpret the mapping of vertices in $K^i$ as coloring $v_i$ with one of three possible colors.
The properties of the  $\rneq(S)$-gadget imply that $G$ is 3-colorable if and only if the constructed graph admits a list homomorphism to $H$.

So now let us argue how to ensure property ($\star$). For each $i \in [N]$ we introduce an independent set $Q_i$ and make it complete to $K^i$.
The size of $Q^i$ and the lists of its vertices depend on the structure of $H$.

For $\ell \in [3]$, a \emph{private neighbor} of $u_\ell \in S$ is a vertex $w_\ell \in N_H(u_\ell)$, which is non-adjacent to both vertices in $S \setminus \{u_\ell\}$.
We consider three cases.

\noindent\textbf{Case 1: For each $\ell \in [3]$, the vertex $u_\ell$ has a private neighbor $w_\ell$.}
In this case $Q^i$ is a singleton, i.e., $Q^i := \{q^i\}$, and $L(q^i) := \{w_1,w_2,w_3\}$.

\noindent\textbf{Case 2: There are exactly two vertices in $S$ which have private neighbors.}
Without loss of generality assume that these vertices are $u_2$ and $u_3$, and denote their private neighbors by $w_2$ and $w_3$, respectively.

Since $S$ is incomparable, there exist $w_{12} \in N_H(u_1) \setminus N_H(u_3)$ and $w_{13} \in N_H(u_1) \setminus N_H(u_2)$.
As neither $w_{12}$ nor $w_{13}$ is a private neighbor of $u_1$, we conclude that $w_{12} \in N_H(u_1) \cap N_H(u_2)$ and $w_{13} \in N_H(u_1) \cap N_H(u_3)$.
In this case $Q^i := \{q^i,r^i\}$ and we set $L(q^i):=\{w_{12},w_2,w_3\}$ and $L(r^i):=\{w_{13},w_2,w_3\}$.

\noindent\textbf{Case 3: There is at most one vertex in $S$ which has private neighbors.}
Without loss of generality, $u_1$ and $u_2$ do not have private neighbors.
As $S$ is incomparable, similarly to the previous case we obtain that there are vertices 
$w_{12} \in \left(N_H(u_1) \cap N_H(u_2)\right) \setminus N_H(u_3)$,
$w_{13} \in \left(N_H(u_1) \cap N_H(u_3)\right) \setminus N_H(u_2)$,
and $w_{23} \in \left(N_H(u_2) \cap N_H(u_3)\right) \setminus N_H(u_1)$.
We set $Q^i:=\{q^i,r^i,s^i\}$ 
and $L(q^i):=\{w_{12},w_{13}\}$, $L(r^i):=\{w_{12},w_{23}\}$, and $L(s^i):=\{w_{13},w_{23}\}$.

It is straightforward to verify that in each of the above cases the property ($\star$) holds.

That completes the construction of $(G',L)$.
By the reasoning above we observe that $(G',L)\to H$ if and only if $G$ is 3-colorable. 
As the number of vertices of $G'$ is $\Oh(N+M)$, we conclude that ETH lower bound holds.

Now we only need to argue that $G'$ belongs to the considered class.
Let $t \geq 2$ be the number of vertices in the $\rneq(S)$-gadget given by \cref{lem:edge-gadget}.
We  claim that $G'$ is $S_{t,t,t,}$-free.

For contradiction, suppose that $G'$ contains an induced copy of $S_{t,t,t}$.
Let us introduce the following notation. By $z$ we denote the unique degree-3 vertex of $S_{t,t,t}$.
The three paths forming $S_{t,t,t}$ are called \emph{branches}. Recall that each branch has $t+1 \geq 3$ vertices, including $z$.
For $\ell \in [3]$, the neighbor of $z$ on the $\ell$-th branch is denoted by $z_\ell$, and the neighbor of $z_\ell$ other than $z$ is denoted by $z'_\ell$.

Observe $z$ cannot belong to any $Q^i$, as then the neighborhood of $z$ is a clique and there is no way to choose $z_1,z_2,z_3$ so that they are pairwise non-adjacent.
So suppose now that $z$ belongs to some $K^i$, i.e., $z = x_{ij}$ for some edge $v_iv_j$ of $G$.
If for some $\ell \in [3]$ the vertex $z_\ell$ belongs to $Q^i$, then $z'_\ell$ is adjacent to $z$, a contradiction.
Furthermore, at most one of $z_1,z_2,z_3$ belongs to $K_i$, as $\{z_1,z_2,z_3\}$ is independent.
Thus at least two of $z_1,z_2,z_3$, say $z_1$ and $z_2$, belong to the copy of the $\rneq{(S)}$-gadget introduced for the edge $v_iv_j$.
However, as this gadget has $t$ vertices and the total size of each branch is $t+1$, none of the branches may be entirely contained in the gadget. This means that both the first and the second branch contain the vertex $x_{ji}$, a contradiction.

Similarly we argue that if $z$ is some non-interface vertex of one of the $\rneq{(S)}$-gadgets, say for an edge $v_iv_j$ of $G$, then each branch must leave the gadget. However, this is only possible using vertices $x_{ij}$ and $x_{ji}$, so two branches must have a common vertex other than $z$, a contradiction. This concludes the proof.
\end{proof}

With \cref{thm:sabc-free-three-loops-hardness} at hand, we can prove \cref{thm:subdividedclaw-reflexive}.

\begin{proof}[Proof of \cref{thm:subdividedclaw-reflexive}]
Let $H$ be a reflexive non-bi-arc graph.
Feder, Hell, and Huang~\cite{FEDER1998236,DBLP:journals/jgt/FederHH03} showed that in the case of reflexive graphs, bi-arc graphs coincide with the class of interval graphs.  Lekkeikerker and Boland \cite{lekkeikerker1962representation} provided the characterization of interval graphs in terms of forbidden subgraphs: A graph is an interval graph if and only if it does not contain an induced cycle with at least four vertices or an \emph{asteroidal triple}, i.e., three pairwise non-adjacent vertices, so that any two of them are joined with a path avoiding the neighborhood of the third one.

Let $H'$ be an inclusion-wise minimal induced subgraph of $H$, which is not a bi-arc graph, i.e., is a reflexive non-interval graph.
By the characterization above, one can conclude that $H'$ either (i) is an induced cycle with at least four vertices, or (ii) consists of an asteroidal triple $\{x,y,z\}$ and the three paths specified in the definition of an asteroidal triple.
One can readily verify that the minimality of $H'$ implies that $H'^*$ is undecomposable (see e.g.~\cite{DBLP:conf/stacs/EgriMR18} or the proof of Theorem 46 (1) in \cite{FullerComplexity}).

Now observe that in each case $H'$ contains an incomparable set of size 3: in case (i) we can take any three vertices of $H'$, and in case (ii) this set is $\{x,y,z\}$. Thus the claim follows from \cref{thm:sabc-free-three-loops-hardness}.
\end{proof}

\newpage
\section{Conclusion}
\label{sec:conclusion}
\paragraph{Dependence of the target graph and the forbidden subgraph.}
Observe that the forbidden induced subgraph in instances constructed in \cref{thm:pt-free-bip-hard-undecomp} and in \cref{thm:sabc-free-three-loops-hardness} depends on the target graph $H$.
One might ask if this is necessary -- perhaps we could improve the construction to make the graph $P_t$-free or $S_{t,t,t}$-free,
where $t$ does not depend on $H$, as it was the case in \cref{thm:ssplit-hard}.
The following example shows that this is impossible.

\begin{proposition}\label{prop:ht}
For every $t \geq 1$ there exists a bipartite graph $H_t$ and an integer $t' > t$, such that:
\begin{enumerate}
\item \lhomo{H_t} is polynomial-time solvable in $P_t$-free graphs,
\item \lhomo{H_t} has no subexponential-time algorithm in $P_{t'}$-free graphs, unless the ETH fails.
\end{enumerate}
\end{proposition}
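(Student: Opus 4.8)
The plan is to exploit that connected $P_t$-free graphs have small diameter, so that membership in the class $P_t$-free forces every list homomorphism into $H_t$ to factor through a small ball of $H_t$; the hardness side will come from \cref{thm:pt-free-bip-hard-undecomp}.

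For part~(1) I would argue as follows. Handle each connected component of the instance graph $G$ separately; each is again $P_t$-free, hence of diameter at most $t-2$, since a shortest path between two vertices is induced and therefore has fewer than $t$ vertices. Fix a vertex $v_0$ of such a component $G'$. If $f\colon(G',L)\to H_t$ is any list homomorphism, then, because homomorphisms do not increase distances, $f$ maps every vertex of $G'$ into the ball $B_{H_t}(f(v_0),t-2)$. Guessing $c:=f(v_0)\in L(v_0)$ (at most $|V(H_t)|$ choices), deciding $(G',L)\to H_t$ reduces to deciding $(G',L_c)\to H_t[B_{H_t}(c,t-2)]$ with $L_c(v):=L(v)\cap B_{H_t}(c,t-2)$. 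Hence it suffices to choose $H_t$ so that \emph{every} ball of radius $t-2$ in $H_t$ induces a bi-arc graph: then each of these polynomially many subproblems is polynomial-time solvable by~\cite{FEDER1998236,DBLP:journals/jgt/FederHH03}, and part~(1) follows.

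For part~(2) it is enough that $H_t$ is bipartite, connected, non-bi-arc, undecomposable, and contains an incomparable $C_4$; then \cref{thm:pt-free-bip-hard-undecomp} yields an integer $t'$ with no $2^{o(n)}$-time algorithm for \lhomo{H_t} in $P_{t'}$-free graphs unless the ETH fails. The inequality $t'>t$ is then automatic: the class of $P_{t'}$-free graphs is contained in the class of $P_t$-free graphs whenever $t'\le t$, so if $t'\le t$ the polynomial-time algorithm of part~(1) would, under the ETH, contradict part~(2). Thus the whole Proposition reduces to constructing, for each $t$, a bipartite, connected, non-bi-arc, undecomposable graph $H_t$ that contains an incomparable $C_4$ and all of whose radius-$(t-2)$ balls are bi-arc.

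Here is the construction I would use, together with what I expect to be the main obstacle. Take a long even cycle $C_{2m}$ with $m$ much larger than $t$ — this is non-bi-arc but has no incomparable $C_4$ — pick four of its vertices $p_1,p_2,q_1,q_2$ that are pairwise more than $2t$ apart along the cycle, and attach a constant-size gadget consisting of new vertices $a_1,a_2,b_1,b_2$ with $a_i\sim p_i$, $b_j\sim q_j$, and all four edges $a_1b_1,a_1b_2,a_2b_1,a_2b_2$. Then $(a_1,a_2,b_1,b_2)$ is an incomparable $C_4$, with $\{p_1,p_2\}$ and $\{q_1,q_2\}$ witnessing the incomparabilities, and $C_{2m}$ is still an induced subgraph, so $H_t$ is non-bi-arc. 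The key design point is that all four ``private neighbours'' $p_1,p_2,q_1,q_2$ lie on the cycle, which should make it impossible to peel off the predator by a bipartite decomposition — this is exactly what fails for the naive ``cycle plus a pendant gadget'', where the gadget can be split away and the predator destroyed, so that the graph is not predacious. Finally, since $m\gg t$ and the four attachment points are far apart, any radius-$(t-2)$ ball is either a subpath of $C_{2m}$ or the gadget together with a bounded number of pendant paths (one hanging off each of $a_1,a_2,b_1,b_2$). I expect the two verifications in this last step to be the real work: (a) that $H_t$ admits no bipartite decomposition, which requires ruling out all candidate separators $N$ using that the only vertices of degree at least $3$ are the four gadget corners and the four attachment points; and (b) that a $C_4$ with a pendant path on each of its four vertices is a bi-arc graph — this reduces to exhibiting a circular-arc representation of its complement, which can be obtained by extending a representation of the complement of the incomparable $C_4$ (itself a $C_4$ with one pendant per vertex, which is already bi-arc) so that each pendant path occupies a gap that can be elongated.
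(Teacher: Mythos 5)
Your overall plan matches the paper's in spirit: part~(2) from \cref{thm:pt-free-bip-hard-undecomp}, and part~(1) by exploiting that a connected $P_t$-free graph has diameter at most $t-2$, so a homomorphism cannot spread too far across $H_t$. Your ``guess $f(v_0)$ and restrict to a ball'' reduction is a clean, valid alternative to what the paper does; the paper instead observes that for its specific $H_t$ every homomorphism image must avoid one of the two sets $\{0,1\}$ and $\{t,t+1,a,b,a',b'\}$, and that deleting either set leaves a bi-arc graph, so it only has to verify two subgraphs rather than all balls.

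The construction of $H_t$ itself, however, is where your proposal genuinely diverges and where it is incomplete. The paper takes a cycle of length exactly $2t$ and attaches a tight gadget ($a\sim t+1$, $b\sim t$, $a\sim b$, plus pendants $a',b'$) at two \emph{adjacent} cycle vertices; the incomparable $C_4$ is $(t,a,t+1,b)$, and undecomposability and bi-arc-ness of the two halves are then concrete, small checks. You instead attach a disjoint $C_4$ to four far-apart points of a long cycle. Beyond the two verifications you already flag as open — undecomposability, and bi-arc-ness of radius-$(t-2)$ balls — there is a concrete error in item~(b): the balls around gadget vertices are \emph{not} ``a $C_4$ with a pendant path on each vertex''. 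Each attachment point $p_i$, $q_j$ has degree~$3$ (two cycle neighbours plus one gadget neighbour), so the ball around, say, $a_1$ of radius $t-2$ consists of the $C_4$ with a pendant vertex on each corner, and then \emph{two} paths branching off each of those pendants into the cycle; the same tripod structure appears in balls centered on cycle vertices near attachment points. You would therefore need to show that this ``$C_4$ with a pendant tripod on each vertex'' is bi-arc, which is a different and noticeably larger graph than the one you discuss. As it stands, part~(1) is not established for your $H_t$. Adopting the paper's $H_t$ (or any construction where the gadget attaches at adjacent cycle vertices of a cycle of length about $2t$, so that the ``far'' side of the cycle cannot be reached) would make the remaining verifications finite and small, and in fact your balls-based reduction would then go through for that $H_t$ as well.
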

\begin{proof}
We can safely assume that $t \geq 3$, as $P_2$-free graphs have no edges and thus \lhomo{H} is trivial on these graphs.
Let $H_t$ be constructed as follows. We start with a cycle with consecutive vertices $\{0,1,\ldots,2t-1\}$.
Then we add vertices $a,a',b,b'$, and edges $(t+1)a$, $tb$, $ab$, $aa'$, and $bb'$ (see \cref{fig:ht}).
\begin{figure}
\begin{center}
\tikzstyle{nod}=[draw,circle,fill=white,inner sep=0pt,minimum size=8pt]
\begin{tikzpicture}
\draw (1,-.5) -- (0.7,-.5) -- (0,0) -- (0,1) -- (0.7,1.5) -- (1,1.5);
\node[nod,label=left:\footnotesize{$2t-1$}] at (0,0) {};
\node[nod,label=left:\footnotesize{$0$}] at (0,1) {};
\node[nod,label=left:\footnotesize{$2t$}] at (0.7,-.5) {};
\node[nod,label=left:\footnotesize{$1$}] at (0.7,1.5) {};
\node at (1.5,-.55) {\ldots};
\node at (1.5,1.48) {\ldots};
\draw (2,-.5) -- (2.3,-.5) -- (3,0) -- (3,1) -- (2.3,1.5) -- (2,1.5);
\draw (4.7,1.5) -- (4,1) -- (4,0) -- (4.7,-.5);
\draw (4,1) -- (3,1);
\draw (4,0) -- (3,0);
\node[nod] at (3,0) {};
\node[nod] at (3,1) {};
\node[nod] at (2.3,-.5) {};
\node[nod] at (2.3,1.5) {};
\node[nod,label=left:\footnotesize{$t+1$}] at (3,0) {};
\node[nod,label=left:\footnotesize{$t$}] at (3,1) {};
\node[nod,label=right:\footnotesize{$a$}] at (4,0) {};
\node[nod,label=right:\footnotesize{$b$}] at (4,1) {};
\node[nod,label=right:\footnotesize{$a'$}] at (4.7,-.5) {};
\node[nod,label=right:\footnotesize{$b'$}] at (4.7,1.5) {};
\end{tikzpicture}
\end{center}
\caption{The graph $H_t$ from \cref{prop:ht}.}
\label{fig:ht}
\end{figure}
Observe that vertices $0,1,\ldots,2t-1$ induce a cycle of length at least 6, so the complement of $H_k$ is not a circular-arc graph~\cite{DBLP:journals/combinatorica/FederHH99}.
On the other hand, $(t,a,t+1,b)$ is an incomparable $C_4$.
Finally, one can readily verify that $H_k$ is undecomposable, so the second statement follows from \cref{thm:pt-free-bip-hard-undecomp}.

Now let us prove the first statement. Let $G$ be a connected $P_t$-free graph.
The crucial observation is that in any homomorphism $h : G \to H_t$, 
either $h^{-1}(\{0,1\}) = \emptyset$, or $h^{-1}(\{t,t+1,a,b,a',b'\}) = \emptyset$. Indeed, suppose that there exists $h : G\to H_t$, and vertices $u,v$ of $G$, such that $h(u) \in \{0,1\}$, and $h(v) \in \{t,t+1,a,b,a',b'\}$. Let $Q$ be an induced $u$-$v$-path in $G$, it exists as $G$ is connected. As $G$ is $P_t$-free, we know that $Q$ has at most $t-1$ vertices.
Now observe that the colors of consecutive vertices of $Q$ form an $h(u)$-$h(v)$-walk in $H_t$. However, a shortest walk in $H_t$, starting in $\{0,1\}$ and terminating in $\{t,t+1,a,b,a',b'\}$ has $t$ vertices, a contradiction.

Thus given an instance $(G,L)$ of \lhomo{H_t}, where $G$ is $P_t$-free, we can reduce it to solving an instance of $\lhomo{H_t - \{0,1\}}$ and an instance of $\lhomo{H_t - \{t,t+1,a,b,a',b'\}}$ independently. One can verify that each of these two target graphs is the complement of a circular-arc graph, so each of the  instances can be solved in polynomial time.
\end{proof}

\paragraph{Completing the dichotomy for $S_{a,b,c}$-free graphs.}
Recall that while for $P_t$-free graphs, in~\cref{thm:ptfree-main} we were able to fully characterize the ``easy'' and ``hard'' cases of \lhomo{H}, for the case of $S_{a,b,c}$-free graphs we obtained a full dichotomy only for irreflexive (\cref{thm:subdividedclaw-irreflexive}) and for reflexive (\cref{thm:subdividedclaw-reflexive}) graphs $H$.
In order to complete the dichotomy, we need to consider graphs $H$ that are neither irreflexive nor reflexive.
Some hardness results for such graphs follow already from \cref{thm:sabc-free-simple-hardness} and \cref{thm:sabc-free-three-loops-hardness}.
We were also able to obtain a few more hardness results, but we do not present them here, as they are rather ad-hoc constructions.
However, all our results seem to support the following conjecture.

\conj*

\newpage
\bibliographystyle{plain}
\bibliography{main}

\end{document}